\newcommand{\openone}{\leavevmode\hbox{\small1\normalsize\kern-.33em1}}
\def\UrlSpecials{\do\~{\kern -.15em\lower .7ex\hbox{~}\kern .04em}} \catcode`~=13 
\newcommand{\nn}{\nonumber}
\newcommand{\calA}{\mathcal{A}}
\newcommand{\calD}{\mathcal{D}}
\newcommand{\calE}{\mathcal{E}}
\newcommand{\calF}{\mathcal{F}}
\newcommand{\calI}{\mathcal{I}}
\newcommand{\calJ}{\mathcal{J}}
\newcommand{\calL}{\mathcal{L}}
\newcommand{\calP}{\mathcal{P}}
\newcommand{\calS}{\mathcal{S}}
\newcommand{\calT}{\mathcal{T}}
\newcommand{\calW}{\mathcal{W}}
\newcommand{\calX}{\mathcal{X}}
\newcommand{\calY}{\mathcal{Y}}
\newcommand{\ba}{\mathbf{a}}
\newcommand{\bs}{\mathbf{s}}
\newcommand{\bS}{\mathbf{S}}
\newcommand{\bw}{\mathbf{w}}
\newcommand{\bW}{\mathbf{W}}
\newcommand{\bx}{\mathbf{x}}
\newcommand{\bX}{\mathbf{X}}
\newcommand{\bY}{\mathbf{Y}}
\newcommand{\bZ}{\mathbf{Z}}
\newcommand{\rma}{\mathrm{a}}
\newcommand{\rmb}{\mathrm{b}}
\newcommand{\rmd}{\mathrm{d}}
\newcommand{\rme}{\mathrm{e}}
\newcommand{\rmG}{\mathrm{G}}
\newcommand{\rmp}{\mathrm{p}}
\newcommand{\rmP}{\mathrm{P}}
\newcommand{\rmq}{\mathrm{q}}
\newcommand{\rmU}{\mathrm{U}}
\newcommand{\rmV}{\mathrm{V}}
\newcommand{\bbN}{\mathbb{N}}
\newcommand{\bbR}{\mathbb{R}}
\DeclareMathAlphabet{\mathbsf}{OT1}{cmss}{bx}{n}
\DeclareMathAlphabet{\mathssf}{OT1}{cmss}{m}{sl}
\DeclareSymbolFont{bsfletters}{OT1}{cmss}{bx}{n}  
\DeclareSymbolFont{ssfletters}{OT1}{cmss}{m}{n}
\DeclareMathSymbol{\bsfGamma}{0}{bsfletters}{'000}
\DeclareMathSymbol{\ssfGamma}{0}{ssfletters}{'000}
\DeclareMathSymbol{\bsfDelta}{0}{bsfletters}{'001}
\DeclareMathSymbol{\ssfDelta}{0}{ssfletters}{'001}
\DeclareMathSymbol{\bsfTheta}{0}{bsfletters}{'002}
\DeclareMathSymbol{\ssfTheta}{0}{ssfletters}{'002}
\DeclareMathSymbol{\bsfLambda}{0}{bsfletters}{'003}
\DeclareMathSymbol{\ssfLambda}{0}{ssfletters}{'003}
\DeclareMathSymbol{\bsfXi}{0}{bsfletters}{'004}
\DeclareMathSymbol{\ssfXi}{0}{ssfletters}{'004}
\DeclareMathSymbol{\bsfPi}{0}{bsfletters}{'005}
\DeclareMathSymbol{\ssfPi}{0}{ssfletters}{'005}
\DeclareMathSymbol{\bsfSigma}{0}{bsfletters}{'006}
\DeclareMathSymbol{\ssfSigma}{0}{ssfletters}{'006}
\DeclareMathSymbol{\bsfUpsilon}{0}{bsfletters}{'007}
\DeclareMathSymbol{\ssfUpsilon}{0}{ssfletters}{'007}
\DeclareMathSymbol{\bsfPhi}{0}{bsfletters}{'010}
\DeclareMathSymbol{\ssfPhi}{0}{ssfletters}{'010}
\DeclareMathSymbol{\bsfPsi}{0}{bsfletters}{'011}
\DeclareMathSymbol{\ssfPsi}{0}{ssfletters}{'011}
\DeclareMathSymbol{\bsfOmega}{0}{bsfletters}{'012}
\DeclareMathSymbol{\ssfOmega}{0}{ssfletters}{'012}
\newcommand{\tili}{\tilde{i}}
\newcommand{\tilM}{\tilde{M}}
\newcommand{\tilP}{\tilde{P}}
\newcommand{\hats}{\hat{s}}
\newcommand{\hatS}{\hat{S}}
\newcommand{\hatt}{\hat{t}}
\newcommand{\hatw}{\hat{w}}
\newcommand{\hatW}{\hat{W}}
\newcommand{\barx}{\bar{x}}
\newcommand{\barP}{\bar{P}}
\newcommand{\barX}{\bar{X}}
\DeclareMathOperator*{\argmax}{arg\,max}
\DeclareMathOperator*{\argmin}{arg\,min}
\newtheorem{theorem}{Theorem}
\newtheorem{corollary}{Corollary}
\newtheorem{definition}{Definition}
\newcommand{\blue}[1]{\textcolor{blue}{#1}} 
\newcommand{\bbo}{\mathbbm{1}}
\begin{document}

\title{Resolution Limits for the Noisy Non-Adaptive 20 Questions Problem}
\author{Lin Zhou and Alfred Hero \\
\thanks{The authors are with the Department of Electrical Engineering and Computer Science, University of Michigan, Ann Arbor, MI, USA, 48109-2122 (Emails: linzhou@umich.edu and hero@eecs.umich.edu.). This work was partially supported by ARO grant W911NF-15-1-0479.}
\thanks{A preliminary version of this paper will be presented at ISIT 2020.}
}
\maketitle

\begin{abstract}
We establish fundamental limits on estimation accuracy for the noisy 20 questions problem with measurement-dependent noise and introduce optimal non-adaptive procedures that achieve these limits. The minimal achievable resolution is defined as the absolute difference between the estimated and the true locations of a target over a unit cube, given a finite number of queries constrained by the excess-resolution probability. Inspired by the relationship between the 20 questions problem and the channel coding problem, we derive non-asymptotic bounds on the minimal achievable resolution to estimate the target location. Furthermore, applying the Berry--Esseen theorem to our non-asymptotic bounds, we obtain a second-order asymptotic approximation to the achievable resolution of optimal non-adaptive query procedures with a finite number of queries subject to the excess-resolution probability constraint. We specialize our second-order results to measurement-dependent versions of several channel models including the binary symmetric, the binary erasure and the binary Z- channels. The theory is extended to simultaneous searching for multiple targets. As a complement, we establish a second-order asymptotic achievability bound for adaptive querying and use this to bound the benefit of adaptive querying.
\end{abstract}

\begin{IEEEkeywords}
20 Questions, Resolution, Non-adaptive, Adaptive, Second-order asymptotics, Finite blocklength analysis, Multidimensional target, Simultaneous searching of multiple targets, Sorted posterior matching, Probably approximately correct learning
\end{IEEEkeywords}

\section{Introduction}
The noisy 20 questions problem (cf. \cite{renyi1961problem,burnashev1974interval,ulam1991adventures,pelc2002searching,jedynak2012twenty,chung2018unequal,lalitha2018improved}) arises when one aims to accurately estimate an arbitrarily distributed random variable $S$ by successively querying an oracle and using noisy responses to form an estimate $\hatS$. A central goal in this problem is to find optimal query strategies that yield a good estimate $\hatS$ of the unknown target $S$.

Depending on the framework adopted to design queries, the 20 questions problem is either adaptive or non-adaptive. In adaptive query procedures, the design of a subsequent query depends on all previous queries and noisy responses to these queries from the oracle. In non-adaptive query procedures, all the queries are designed independently in advance. For example, the bisection policy~\cite[Section 4.1]{jedynak2012twenty} is an adaptive query procedure and the dyadic policy~\cite[Section 4.2]{jedynak2012twenty} is a non-adaptive query procedure. Compared with adaptive query procedures, non-adaptive query procedures have the advantage of lower computation cost, parallelizability and no need for feedback. Depending on whether or not \blue{the noisy channel used to corrupt the noiseless responses} depends on the queries, the noisy 20 questions problem is classified into two categories: querying with measurement-independent noise (e.g.,~\cite{jedynak2012twenty,chung2018unequal}); and querying with measurement-dependent noise (e.g.,~\cite{kaspi2018searching,lalitha2018improved}). As argued in \cite{kaspi2018searching}, measurement-dependent noise can be a better model in many practical applications. For example, for target localization with a sensor network, the noisy response to each query can depend on the size of the query region. Another example is in human query systems where personal biases abut the state may affect the response.

In earlier works on the noisy 20 questions problem, e.g.,~\cite{jedynak2012twenty,tsiligkaridis2014collaborative,tsiligkaridis2015decentralized}, the queries were designed to minimize the entropy of the posterior distribution of the target variable $S$. As pointed out in later works, e.g., \cite{chung2018unequal,chiu2016sequential,kaspi2018searching,lalitha2018improved,chung2017bounds}, other accuracy measures, such as the resolution and the quadratic loss are often better criteria, where the resolution is defined as the absolute difference between $S$ and its estimate $\hatS$, $|\hatS-S|$, and the quadratic loss is $(\hatS-S)^2$. \blue{In particular, in estimation problems, if one aims to minimize the differential entropy of the posterior uncertainty of the target variable, then any two queries which can reduce the entropy by the same amount are deemed equally important, even if one query achieves higher estimation accuracy. For example, one query might ask about the most significant bit of the binary expansion of the target variable while the other query might ask about a much less significant bit. These two queries induce equal reductions in the entropy of the posterior distribution. By using the resolution or the quadratic loss, which are directly related with the estimation error, to drive the query design, such ambiguity is avoided. Relations between resolution and entropy were quantified by the bounds in \cite[Theorem 1]{chung2017bounds}.}

\subsection{Our Contributions}
Motivated by the scenario of limited resources, computation and response time, we obtain new results on the non-asymptotic tradeoff among the number of queries $n$, the achievable resolution $\delta$ and the excess-resolution probability $\varepsilon$ of optimal adaptive and non-adaptive query procedures for the following noisy 20 questions problem: 
\blue{estimation of the location of a target random vector $\bS=(S_1,\ldots,S_d)$ with arbitrary distribution on the unit cube of dimension $d$, i.e., $[0,1]^d$.} For the case of adaptive query procedures, we derive achievable second-order asymptotic bounds on the optimal resolution. We define the benefit of adaptivity, called adaptivity gain, as the logarithm of the ratio between achievable resolutions of optimal non-adaptive and adaptive query procedures. This benefit of adaptivity can be attributed to the more informative number of bits extracted by optimal adaptive querying in the binary expansion of each dimension of the target variable. We numerically evaluate a lower bound on the adaptivity gain for measurement-dependent versions of binary symmetric, binary erasure and binary Z- channels.

\blue{Our main focus of this paper is on non-adaptive query procedures.} Our contributions for the case of non-adaptive querying are as follows. Firstly, we derive non-asymptotic resolution bounds of optimal non-adaptive query procedures for arbitrary number of queries $n$ and any excess-resolution probability $\varepsilon$. To do so, \blue{similarly to} \cite{kaspi2018searching}, we exploit the connection between the 20 questions problem and the channel coding problem. This allows us to borrow ideas from finite blocklength analyses for channel coding~\cite{polyanskiy2010finite} (see also \cite{TanBook}). In particular, we adopt the change-of-measure technique of \cite{csiszar2011information} in the achievability proof to handle the case of measurement-dependent noise.

Secondly, applying the Berry-Esseen theorem, under mild conditions on the measurement-dependent noise, we obtain a second-order asymptotic approximation to the achievable resolution of optimal non-adaptive query procedures with finite number of queries. \blue{A key implication of our result states that searching over each dimension of a multidimensional target separately, although asymptotically optimal, is \emph{not} optimal for a finite number of queries.} As a corollary, we establish a phase transition for optimal non-adaptive query procedures. This implies that, if one is allowed to make an infinite number of optimal non-adaptive queries, regardless of the excess-resolution probability, the \blue{asymptotic} average number of bits (in the binary expansion of each dimension of the target variable) extracted per query remains the same. \blue{Furthermore, we show how to extend our theory to simultaneous searching for multiple targets~\cite{kaspi2015searching} over the unit cube.}

We specialize our second-order analyses to three measurement-dependent channel models: the binary symmetric, the binary erasure and the binary Z- channels. Similarly to our proofs for measurement-dependent channels, the second-order asymptotic approximation to the achievable resolution of optimal non-adaptive query procedures for measurement-independent channels is obtained. To compare the performances of optimal non-adaptive query procedures for measurement-dependent and measurement-independent channels, we contrast the minimal achievable resolution of both scenarios. \blue{In our definitions (Definitions \ref{def:mdBSC} to \ref{def:mdZ}) of measurement-dependent channels, given a fixed channel noise parameter, the noise level is always smaller for a measurement-dependent channel.} Intuitively, the optimal non-adaptive query procedure for a measurement-dependent channel should have a higher resolution compared to its measurement-independent counterpart. We verify this intuition for the asymmetric binary erasure and binary Z channels. However, for the binary symmetric channel, we find that the achievable resolution of optimal non-adaptive query procedures for the measurement-independent channel can in fact achieve a higher resolution when the crossover probability is large, \blue{i.e., $>0.5$.} We provide plausible explanations for this counter-intuitive phenomenon.

{\color{blue}
\subsection{Comparison to Previous Work}
\label{sec:comp}
Here we compare the contributions of our paper to related work in the literature~\cite{kaspi2018searching,chiu2016sequential}. First of all, our results hold for arbitrary discrete channels, while the results in \cite{kaspi2018searching,chiu2016sequential} were only established for a measurement-dependent BSC. Furthermore, we consider a multidimensional target while the the results in \cite{kaspi2018searching,chiu2016sequential} were only established for a one-dimensional target. In the following, we compare our results, specialized to a single one-dimensional target, with the results in \cite{kaspi2018searching,chiu2016sequential}.

In terms of our results on resolution of non-adaptive query schemes, the most closely related work is \cite{kaspi2018searching}. The authors in \cite{kaspi2018searching} derived first-order asymptotic characterizations of the resolution decay rate when the excess-resolution probability vanishes for a measurement-dependent BSC. Our results in Theorem \ref{result:second} extends \cite[Theorem 1]{kaspi2018searching} in several directions. First, Theorem \ref{result:second} is a  second-order asymptotic result which provides an approximation to the performance of optimal query procedures employing a finite number of queries, while \cite[Theorem 1]{kaspi2018searching} only characterizes the asymptotic performance when the number of queries tends to infinity. Second, our results hold for any measurement-dependent channel satisfying a mild condition while \cite[Theorem 1]{kaspi2018searching} only considers the measurement-dependent binary symmetric channel (cf. Definition \ref{def:mdBSC}). Furthermore, our results apply methods recently developed for finite blocklength information theory. This results in the first non-asymptotic bounds (cf. Theorems \ref{ach:fbl} and \ref{fbl:converse}) for non-adaptive query schemes for 20 questions search. These bounds extend the analysis of \cite{kaspi2018searching}, in which the derived lower bound on the decay rate of the excess-resolution probability is only tight in the asymptotic limit of large $n$ (e.g., infinite number of queries). Other works concerning non-adaptive query procedures~\cite{jedynak2012twenty,pelc2002searching,chung2018unequal} consider either different performance criteria or different models and thus not comparable to our work.

For resolution limits of adaptive querying, the most closely related publications are \cite{chiu2016sequential,kaspi2018searching,lalitha2018improved}. The authors in \cite{lalitha2018improved} considered the case where the noise is measurement-dependent Gaussian noise. However, the setting in \cite{lalitha2018improved} is different from ours. In \cite{kaspi2018searching}, the authors considered two adaptive query procedures using ideas due to Forney \cite{forney1968exponential} and Yamamoto-Itoh~\cite{yamamoto1979asymptotic}. The authors of \cite{kaspi2018searching} derived a lower bound on the exponent of the excess-resolution probability for both procedures and showed that the performance of the three-stage adaptive query procedure based on Yamamoto-Itoh~\cite{yamamoto1979asymptotic} has better performance. Furthermore, in \cite{kaspi2018searching}, an asymptotic upper bound on the average number of queries is derived given a particular target resolution and excess-resolution probability~\cite[Theorem 2]{kaspi2018searching}. In \cite{chiu2016sequential}, the authors proposed a single-stage adaptive query procedure using sorted posterior matching and derived a non-asymptotic upper bound on the average number of queries subject to a constraint on the excess-resolution probability with respect to a given resolution. In contrast, we present a single-stage adaptive query procedure using the ideas in \cite{polyanskiy2011feedback} on finite blocklength analysis for channel coding with feedback and we derive a non-asymptotic upper bound on the excess-resolution probability with respect to a given resolution subject to a constraint on the average number of queries (cf. Theorem \ref{fbl:ach:adaptive}).

It would be interesting to compare our non-asymptotic achievability bound in Theorem \ref{fbl:ach:adaptive} to the results in \cite[Theorem 2]{kaspi2018searching}, \cite{chiu2016sequential}. However, since the results are derived under different theoretical assumptions, an analytical comparison is challenging. Note that Theorem \ref{fbl:ach:adaptive} addresses the decay rate of the achievable resolution subject to constraints on the average number of queries and an excess-resolution probability. In contrast, the results in \cite{kaspi2018searching,chiu2016sequential} address an upper bound on the average number of queries subject to a given resolution and an excess-resolution probability constraint. It is difficult to transform our results to an upper bound on the average number of queries or to transform their results to a lower bound on the decay rate of achievable resolution. Instead, we compare the asymptotic resolution decay rate of our adaptive query algorithm and the algorithms in \cite[Theorem 2]{kaspi2018searching} and \cite{chiu2016sequential}. Asymptotically, the three-stage algorithm in \cite[Theorem 2]{kaspi2018searching} and the single-stage posterior matching algorithm in \cite{chiu2016sequential} achieve the same asymptotic resolution rate. Compared with our proposed adaptive query procedure, unless the values of excess-resolution probability are small or the channel noise is high, the asymptotic performances of \cite{kaspi2018searching,chiu2016sequential} can be worse than our proposed adaptive query procedure ((See the remarks associated with Theorem \ref{second:fbl:adaptive} for details)). Numerical simulation results (Figure \ref{sim_adap}) are presented to compare the non-asymptotic performance of our proposed algorithm and the sorted posterior matching algorithm in \cite{chiu2016sequential} for estimation of a uniformly distributed one-dimensional target variable.}

\section{Problem Formulation}
\subsection*{Notation}
Random variables and their realizations are denoted by upper case variables (e.g.,  $X$) and lower case variables (e.g.,  $x$), respectively. All sets are denoted in calligraphic font (e.g.,  $\mathcal{X}$). Let $X^n:=(X_1,\ldots,X_n)$ be a random vector of length $n$. We use $\Phi^{-1}(\cdot)$ to denote the inverse of the cumulative distribution function (cdf) of the standard Gaussian. We use $\bbR$, $\bbR_+$ and $\bbN$ to denote the sets of real numbers, positive real numbers and integers respectively. Given any two integers $(m,n)\in\bbN^2$, we use $[m:n]$ to denote the set of integers $\{m,m+1,\ldots,n\}$ and use $[m]$ to denote $[1:m]$. \blue{Given any $(m,n)\in\bbN^2$, for any $m$ by $n$ matrix $\ba=\{a_{i,j}\}_{i\in[m],j\in[n]}$, the infinity norm is defined as $\|\ba\|_{\infty}:=\max_{i\in[m],j\in[n]}|a_{i,j}|$.} The set of all probability distributions on a finite set $\calX$ is denoted as $\calP(\calX)$ and the set of all conditional probability distributions from $\calX$ to $\calY$ is denoted as $\calP(\calY|\calX)$. Furthermore, we use $\calF(\calS)$ to denote the set of all probability density functions on a set $\calS$. All logarithms are base $e$ unless otherwise noted. Finally, we use $\bbo()$ to denote the indicator function.

\subsection{Noisy 20 Questions Problem On the Unit Cube}
Consider an arbitrary integer $d\in\bbN$. Let $\bS=(S_1,\ldots,S_d)$ be a continuous random vector defined on the unit cube of dimensional $d$ (i.e., $[0,1]^d$) with arbitrary probability density function (pdf) $f_{\bS}$. Note that any searching problem over a bounded $d$-dimensional region is equivalent to a searching problem over the unit cube of dimension $d$ with normalization in each dimension. 

In the estimation problem formulated under the framework of noisy 20 questions, a player aims to accurately estimate the target random variable $\bS$ by posing a sequence of queries $\calA^n=(\calA_1,\ldots,\calA_n)\subseteq[0,1]^{nd}$ to an oracle knowing $\bS$. After receiving the queries, the oracle finds binary answers $\{X_i=\bbo(\bS\in\calA_i)\}_{i\in[n]}$ and passes these answers through a measurement-dependent channel with transition matrix $P_{Y^n|X^n}^{\calA^n}\in\calP(\calY^n|\{0,1\}^n)$ yielding noisy responses $Y^n=(Y_1,\ldots,Y_n)$. Given the noisy responses $Y^n$, the player uses a decoding function $g:\calY^n\to[0,1]^d$ to obtain an estimate $\hat{\bS}=(\hatS_1,\ldots,\hatS)$ of the target variable $\bS=(S_1,\ldots,S_d)$. Throughout the paper, we assume that the alphabet $\calY$ for the noisy response is finite.

A query procedure for the noisy 20 questions problem consists of the Lebesgue measurable query sets $\calA^n\subseteq[0,1]^{nd}$ and a decoder $g:\calY^n\to[0,1]^d$. In general, these procedures can be classified into two categories: non-adaptive and adaptive querying. In a non-adaptive query procedure, the player needs to first determine the number of queries $n$ and then design all the queries $\calA^n$ simultaneously. In contrast, in an adaptive query procedure, the design of queries is done sequentially and the number of queries is a variable. In particular, when designing the $i$-th query, the player can use the previous queries and the noisy responses from the oracle to these queries, i.e., \blue{$\{\calA_j,Y_j\}_{j\in[i-1]}$}, to formulate the next query $\calA_i$. Furthermore, the player needs to choose a stopping criterion, which may be random, determining the number of queries to make. 

We illustrate the difference between non-adaptive and adaptive query procedures in Figure \ref{illustrate:procedures}. In subsequent sections, we clarify the notion of the measurement-dependent channel with concrete examples and present specific definitions of non-adaptive and adaptive query procedures.
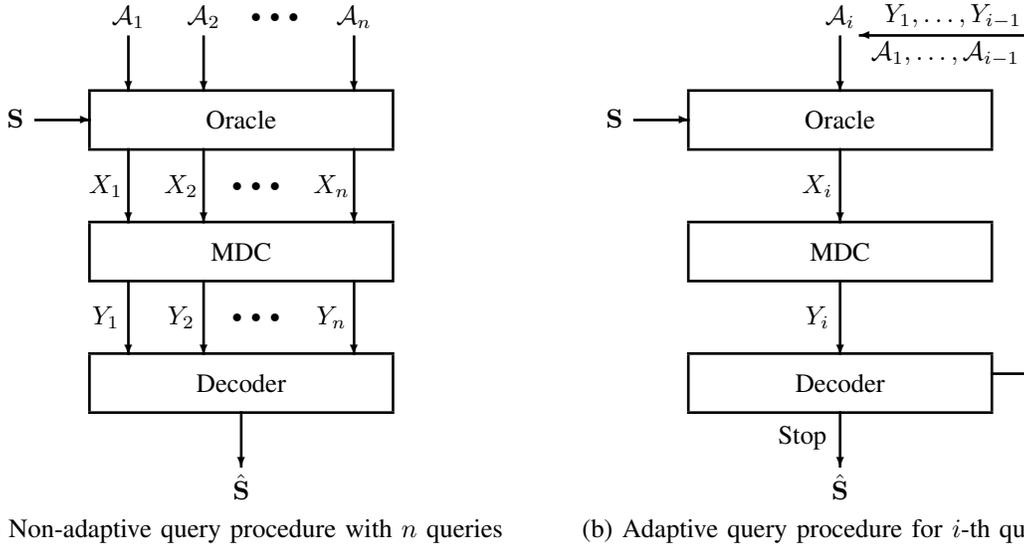
\begin{figure}[tb]
\centering
\setlength{\unitlength}{0.5cm}
\begin{tabular}{cc}
\scalebox{1}{
\begin{picture}(8,12.5)
\linethickness{1pt}
\put(1,12.5){\makebox(0,0){$\calA_1$}}
\put(3,12.5){\makebox(0,0){$\calA_2$}}
\put(4.5,12.5){\makebox(0,0){\circle*{0.2}}}
\put(5,12.5){\makebox(0,0){\circle*{0.2}}}
\put(5.5,12.5){\makebox(0,0){\circle*{0.2}}}
\put(7,12.5){\makebox(0,0){$\calA_n$}}
\put(1,12){\vector(0,-1){1.5}}
\put(3,12){\vector(0,-1){1.5}}
\put(7,12){\vector(0,-1){1.5}}
\put(-2,9.8){\makebox(0,0){$\bS$}}
\put(-1.5,9.75){\vector(1,0){1.5}}
\put(0,9){\framebox(8,1.5){Oracle}}
\put(1,9){\vector(0,-1){2}}
\put(3,9){\vector(0,-1){2}}
\put(7,9){\vector(0,-1){2}}
\put(0.4,8){\makebox(0,0){$X_1$}}
\put(2.4,8){\makebox(0,0){$X_2$}}
\put(4,8){\makebox(0,0){\circle*{0.2}}}
\put(4.5,8){\makebox(0,0){\circle*{0.2}}}
\put(5,8){\makebox(0,0){\circle*{0.2}}}
\put(6.4,8){\makebox(0,0){$X_n$}}
\put(0,5.5){\framebox(8,1.5){MDC}}
\put(1,5.5){\vector(0,-1){2}}
\put(3,5.5){\vector(0,-1){2}}
\put(7,5.5){\vector(0,-1){2}}
\put(0.4,4.5){\makebox(0,0){$Y_1$}}
\put(2.4,4.5){\makebox(0,0){$Y_2$}}
\put(4,4.5){\makebox(0,0){\circle*{0.2}}}
\put(4.5,4.5){\makebox(0,0){\circle*{0.2}}}
\put(5,4.5){\makebox(0,0){\circle*{0.2}}}
\put(6.4,4.5){\makebox(0,0){$Y_n$}}
\put(0,2){\framebox(8,1.5){Decoder}}
\put(4,2){\vector(0,-1){1.5}}
\put(4,0){\makebox(0,0){$\hat{\bS}$}}
\end{picture}}
&
\hspace{.4in}
\scalebox{1}{
\begin{picture}(8,12.5)
\linethickness{1pt}
\put(4,12.5){\makebox(0,0){$\calA_i$}}
\put(4,12){\vector(0,-1){1.5}}
\put(-2,9.8){\makebox(0,0){$\bS$}}
\put(-1.5,9.75){\vector(1,0){1.5}}
\put(0,9){\framebox(8,1.5){Oracle}}
\put(4,9){\vector(0,-1){2}}
\put(3.4,8){\makebox(0,0){$X_i$}}
\put(0,5.5){\framebox(8,1.5){MDC}}
\put(4,5.5){\vector(0,-1){2}}
\put(3.4,4.5){\makebox(0,0){$Y_i$}}
\put(9,3){\line(0,1){9}}
\put(9,12){\vector(-1,0){4.5}}
\put(7,12.5){\makebox(0,0){$Y_1,\ldots,Y_{i-1}$}}
\put(6.8,11.5){\makebox(0,0){$\calA_1,\ldots,\calA_{i-1}$}}
\put(0,2){\framebox(8,1.5){Decoder}}
\put(8,3){\line(1,0){1}}
\put(4,2){\vector(0,-1){1.5}}
\put(3,1.3){\makebox(0,0){Stop}}
\put(4,0){\makebox(0,0){$\hat{\bS}$}}
\end{picture}}
\vspace{.1in}
\\
(a) Non-adaptive query procedure with $n$ queries&\hspace{.2in} (b) Adaptive query procedure for $i$-th query
\end{tabular}
\caption{Illustration of query procedures for the noisy 20 questions problem with measurement-dependent channel (MDC). In the non-adaptive case (a), a target slate $\bS$ is known to the oracle who responds to a block of queries $\calA_1,\ldots,\calA_n$ and provides binary responses $X_1,\ldots,X_n$, respectively. These responses are corrupted by a measurement-dependent channel (MDC) that outputs symbols $Y_1,\ldots,Y_n$, which are used by the decoder to produce estimate $\hat{\bS}$. In the adaptive case (b), the queries are posed sequentially and decoder needs to determine when to stop the query procedure.
}
\label{illustrate:procedures}
\end{figure}

\subsection{The Measurement-Dependent Channel}
In this subsection, we describe succinctly the measurement-dependent channel scenario~\cite{kaspi2018searching}, also known as a channel with state~\cite[Chapter 7]{el2011network}. Given a sequence of queries $\calA^n\subseteq[0,1]^{nd}$, the channel from the oracle to the player is a memoryless channel whose transition probabilities are functions of the queries. Specifically, for any $(x^n,y^n)\in\{0,1\}^n\times\calY^n$,
\begin{align}
P_{Y^n|X^n}^{\calA^n}(y^n|x^n)
&=\prod_{i\in[n]}P_{Y|X}^{\calA_i}(y_i|x_i),
\end{align}
where $P_{Y|X}^{\calA_i}$ denotes the transition probability of the channel which depends on the $i$-th query $\calA_i$. Given any query $\calA\subseteq[0,1]^d$, define the volume $|\calA|$ of $\calA$ as its Lebesgue measure, i.e., $|\calA|=\int_{t\in\calA}\rmd t$. Throughout the paper, \blue{we consider only Lebesgue measurable query sets} and 
assume that the measurement-dependent channel $P_{Y|X}^{\calA}$ depends on the query $\calA$ \blue{only through} its size. Thus, $P_{Y|X}^{\calA}$ is equivalent to a channel with state $P_{Y|X}^q$ where the state $q=|\calA|\in[0,1]$.

For any $q\in[0,1]$, any $\xi\in(0,\min(q,1-q))$ and any subsets $\calA$, $\calA^+$ and $\calA^-$ of $[0,1]$ with sizes $|\calA|=q$, $|\calA^+|=q+\xi$ and $|\calA^-|=q-\xi$, we assume the measurement-dependent channel is continuous in the sense that there exists a constant $c(q)$ depending on $q$ only such that
\begin{align}
\max\left\{\left\|\log\frac{P_{Y|X}^{\calA}}{P_{Y|X}^{{\calA^+}}}\right\|_{\infty},\left\|\log\frac{P_{Y|X}^\calA}{P_{Y|X}^{\calA^-}}\right\|_{\infty}\right\}\leq c(q)\xi\label{assump:continuouschannel},
\end{align}
\blue{where the infinity norm is defined as $\|\ba\|_{\infty}:=\max_{i\in[m],j\in[n]}|a_{i,j}|$ for any matrix $\ba=\{a_{i,j}\}_{i\in[m],j\in[n]}$.}

Some examples of measurement-dependent channels satisfying the continuous constraint in \eqref{assump:continuouschannel} are as follows.
\begin{definition}
\label{def:mdBSC}
Given any $\calA\subseteq[0,1]$, a channel $P_{Y|X}^{\calA}$ is said to be a measurement-dependent Binary Symmetric Channel (BSC) with parameter $\nu\in[0,1]$ if $\calX=\calY=\{0,1\}$ and 
\begin{align}
P_{Y|X}^{\calA}(y|x)=(\nu|\calA|)^{\bbo(y\neq x)}(1-\nu|\calA|)^{\bbo(y=x)},~\forall~(x,y)\in\{0,1\}^2.
\end{align}
\end{definition}
This definition generalizes \cite[Theorem 1]{kaspi2018searching}, where the authors considered a measurement-dependent BSC with parameter $\nu=1$. \blue{Note that the binary output bit of a measurement-dependent BSC with parameter $\nu$ is flipped with probability $\nu|\calA|$.}

\begin{definition}
Given any $\calA\subseteq[0,1]$, a measurement-dependent channel $P_{Y|X}^{\calA}$ is said to be a measurement-dependent Binary Erasure Channel (BEC) with parameter $\tau\in[0,1]$ if $\calX=\{0,1\}$, $\calY=\{0,1,\rme\}$ and
\begin{align}
\blue{
P_{Y|X}^{\calA}(y|x)=(1-\tau|\calA|)^{\bbo(y=x)}(\tau|\calA|)^{\bbo(y=\rme)}
}
\end{align}
\end{definition}
\blue{Note that the binary output bit of a measurement-dependent BEC with parameter $\tau$ is erased with probability $\nu|\calA|$.}

\begin{definition}
\label{def:mdZ}
Given any $\calA\subseteq[0,1]$, a measurement-dependent channel $P_{Y|X}^{\calA}$ is said to be a measurement-dependent Z-channel with parameter $\zeta\in[0,1]$ if $\calX=\{0,1\}$, $\calY=\{0,1\}$ and
\begin{align}
\blue{
P_{Y|X}^{\calA}(y|x)=(1-\zeta|\calA|)^{\bbo(y=x=1)}(\zeta|\calA|)^{\bbo(y=0,x=1)}(0)^{\bbo(y=1,x=0)}.
}
\end{align}
\end{definition}
\blue{Note that the binary output bit of a measurement-dependent Z-channel is flipped with probability $\zeta|\calA|$ if the input is $x=1$.}

Each of these measurement-dependent channels will be considered in the sequel.

\subsection{Non-Adaptive Query Procedures}
A non-adaptive query procedure with resolution $\delta$ and excess-resolution constraint $\varepsilon$ is defined as follows.

\begin{definition}
\label{def:procedure}
Given any $(n,d)\in\bbN^2$, $\delta\in\bbR_+$ and $\varepsilon\in[0,1]$, an $(n,d,\delta,\varepsilon)$-non-adaptive query procedure for the noisy 20 questions consists of 
\begin{itemize}
\item $n$ queries $(\calA_1,\ldots,\calA_n)$ where each $\calA_i\subseteq[0,1]^d$,
\item and a decoder $g:\calY^n\to[0,1]^d$
\end{itemize}
such that the excess-resolution probability satisfies
\begin{align}
\rmP_\rme(n,d,\delta)&:=\sup_{f_{\bS}\in\calF([0,1]^d)}\Pr\{\exists~i\in[d]:~|\hatS_i-S_i|>\delta\}\leq \varepsilon\label{def:excessresolution2},
\end{align}
where $\hatS_i$ is the estimate of $i$-th element of the $d$-dimensional target $\bS$ using the decoder $g$, i.e., $g(Y^n)=(\hatS_1,\ldots,\hatS_d)$.
\end{definition}

\blue{In Algorithm \ref{procedure:nonadapt}, we provide a non-adaptive query procedure which is used in our achievability proof. The procedure is parametrized by two parameters $M$ and $p$, where $\frac{1}{M}$ is the target resolution and $p$ is the design parameter.} The definition of the \emph{excess-resolution probability} with respect to $\delta$ is inspired by rate-distortion theory~\cite{berger1971rate,kostina2013lossy}. Our formulation generalizes that of \cite{kaspi2018searching} where the authors constrained the target-dependent maximum excess-resolution probability for the case of $d=1$, i.e., \blue{i.e., $\sup_{s_1\in[0,1]}\Pr\{|\hatS_1-s_1|>\delta\}$.}

In practical applications, the number of queries are often limited to minimize total cost of queries and maintain low latency. We are interested in the establishing a non-asymptotic fundamental limit to achievable resolution $\delta$:
\begin{align}
\delta^*(n,d,\varepsilon)
&:=\inf\big\{\delta\in[0,1]:\exists\mathrm{~an~}(n,d,\delta,\varepsilon)\mathrm{-non}\mathrm{-adaptive}\mathrm{~query}\mathrm{~procedure}\big\}\label{def:delta*}.
\end{align}
Note that $\delta^*(n,d,\varepsilon)$ denotes the minimal resolution one can achieve with probability at least $1-\varepsilon$ using a non-adaptive query procedure with $n$ queries. In other words, $\delta^*(n,d,\varepsilon)$ is the achievable resolution of optimal non-adaptive query procedures tolerating an excess-resolution probability of $\varepsilon\in[0,1]$. Dual to \eqref{def:delta*} is the sample complexity, determined by the minimal number of queries required to achieve a resolution $\delta$ with probability at least $1-\varepsilon$, i.e.,
\begin{align}
n^*(d,\delta,\varepsilon):=\inf\big\{n\in\bbN:\exists\mathrm{~an~}(n,d,\delta,\varepsilon)\mathrm{-non}\mathrm{-adaptive}\mathrm{-query}\mathrm{-procedure}\big\}\label{def:n*}.
\end{align}
One can easily verify that for any $(\delta,\varepsilon)\in\bbR_+\times[0,1]$,
\begin{align}
n^*(d,\delta,\varepsilon)
&=\inf\{n:\delta^*(n,d,\varepsilon)\leq \delta\}\label{def:sc_non}.
\end{align}
Thus, it suffices to derive the fundamental limit $\delta^*(n,d,\varepsilon)$.

\subsection{Adaptive Query Procedures}

An adaptive query procedure with resolution $\delta$ and excess-resolution constraint $\varepsilon$ is defined as follows.
\begin{definition}
\label{def:adaptive:procedure}
Given any $(l,d,\delta,\varepsilon)\in\bbR_+\times\bbN\times\bbR_+\times[0,1]$, an $(l,d,\delta,\varepsilon)$-adaptive query procedure for the noisy 20 questions problem consists of
\begin{itemize}
\item a sequence of adaptive queries where for each $i\in\bbN$, the design of query $\calA_i\subseteq[0,1]^d$ is based all previous queries $\{\calA_j\}_{j\in[i-1]}$ and the noisy responses $Y^{i-1}$ from the oracle
\item a sequence of decoding functions $g_i:\calY^i\to[0,1]^d$ for $i\in\bbN$ 
\item a random stopping time $\tau$ depending on noisy responses $\{Y_i\}_{i\in\bbN}$ such that under any pdf $f_{\bS}$ of the target random variable $\bS$, the average number of queries satisfies
\begin{align}
\mathbb{E}[\tau]\leq l,
\end{align}
\end{itemize}
such that the excess-resolution probability satisfies
\begin{align}
\rmP_{\rme,\rma}(l,d,\delta):=\sup_{f_{\bS}\in\calF([0,1]^d)}\Pr\{\exists~i\in[d]:~|\hatS_i-S_i|>\delta\}\leq \varepsilon\label{def:excessresolution},
\end{align}
where $\hatS_i$ is the estimate of $i$-th element of the target $\bS$ using the decoder $g$ at time $\tau$, i.e., $g(Y^\tau)=(\hatS_1,\ldots,\hatS_d)$.
\end{definition}
\blue{An adaptive query procedure is provided in Algorithm \ref{procedure:adapt}.}

Similar to \eqref{def:delta*}, given any $(l,d,\varepsilon)\in\bbR_+\times\bbN\times[0,1)$, we can define the fundamental resolution limit for adaptive querying as follows: 
\begin{align}
\delta_\rma^*(l,d,\varepsilon)
&:=\inf\{\delta\in\bbR_+:~\exists~\mathrm{an}~(l,d,\delta,\varepsilon)\mathrm{-adaptive}~\mathrm{query~procedure}\}\label{def:delta*:adaptive},
\end{align}
with analogous definition of mean sample complexity (cf. \eqref{def:sc_non})
\begin{align}
l^*(d,\delta,\varepsilon):=\inf\{l\in\bbR_+:~\exists~\mathrm{an}~(l,d,\delta,\varepsilon)\mathrm{-adaptive}~\mathrm{query~procedure}\}.
\end{align}

\section{Main Results for Non-Adaptive Query Procedures}

\subsection{Non-Asymptotic Bounds}
We first present an upper bound on the error probability of optimal non-adaptive query procedures. Given any $(p,q)\in[0,1]^2$, let $P_Y^{p,q}$ be the marginal distribution on $\calY$ induced by the Bernoulli distribution $P_X=\mathrm{Bern}(p)$ and the measurement-dependent channel $P_{Y|X}^{q}$. Furthermore, define the following information density
\begin{align}
\imath_{p,q}(x;y)&:=\log\frac{P_{Y|X}^q(y|x)}{P_Y^{p,q}(y)},~\forall~(x,y)\in\calX\times\calY\label{def:ipqxy}.
\end{align}
Correspondingly, for any $(x^n,y^n)\in\calX^n\times\calY^n$, we define
\begin{align}
\imath_p(x^n;y^n)
&:=\sum_{i\in[n]}\imath_{p,p}(x_i;y_i)\label{def:ixnyn}
\end{align}
as the mutual information density between $x^n$ and $y^n$.

\begin{algorithm}[bt]
\caption{\color{blue} Non-adaptive query procedure for searching for a multidimensional target over the unit cube}
\label{procedure:nonadapt}
\begin{algorithmic}
\REQUIRE The number of queries $n\in\bbN$, the dimension $d\in\bbN$ and two parameters $(M,p)\in\bbN\times(0,1)$
\ENSURE An estimate $(\hats_1,\ldots,\hats_d)\in[0,1]^d$ of a $d$-dimensional target variable $(s_1,\ldots,s_d)\in[0,1]^d$
\STATE Partition the unit cube of dimension $d$ (i.e., $[0,1]^d$) into $M^d$ equal-sized disjoint cubes $\{\calS_{i_1,\ldots,i_d}\}_{(i_1,\ldots,i_d)\in[M]^d}$.
\STATE Generate $M^d$ binary vectors $\{x^n(i_1,\ldots,i_d)\}_{(i_1,\ldots,i_d)\in[M]^d}$ where each binary vector is generated i.i.d. from a Bernoulli distribution with parameter $p$.
\STATE $t \leftarrow 1$.
\WHILE{$t\leq n$}
\STATE Form the $t$-th query as
\begin{align*}
\calA_t:=\bigcup_{(i_1,\ldots,i_d)\in[M]^d:x_t(i_1,\ldots,i_d)=1}\calS_{i_1,\ldots,i_d}.
\end{align*}
\STATE Obtain a noisy response $y_t$ from the oracle to the query $\calA_t$.
\STATE $t \leftarrow t+1$.
\ENDWHILE
\STATE Generate estimates $(\hats_1,\ldots,\hats_d)$ as
\begin{align*}
\hats_i=\frac{2\hatw_i-1}{2M},~i\in[d]
\end{align*}
where $\hat{\bw}=(\hatw_1,\ldots,\hatw_d)$ is obtained via the maximum mutual information density estimator, i.e,
\begin{align*}
\hat{\bw}=\max_{(\tili_1,\ldots,\tili_d)\in[M]^d}\imath_p(x^n(\tili_1,\ldots,\tili_d);y^n).
\end{align*}
\end{algorithmic}
\end{algorithm}

\begin{theorem}
\label{ach:fbl}
Given any $(n,d,M)\in\bbN^3$, for any $p\in[0,1]$ and any $\eta\in\bbR_+$, \blue{the procedure in Algorithm \ref{procedure:nonadapt}} is an $(n,d,\frac{1}{M},\varepsilon)$-non-adaptive query procedure where
\begin{align}
\varepsilon&\leq 
4n\exp(-2M^d\eta^2)+\exp(n\eta c(p))\mathbb{E}[\min\{1,M^d\Pr\{\imath_p(\barX^n;Y^n)\geq \imath_p(X^n;Y^n)|X^n,Y^n\}]\}\label{com:fbl},
\end{align}
where $(X^n,\barX^n,Y^n)$ is distributed as $P_X^n(X^n)P_X^n(\barX^n)(P_{Y|X}^p)^n(Y^n|X^n)$ with $P_X$ defined as the Bernoulli distribution with parameter $p$ (i.e., $P_X(1)=p$).
\end{theorem}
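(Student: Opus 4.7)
The overall plan is a finite-blocklength random-coding argument: reduce the excess-resolution event to a codebook decoding error event on the $M^d$ cube indices, bound that error probability for the maximum-information-density decoder via an RCU-type union bound, and absorb the measurement-dependent character of the channel into a single multiplicative change-of-measure factor.

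The first step is a geometric reduction. Let $\bW=(W_1,\ldots,W_d)\in[M]^d$ denote the unique index vector with $\bS\in\calS_{W_1,\ldots,W_d}$ (the boundary set has $f_\bS$-measure zero). Since $\hats_i=(2\hatw_i-1)/(2M)$ lies at the center of cube index $\hatw_i$ and $|s_i-(2W_i-1)/(2M)|\leq 1/(2M)$, a correct decoding $\hat{\bW}=\bW$ automatically yields $|\hatS_i-S_i|\leq 1/(2M)<1/M$ for every $i\in[d]$, so the excess-resolution event is contained in $\{\hat{\bW}\neq\bW\}$. By the permutation symmetry of the i.i.d.\ random codebook, $\Pr\{\hat{\bW}\neq\bW\mid\bW=\bw\}$ does not depend on $\bw$, so the resulting bound holds uniformly over $f_\bS$ and I may fix $\bW$ without loss of generality.

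The second step isolates the ``typical query size'' event $\calE:=\{\max_{t\in[n]}\big||\calA_t|-p\big|\leq\eta\}$. Each $|\calA_t|$ is $M^{-d}$ times a sum of $M^d$ i.i.d.\ $\mathrm{Bern}(p)$ random variables, so Hoeffding's inequality and a union bound over $t\in[n]$ control $\Pr\{\calE^c\}$ by an expression matching the first term in \eqref{com:fbl}, where the factor $4n$ (rather than $2n$) accommodates mild additional concentration needed to simultaneously control the sizes seen by the transmitted and the alternative codewords. On $\calE$, the continuity hypothesis \eqref{assump:continuouschannel} yields $|\log(P_{Y|X}^{|\calA_t|}(y|x)/P_{Y|X}^{p}(y|x))|\leq c(p)\eta$ pointwise in $(x,y)$ for every $t$, so the actual conditional law of $Y^n$ given $X^n$ and the codebook is dominated by $e^{n c(p)\eta}$ times the idealized product $(P_{Y|X}^p)^n$. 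I can therefore upper-bound $\Pr\{\hat{\bW}\neq\bW,\,\calE\}$ by $e^{n c(p)\eta}$ times the probability of the same error event computed under the idealized i.i.d.\ channel $P_{Y|X}^p$.

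The final step is a standard finite-blocklength random-coding union bound under this idealized channel. Conditioning on the transmitted $X^n=x^n(\bW)$ and received $Y^n$ generated via $P_X^n(P_{Y|X}^p)^n$, the maximum-information-density decoder errs only when some alternative codeword $\barX^n$, i.i.d.\ $P_X^n$ and independent of $(X^n,Y^n)$, satisfies $\imath_p(\barX^n;Y^n)\geq\imath_p(X^n;Y^n)$; union-bounding over the $M^d-1$ alternatives and clipping at $1$ yields the inner factor $\min\{1,M^d\Pr\{\imath_p(\barX^n;Y^n)\geq\imath_p(X^n;Y^n)\mid X^n,Y^n\}\}$, and taking expectations reproduces the second term of \eqref{com:fbl}. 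The main obstacle is keeping the change-of-measure step clean: the event $\calE$, the random codebook, and the alternative-codeword probability all live on the same probability space, so I must decouple them carefully so that the cost is a single multiplicative factor $e^{n\eta c(p)}$ on $\calE$ while the remaining probability is genuinely evaluated under the product law $P_X^n(X^n)P_X^n(\barX^n)(P_{Y|X}^p)^n(Y^n|X^n)$ that appears in the statement.
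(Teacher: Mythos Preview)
Your proposal is correct and follows essentially the same approach as the paper: the geometric reduction to $\{\hat{\bW}\neq\bW\}$, the typical-query-size event with change-of-measure via \eqref{assump:continuouschannel}, and the RCU bound under the idealized product channel $(P_{Y|X}^p)^n$ are exactly the three ingredients the paper uses. One small correction: your explanation for the constant $4n$ is off, since $|\calA_t|=M^{-d}\sum_{\bw}x_t(\bw)$ is a single quantity fixed by the whole codebook (there is no separate ``size seen by the transmitted'' versus ``alternative'' codeword); a single Hoeffding bound per $t$ already gives $2n\exp(-2M^d\eta^2)$, and the paper simply cites a lemma with the looser constant $4$.
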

\blue{The proof of Theorem \ref{ach:fbl} uses a modification of the random coding union bound~\cite{polyanskiy2010finite}} and is given in Appendix \ref{proof:ach}. 

{\color{blue}
Consider the measurement-independent channel where $P_{Y|X}^q=P_{Y|X}^{1}=:P_{Y|X}$ for all $q\in[0,1]$. Similarly to the proof of Theorem \ref{ach:fbl}, we can show that for any $p\in[0,1]$, there exists an $(n,d,\frac{1}{M},\varepsilon)$-non-adaptive query procedure such that
\begin{align}
\varepsilon&\leq \mathbb{E}[\min\{1,M^d\Pr\{\jmath_p(\barX^n;Y^n)\geq \jmath_p(X^n;Y^n)|X^n,Y^n\}]\}\label{com:fbl2},
\end{align}
where the tuple of random variables $(X^n,\barX^n,Y^n)$ is distributed as $P_X^n(X^n)P_X^n(\barX^n)P_{Y|X}^n(Y^n|X^n)$, the information density $\jmath_p(x^n;y^n)$ is defined as
\begin{align}
\jmath_p(x^n;y^n):=\log\frac{P_{Y|X}^n(y^n|x^n)}{P_Y^n(y^n)},
\end{align}
and $P_Y$ is the marginal distribution induced by $P_X$ and $P_{Y|X}$.} Comparing the measurement-independent case \eqref{com:fbl2} with the measurement-dependent case \eqref{com:fbl}, the non-asymptotic upper bound \eqref{com:fbl} in Theorem \ref{ach:fbl} differs from \eqref{com:fbl2} in two aspects: there are an additional additive term and an additional multiplicative term in \eqref{com:fbl}. As is made clear in the proof of Theorem \ref{ach:fbl}, the additive term $4n\exp(-2M^d\eta^2)$ results from the atypicality of the measurement-dependent channel and the multiplicative term $\exp(n\eta c(p))$ appears due to the change-of-measure we use to replace the measurement-dependent channel $P_{Y^n|X^n}^{\calA^n}$ with the measurement-independent channel $(P_{Y|X}^p)^n$.
 
We next provide a non-asymptotic converse bound to complement Theorem \ref{ach:fbl}. For simplicity, for any query $\calA\subseteq[0,1]^d$ and any $(x,y)\in\calX\times\calY$, we use $\imath_{\calA}(x,y)$ to denote $\imath_{|\calA|,|\calA|}(x,y)$.

\begin{theorem}
\label{fbl:converse}
Set $(n,\delta,\varepsilon)\in\bbN\times\bbR_+\times[0,1]$. Any $(n,\delta,\varepsilon)$-non-adaptive query procedure satisfies the following. For any $\beta\in(0,\frac{1-\varepsilon}{2})$ and any $\kappa\in(0,1-\varepsilon-2d\beta)$,
\begin{align}
-d\log\delta\leq -d\log\beta-\log\kappa+\sup_{\calA^n\subseteq[0,1]^{nd}}\sup\bigg\{t\Big|\Pr\Big\{\sum_{i\in[n]}\imath_{\calA_i}(X_i;Y_i)\leq t\bigg\}\leq \varepsilon+2d\beta+\kappa\Big\}\label{uppbound}.
\end{align}
\end{theorem}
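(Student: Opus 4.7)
The plan is to prove Theorem~\ref{fbl:converse} by a Polyanskiy--Poor--Verd\'u-style meta-converse: take the worst-case uniform prior on $\bS$, perform a change of measure against a product reference distribution under which $\bS$ and $Y^n$ are independent, and use a $\beta$-scale partition of $[0,1]^d$ together with a boundary argument to produce the $2d\beta$ slack and the $\beta^d$ factor that together yield the claim.

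First I would fix $f_{\bS}$ equal to the uniform density on $[0,1]^d$; since \eqref{def:excessresolution2} is a supremum over all pdfs, any $(n,\delta,\varepsilon)$-procedure then satisfies $\Pr\{\bS\in R(Y^n)\}\ge 1-\varepsilon$, where $R(\by):=\prod_{j=1}^d[g(\by)_j-\delta,g(\by)_j+\delta]\cap[0,1]^d$ is the decoding box (of volume at most $(2\delta)^d$). Introduce the product reference $Q(\bs,\by):=f_{\bS}(\bs)\prod_{i=1}^n P_Y^{\calA_i}(y_i)$, under which $\bS\perp Y^n$; a direct calculation yields
\[
\log\tfrac{dP}{dQ}(\bs,\by)=\sum_{i=1}^n\imath_{\calA_i}(X_i(\bs);y_i),\qquad X_i(\bs):=\bbo(\bs\in\calA_i),
\]
exactly the per-query information density in the statement. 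Partition $[0,1]^d$ into $M^d$ cubes $\{\calS_{\bw}\}_{\bw\in[M]^d}$ of side $\beta$ (with $M=\lfloor 1/\beta\rfloor$), let $\bW\in[M]^d$ denote the cube index of $\bS$ (uniform on $[M]^d$), and reduce the continuous success event $\{\bS\in R(Y^n)\}$ to the list event $\{\bW\in\calL(Y^n)\}$ with $\calL(\by):=\{\bw:\calS_{\bw}\cap R(\by)\ne\emptyset\}$. The bound $|[0,1]^d\setminus[\beta,1-\beta]^d|\le 2d\beta$ gives $\Pr_P\{\bW\in\calL(Y^n)\}\ge 1-\varepsilon-2d\beta$, while the list-size estimate $|\calL(\by)|\,\beta^d\le|R(\by)|+O(\beta^d)$ gives $\Pr_Q\{\bW\in\calL(Y^n)\}\le\beta^d$ up to constants.

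Applying the change-of-measure inequality $\Pr_P(\calE)\le\gamma\,\Pr_Q(\calE)+\Pr_P\{\log(dP/dQ)>\log\gamma\}$ to $\calE=\{\bW\in\calL(Y^n)\}$ then produces
\[
1-\varepsilon-2d\beta\le\gamma\beta^d+\Pr_P\Bigl\{\textstyle\sum_{i=1}^n\imath_{\calA_i}(X_i;Y_i)>\log\gamma\Bigr\}.
\]
Taking $\log\gamma$ just above the quantile $t^\ast:=\sup\{t:\Pr\{\sum_i\imath_{\calA_i}\le t\}\le\varepsilon+2d\beta+\kappa\}$ drops the tail strictly below $1-\varepsilon-2d\beta-\kappa$, so rearrangement gives $\kappa<\gamma\beta^d$; passing to the limit $\log\gamma\to t^{\ast+}$ and taking the supremum over $\calA^n$ yields $-d\log\beta\le t^\ast-\log\kappa$, which is the theorem's inequality after the identification $-d\log\delta\le-d\log\beta$ that holds on the nontrivial parameter range for the list-decoding reduction.

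The main obstacle will be the cube-averaging issue intrinsic to the list-decoding reduction: when one transfers the change of measure from the continuous joint $(\bS,Y^n)$ to the discrete joint $(\bW,Y^n)$, the induced log-likelihood ratio on the discrete problem is $\log$ of a cube-average of $\exp(\sum_i\imath_{\calA_i}(X_i(\bs);y_i))$ rather than the sum itself, and Jensen's inequality only gives the wrong-direction bound. I would bypass this by applying the change-of-measure inequality on the \emph{continuous} event throughout (where the LR identity holds verbatim) and invoking the $\beta$-partition only to bound the $Q$-probability side by $\beta^d$ via the list-size estimate on the decoding box, and the $P$-probability side by $1-\varepsilon-2d\beta$ via the boundary argument on $[\beta,1-\beta]^d$, so that the per-query information density $\sum_i\imath_{\calA_i}(X_i;Y_i)$ appearing in the quantile is preserved exactly.
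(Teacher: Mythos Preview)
Your proof has a genuine gap: the partition scale is wrong, and this causes the final inequality to bound the wrong quantity.

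You partition $[0,1]^d$ into cubes of side $\beta$, i.e., $M=\lfloor 1/\beta\rfloor$ cells per dimension. With that choice your $Q$-probability estimate $\Pr_Q\{\bW\in\calL(Y^n)\}\le C\beta^d$ requires the list $\calL(y)$ of $\beta$-cubes touching the $(2\delta)$-box $R(y)$ to have size $O(1)$, which forces $\delta\lesssim\beta$. But then your change-of-measure step yields $\kappa<\gamma\cdot C\beta^d$, hence $-d\log\beta\le t^*-\log\kappa+O(1)$: a bound on $-d\log\beta$, \emph{not} on $-d\log\delta$. Your attempted rescue, ``the identification $-d\log\delta\le-d\log\beta$,'' demands $\delta\ge\beta$, directly contradicting the regime $\delta\lesssim\beta$ just used. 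In the interesting case $\delta\ll\beta$ one has $-d\log\delta\ge-d\log\beta$, and your inequality says nothing about $\delta$. (Also, your $P$-side slack is misplaced: in the list formulation $\{\bS\in R(Y^n)\}\subset\{\bW\in\calL(Y^n)\}$ trivially, so $\Pr_P\{\bW\in\calL(Y^n)\}\ge1-\varepsilon$ with no $2d\beta$ needed.)

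The paper's argument partitions at the \emph{other} scale: cells of side $1/\tilM$ with $\tilM=\lfloor\beta/\delta\rfloor$, so the number of cells is $\tilM^d\approx(\beta/\delta)^d$. One quantizes \emph{both} $\bS$ and the estimate $\hat{\bS}$ at this scale to get $\bW,\hat{\bW}\in[\tilM]^d$ and shows $\Pr\{\hat{\bW}\neq\bW\}\le\varepsilon+2d\beta$: the $2d\beta$ is exactly the probability that some $S_i$ lies within $\delta$ of a cell boundary (so that $|\hat S_i-S_i|\le\delta$ yet $\hat W_i\neq W_i$), which is at most $2\delta\tilM\le2\beta$ per coordinate. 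This reduces the problem to channel coding with $\tilM^d$ equiprobable messages and average error $\le\varepsilon+2d\beta$, and the standard meta-converse (Tan, Prop.~4.4) then gives $\log\tilM^d\le t^*-\log\kappa$, i.e., $d\log(\beta/\delta)\le t^*-\log\kappa$, which is precisely $-d\log\delta\le-d\log\beta-\log\kappa+t^*$. The role of $\beta$ is thus to set the \emph{coarseness ratio} $\beta/\delta$ that becomes the message count, not the cell side length itself.
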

The proof of Theorem \ref{fbl:converse} is given in Appendix \ref{proof:converse}. The proof of Theorem \ref{fbl:converse} is decomposed into two steps: i) we use the result in \cite{kaspi2018searching} which states that the excess-resolution probability of any non-adaptive query procedure can be lower bounded by the error probability associated with channel coding over the measurement-dependent channel with uniform message distribution, minus a certain term depending on $\beta$; and ii) we apply the non-asymptotic converse bound for channel coding~\cite[Proposition 4.4]{TanBook} by realizing that, given a sequence of queries, the measurement-dependent channel is simply a time varying channel with deterministic states at each time point.

We remark that the non-asymptotic bounds in Theorems \ref{ach:fbl} and \ref{fbl:converse} hold for any number of queries and any measurement-dependent channels satisfying \eqref{assump:continuouschannel}. As we shall see in the next subsection, these non-asymptotic bounds lead to the second-order asymptotic result in Theorem \ref{result:second}, \blue{which provides an approximation to the finite blocklength fundamental limit $\delta^*(n,d,\varepsilon)$. The exact calculation of the upper bound in Theorem \ref{fbl:converse} is challenging. However, for $n$ sufficiently large, as demonstrated in the proof of Theorem \ref{result:second}, the supremum in \eqref{uppbound} can be achieved by queries $\calA^n$ where each query $\calA_i$ has the same size.}

\subsection{Second-Order Asymptotic Approximation}
In this subsection, we present the second-order asymptotic approximation to the achievable resolution $\delta^*(n,d,\varepsilon)$ of optimal non-adaptive query procedures after $n$ queries subject to a worst case excess-resolution probability of $\varepsilon\in[0,1)$.

Given measurement-dependent channels $\{P_{Y|X}^q\}_{q\in[0,1]}$, the channel ``capacity" is defined as
\begin{align}
C&:=\max_{q\in[0,1]} \mathbb{E}[\imath_{q,q}(X;Y)]\label{def:capacity},
\end{align}
where $(X,Y)\sim \mathrm{Bern}(q)\times P_{Y|X}^q$.

Let the capacity-achieving set $\calP_{\rm{ca}}$ be the set of optimizers achieving \eqref{def:capacity}. Then, for any $\varepsilon\in[0,1)$, define the following ``dispersion'' of the measurement-dependent channel
\begin{align}
V_\varepsilon
&:=\left\{
\begin{array}{cc}
\min_{q\in\calP_{\rm{ca}}}\mathrm{Var}[\imath_{q,q}(X;Y)]&\mathrm{if~}\varepsilon<0.5,\\
\max_{q\in\calP_{\rm{ca}}}\mathrm{Var}[\imath_{q,q}(X;Y)]&\mathrm{if~}\varepsilon\geq 0.5.
\end{array}
\right.
\label{def:veps}
\end{align}
The case of $\varepsilon<0.5$ will be the focus of the sequel of this paper. 
\begin{theorem}
\label{result:second}
Assume for any $q\in\calP_{\rm{ca}}$, the third absolute moment of $\imath_{q,q}(X;Y)$ is finite. For any $\varepsilon\in(0,1)$, the achievable resolution $\delta^*(n,d,\varepsilon)$ of optimal non-adaptive query procedures satisfies
\begin{align}
-\log\delta^*(n,d,\varepsilon)
&=\frac{1}{d}\Big(nC+\sqrt{nV_\varepsilon}\Phi^{-1}(\varepsilon)+O(\log n)\Big)\label{joint:search},
\end{align}
where the remainder term satisfies $-\frac{1}{2}\log n+O(1)\leq O(\log n)\leq\log n+O(1)$.
\end{theorem}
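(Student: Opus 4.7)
The theorem follows by applying the Berry--Esseen theorem to the non-asymptotic bounds of Theorems~\ref{ach:fbl} and~\ref{fbl:converse}, adapting the second-order asymptotic analysis of finite-blocklength channel coding to the present setting of measurement-dependent channels and multidimensional targets.

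For achievability, fix a capacity-achieving parameter $p^* \in \calP_{\rm{ca}}$ attaining the minimum in \eqref{def:veps}, so that $\var[\imath_{p^*,p^*}(X;Y)] = V_\varepsilon$, and apply Theorem~\ref{ach:fbl} with $p=p^*$ and $\eta=n^{-3/2}$. Since $M^d$ will be exponentially large in $n$, the atypicality term $4n\exp(-2M^d\eta^2)$ in \eqref{com:fbl} is super-polynomially small, while the multiplicative factor $\exp(n\eta c(p^*)) = 1 + O(n^{-1/2})$ only inflates the remaining random-coding-union expectation by a negligible amount. That expectation is then bounded by the standard threshold decomposition from finite-blocklength channel coding (cf.~\cite[Lemma~46]{polyanskiy2010finite}): splitting according to whether $\imath_p(X^n;Y^n)$ exceeds $\log M^d + \tfrac{1}{2}\log n$, using the trivial bound on one side and Berry--Esseen on the other, applied to the i.i.d.\ sum $\imath_p(X^n;Y^n)$ with mean $nC$ and variance $nV_\varepsilon$. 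Choosing
\[
\log M = \frac{1}{d}\left(nC + \sqrt{nV_\varepsilon}\,\Phi^{-1}(\varepsilon) - \tfrac{1}{2}\log n - O(1)\right)
\]
ensures the total excess-resolution probability is at most $\varepsilon$, yielding the lower bound on $-\log\delta^*(n,d,\varepsilon)$.

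For the converse, apply Theorem~\ref{fbl:converse} with $\beta = \kappa = (\log n)/\sqrt{n}$, so that $-d\log\beta - \log\kappa = O(\log n)$ and $2d\beta+\kappa = o(1)$. For arbitrary queries of sizes $q_i := |\calA_i|$, the summands $\imath_{\calA_i}(X_i;Y_i)$ are independent with means $I(q_i) := \mathbb{E}[\imath_{q_i,q_i}(X;Y)] \le C$ and variances $V(q_i) := \var[\imath_{q_i,q_i}(X;Y)]$. The Berry--Esseen theorem then yields
\[
\sup\left\{t : \Pr\Big\{\sum_{i\in[n]} \imath_{\calA_i}(X_i;Y_i) \le t\Big\} \le \varepsilon + 2d\beta + \kappa\right\} \le \sum_{i\in[n]} I(q_i) + \sqrt{\sum_{i\in[n]} V(q_i)}\;\Phi^{-1}(\varepsilon + 2d\beta + \kappa) + O(1),
\]
and a Taylor expansion of $\Phi^{-1}$ at $\varepsilon$ absorbs the perturbation $2d\beta+\kappa$ into the $O(\log n)$ remainder. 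Since $I(q) \le C$ with equality on $\calP_{\rm{ca}}$ and $\Phi^{-1}(\varepsilon) < 0$ when $\varepsilon < 0.5$, the supremum over $\calA^n$ is attained (to second order) by uniform capacity-achieving queries with variance $V_\varepsilon$, giving the matching converse.

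The main technical obstacle lies in justifying this supremum argument in the converse: a priori, a suboptimal $q_i$ with $I(q_i) < C$ might have smaller $V(q_i)$ that, combined with the negative $\Phi^{-1}(\varepsilon)$, produces a larger quantile $t$. One must show that any $\Omega(1)$ deviation of $q_i$ from $\calP_{\rm{ca}}$ costs $\Omega(1)$ in $I(q_i)$, which on the $\sqrt{n}$ scale dominates any improvement in $\sqrt{V(q_i)}$, while deviations of smaller order are controlled by continuity of $V(\cdot)$ near $\calP_{\rm{ca}}$; this is the analogue of the standard mean--variance extremality argument in second-order converse proofs for channel coding, adapted to the query-size parameterization. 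A secondary subtlety is that Algorithm~\ref{procedure:nonadapt} decodes to the centre of a sub-cube, so the event $\{|\hatS_i - S_i| > 1/M\}$ is stronger than misidentification of the sub-cube only when $\bS$ lies near a boundary; this boundary effect is absorbed into the margin parameter $\beta$ already accounted for in the $O(\log n)$ remainder.
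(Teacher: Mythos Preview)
Your proposal is correct and follows essentially the same route as the paper's proof: both derive the achievability from Theorem~\ref{ach:fbl} via a threshold decomposition of the RCU bound plus Berry--Esseen (your choice $\eta=n^{-3/2}$ differs from the paper's $\eta=\sqrt{d\log M/(2M^d)}$ but both make the atypicality and change-of-measure terms negligible), and both derive the converse from Theorem~\ref{fbl:converse} via Berry--Esseen, Taylor expansion of $\Phi^{-1}$, and the first-order/second-order tradeoff argument for the supremum over $\calA^n$ that you correctly flag (and which the paper handles only informally, by the one-line remark that ``$nC_{\calA^n}$ is the dominant term''). One minor correction: the boundary subtlety you mention at the end belongs entirely to the \emph{converse} (where it is indeed absorbed into the margin parameter $\beta$ of Theorem~\ref{fbl:converse}), not to the achievability, since correct sub-cube identification in Algorithm~\ref{procedure:nonadapt} already yields $|\hatS_i-S_i|\le 1/(2M)<1/M$ without any boundary loss.
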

The proof of Theorem \ref{result:second} is provided in Appendix \ref{proof:second}. \blue{In the achievability proof, we make use of the non-adaptive query procedure in Algorithm \ref{procedure:nonadapt} and thus prove its second-order optimality.}

We make the following remarks. Firstly, Theorem \ref{result:second} implies a phase transition analogous to those found in group testing in machine learning~\cite{scarlett2017nips,Scarlett:2016:PTG:2884435.2884439}, which we interpret in Figure \ref{pt}.
\begin{figure}[tb]
\centering
\includegraphics[width=.5\columnwidth]{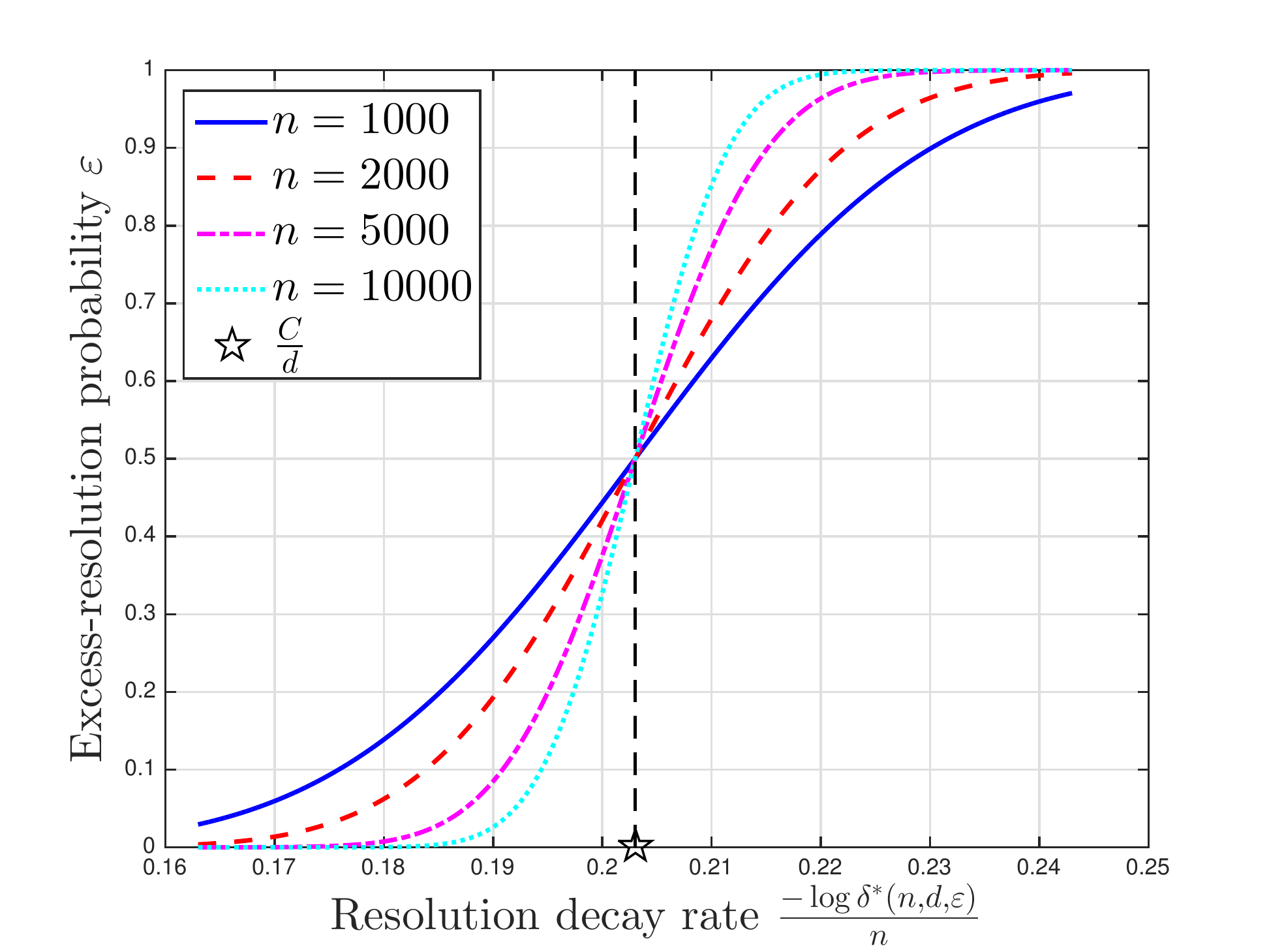}
\caption{Illustration of the phase transition of non-adaptive query procedures for the case of $d=2$ when the noisy channel is a measurement-dependent BSC with parameter $\nu=0.2$. On the one hand, when the resolution decay rate is strictly greater than the capacity $\frac{C}{d}$, then as the number of the queries $n\to\infty$, the excess-resolution probability tends to one. On the other hand, when the resolution decay rate is strictly less than the capacity $\frac{C}{d}$, then the excess-resolution probability vanishes as the number of the queries increases.}
\label{pt}
\end{figure}
\blue{We remark that this phase transition is a direct result of the second-order asymptotic analysis and does not follow from a first-order asymptotic analysis, e.g., that developed in \cite[Theorem 1]{kaspi2018searching}.} As a corollary of Theorem \ref{result:second}, for any $\varepsilon\in(0,1)$,
\begin{align}
\lim_{n\to\infty}-\frac{1}{n}\log \delta^*(n,d,\varepsilon)=\frac{C}{d}\label{phasetransition}.
\end{align}
which takes the form of a strong converse~\cite{zhou2016cilossy,wei2009strong,liu2016brascamp} to the channel coding theorem. The result in \eqref{phasetransition} indicates that tolerating a smaller, or even vanishing excess-resolution probability, does not improve the asymptotic achievable resolution decay rate of an optimal non-adaptive query procedure.

Secondly, Theorem \ref{result:second} refines \cite[Theorem 1]{kaspi2018searching} in several directions. First, Theorem \ref{result:second} is a  second-order asymptotic result which provides an approximation for the finite blocklength performance while \cite[Theorem 1]{kaspi2018searching} only characterizes the asymptotic resolution decay rate with vanishing worst-case excess-resolution probability for a one-dimensional target, i.e., \blue{$\lim_{\varepsilon\to 0}\lim_{n\to\infty}(-\log\delta^*(n,1,\varepsilon))$}. Second, our results hold for any measurement-dependent channel satisfying \eqref{assump:continuouschannel} while \cite[Theorem 1]{kaspi2018searching} only addresses the measurement-dependent BSC.

\blue{Thirdly, the dominant event which leads to an excess-resolution in noisy 20 questions estimation is the atypicality of the information density $\imath_p(X^n;Y^n)$ (cf. \eqref{def:ixnyn}). To characterize the probability of this event, we make use of the Berry--Esseen theorem and show that the mean $C$ and the variance $\rmV_\varepsilon$ of the information density $\imath_{q,q}(X;Y)$ play critical roles.}

Fourthly, we remark that any number $s\in[0,1]$ has the binary expansion \blue{$(b_0.b_1b_2\ldots)$.} We can thus interpret Theorem \ref{result:second} as follows: using an optimal non-adaptive query procedure, after $n$ queries, with probability of at least $1-\varepsilon$, one can extract the first $\lfloor-\log_2\delta^*(n,d,\varepsilon)\rfloor$ bits of the binary expansion of each dimension of the target variable $\bS=(S_1,\ldots,S_d)$.

{\color{blue}
A final remark is that separate estimation over each dimension of a multi-dimensional target variable using the special $d=1$ case in Algorithm \ref{procedure:nonadapt} is not second-order optimal although it is optimal asymptotically. Explanations are provided as follows.

From the asymptotic result in \eqref{phasetransition}, we observe that \emph{asymptotically,} it is in fact optimal to allocate roughly $\frac{n}{d}$ queries to each dimension using the special $d=1$ case of Algorithm \ref{procedure:nonadapt} when searching for a $d$-dimensional target variable. Such a decoupled searching algorithm achieves the asymptotic optimal resolution decay rate for non-adaptive query procedures. However, from a second-order asymptotic point of view, allocating equal number of queries to search over each dimension is \emph{not} optimal. Supposed that over each dimension $i\in[d]$, we allocate $\frac{n}{d}$ queries to search for the value of $S_i$ and tolerate excess-resolution probabilities $\varepsilon_i$, similarly to the achievability part of Theorem \ref{result:second}, we find that the achievable resolution $\delta_{\rm{sep}}(n,d,\varepsilon)$ satisfies
\begin{align}
-\log\delta_{\rm{sep}}(n,d,\varepsilon)
&=\max_{(\varepsilon_1,\ldots,\varepsilon_d):\sum_{i\in[d]}\varepsilon_i\leq \varepsilon}\min_{i\in[d]}\left\{\frac{nC}{d}+\sqrt{\frac{n\rmV_{\varepsilon_i}}{d}}\Phi^{-1}(\varepsilon_i)\right\}+O(\log n)\label{performance:seperate1}\\
&=\frac{nC}{d}+\sqrt{\frac{n\rmV_{\frac{\varepsilon}{d}}}{d}}\Phi^{-1}\left(\frac{\varepsilon}{d}\right)+O(\log n)\label{performance:seperate},
\end{align}
where \eqref{performance:seperate} follows since i) for any $\varepsilon\in(0,1)$ and any
$d\geq 2$, the minimization in \eqref{performance:seperate1} is achieved by some $i\in[d]$ such that $\varepsilon_i<0.5$ because $\Phi^{-1}(a)$ is decreasing in $a\in[0,1]$ and $\Phi^{-1}(a)< 0$ for any $a< 0.5$, and ii) for any $\varepsilon\in[0,1]$, the maximization is achieved by a vector $(\varepsilon_1,\ldots,\varepsilon_d)$ where $\varepsilon_i=\frac{\varepsilon}{d}$ for all $i\in[d]$.

Note that for any $\varepsilon\geq 0.5$ and any $d\geq 2$, the right hand side of \eqref{performance:seperate} is no greater than $\frac{nC}{d}+O(\log n)$ since $\Phi^{-1}(\frac{\varepsilon}{d})\leq \Phi^{-1}(0.5)=0$. However, the right hand side of \eqref{joint:search} is greater than $\frac{nC}{d}+O(\log n)$ since both $\Phi^{-1}(\varepsilon)$ and $V_\varepsilon$ are positive when $\varepsilon>0.5$. Furthermore, when $\varepsilon<0.5$, we have
\begin{align}
\frac{nC}{d}+\sqrt{\frac{n\rmV_{\frac{\varepsilon}{d}}}{d}}\Phi^{-1}\left(\frac{\varepsilon}{d}\right)
&=\frac{nC}{d}+\sqrt{\frac{n\rmV_{\varepsilon}}{d}}\Phi^{-1}\left(\frac{\varepsilon}{d}\right)\label{decreasePhi-1}\\
&<\frac{nC}{d}+\sqrt{\frac{n\rmV_{\varepsilon}}{d}}\Phi^{-1}\left(\varepsilon\right)\label{phi-1non}\\
&<\frac{nC}{d}+\sqrt{\frac{n\rmV_{\varepsilon}}{d^2}}\Phi^{-1}(\varepsilon)\label{usepositivePhiinv}\\
&=\frac{1}{d}\Big(nC+\sqrt{nV_\varepsilon}\Phi^{-1}(\varepsilon)\Big)
\end{align}
where \eqref{decreasePhi-1} follows since $V_\varepsilon$ (cf. \eqref{def:veps}) takes the same value for any $\varepsilon\in[0,0.5)$, \eqref{phi-1non} follows since $\Phi^{-1}(a)$
 in non-decreasing in $a\in[0,1]$ and $\varepsilon>\frac{\varepsilon}{2}\geq \frac{\varepsilon}{d}$, \eqref{usepositivePhiinv} follows since $d\geq 2$ and $\Phi^{-1}(\varepsilon)<0$ for any $\varepsilon<0.5$. Therefore, for any $d\geq 2$, the result in \eqref{performance:seperate} is always smaller than the result in \eqref{joint:search}. This implies that separate searching over each dimension of a multidimensional target variable is in fact \emph{not} optimal. This is verified by a numerical simulation in Section \ref{sec:numerical} (Figure \ref{sim_non_adap_sep}).}

In the following, we specialize Theorem \ref{result:second} to different measurement-dependent channels.
\subsection{Case of Measurement-Dependent BSC}
 We first consider a measurement-dependent BSC. Given any $\nu\in(0,1]$ and any $q\in[0,1]$, let $\beta(\nu,q):=q(1-\nu q)+(1-q)\nu q$. For any $(x,y)\in\{0,1\}^2$, the information density of a measurement-dependent BSC with parameter $\nu$ is 
\begin{align}
\imath_{q,\nu q}(x;y)
\nn&=\bbo(x\neq y)\log(\nu q)+\bbo(x=y)\log(1-\nu q)-\bbo(y=1)\log(\beta(\nu,q))\\*
&\qquad-\bbo(y=0)\log(1-\beta(\nu,q)).
\end{align}
The mean and variance of the information density are respectively
\begin{align}
C(\nu,q)&:=\mathbb{E}[\imath_{q,\nu q}(X;Y)]=h_\rmb(\beta(\nu,q))-h_\rmb(\nu q)\label{def:Cnuq},\\
V(\nu,q)&:=\mathrm{Var}[\imath_{q,\nu q}(X;Y)],
\end{align}
where $h_\rmb(p)=-p\log(p)-(1-p)\log(1-p)$ is the binary entropy function. The capacity of the measurement-dependent BSC with parameter $\nu$ is thus
\begin{align}
C(\nu)=\max_{q\in[0,1]}C(\nu,q)\label{def:cnu},
\end{align}
\begin{figure}[tb]
\centering
\includegraphics[width=.5\columnwidth]{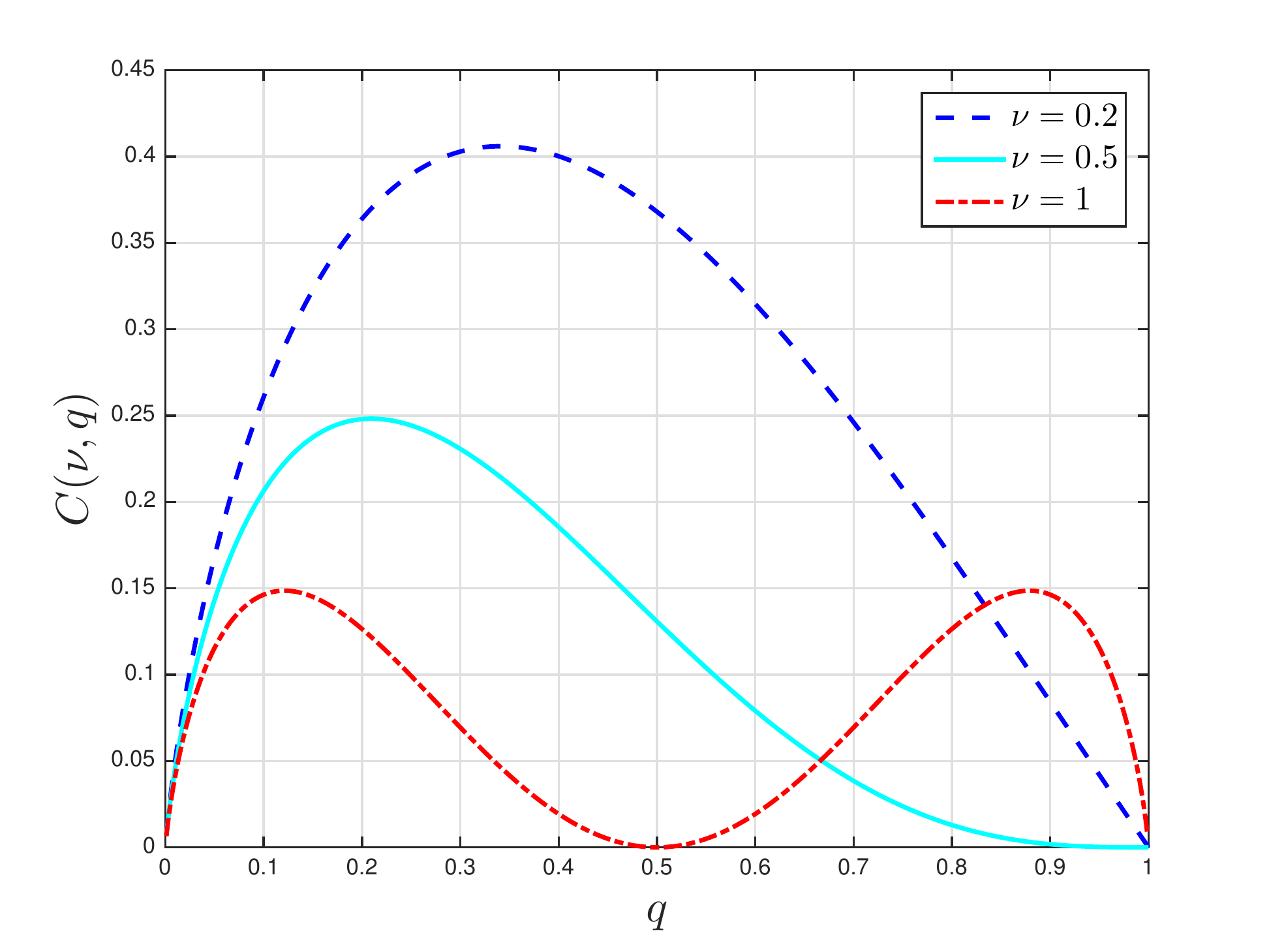}
\caption{Plot of $C(\nu,q)$, the mean of the mutual information density, of a measurement-dependent BSC for various values of $\nu$ and $q\in[0,1]$. For a given $\nu$, the maximum value of $C(\nu,q)$ over $q\in[0,1]$ is the capacity of the measurement-dependent BSC with parameter $\nu$ and the values of $q$ achieving this maximum consists of the set of capacity-achieving parameters $\calP_{\rm{ca}}$. For $\nu=0.2$ and $\nu=0.5$, $\calP_{\rm{ca}}$ is singleton and for $\nu=1$, $\calP_{\rm{ca}}$ contains two elements.}
\label{bsc_capacity}
\end{figure}

Depending on the value of $\nu\in(0,1]$, the set of capacity-achieving parameters $\calP_{\rm{ca}}$ may or may not be a singleton (cf. Figure \ref{bsc_capacity}). In particular, for any $\nu\in(0,1)$, the capacity-achieving parameter $q^*$ is unique. When $\nu=1$, there are two capacity-achieving parameters $q_1^*$ and $q_2^*$ where $q_1^*+q_2^*=1$. It can be verified easily that $V(1,q_1^*)=V(1,1-q_1^*)$. As a result, for any capacity-achieving parameter $q^*$ of the measurement-dependent BSC with parameter $\nu\in(0,1]$, the dispersion of the channel is
\begin{align}
V(\nu)&=V(\nu,q^*).
\end{align}

\begin{corollary}
\label{bsc:non-adaptive}
Let $\nu\in(0,1)$. If the channel from the oracle to the player is a measurement-dependent BSC with parameter $\nu$, then Theorem \ref{result:second} holds with $C=C(\nu)$ and $V_\varepsilon=V(\nu)$ for any $\varepsilon\in(0,1)$.
\end{corollary}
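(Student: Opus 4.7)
The plan is to show Corollary \ref{bsc:non-adaptive} as a direct application of Theorem \ref{result:second}. To do so, I need to (a) verify that a measurement-dependent BSC with parameter $\nu\in(0,1)$ satisfies the continuity assumption \eqref{assump:continuouschannel}, (b) verify the finiteness of the third absolute moment of $\imath_{q,q}(X;Y)$ on the capacity-achieving set, and (c) identify the quantities $C$ and $V_\varepsilon$ appearing in Theorem \ref{result:second} with $C(\nu)$ and $V(\nu)$ respectively. Once these are in place, Theorem \ref{result:second} applies verbatim.

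For step (a), fix $q\in[0,1]$ and let $\xi\in(0,\min(q,1-q))$. Since the channel depends on $\calA$ only through $|\calA|$, it suffices to compare $P_{Y|X}^{q}$ with $P_{Y|X}^{q\pm\xi}$. The four log-ratios $\log\tfrac{\nu q}{\nu(q\pm\xi)}$ and $\log\tfrac{1-\nu q}{1-\nu(q\pm\xi)}$ can be bounded via the mean-value theorem by a constant multiple of $\xi$, where the constant depends only on $\nu$ and $q$ (using $\nu<1$ and $q\in(0,1)$ to keep all arguments bounded away from $0$ and $1$). This yields an explicit $c(q)$ verifying \eqref{assump:continuouschannel}.

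For step (b), since $\nu\in(0,1)$, the capacity-achieving parameter $q^*$ lies in the open interval $(0,1)$ (this is where I need a small argument: $C(\nu,q)=h_\rmb(\beta(\nu,q))-h_\rmb(\nu q)$ vanishes at $q=0$ and takes a strictly positive value in the interior since the channel is not useless, so the maximizer is interior). At such $q^*$, the channel transition probabilities are bounded away from $0$ and $1$, making $\imath_{q^*,q^*}(X;Y)$ a bounded random variable. In particular, its third absolute moment is finite, so the hypothesis of Theorem \ref{result:second} is satisfied.

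For step (c), the identification $C=C(\nu)$ follows by comparing \eqref{def:capacity} with \eqref{def:cnu}. The main (and really only) obstacle is showing that for $\nu\in(0,1)$, the maximizer of $C(\nu,\cdot)$ is unique, so that $\calP_{\rm{ca}}=\{q^*\}$ is a singleton and $V_\varepsilon=V(\nu,q^*)=V(\nu)$ for both cases in the definition \eqref{def:veps}. I would establish uniqueness by computing $\tfrac{\partial^2}{\partial q^2}C(\nu,q)$ and verifying strict concavity on $(0,1)$ when $\nu<1$, exploiting the fact that $\beta(\nu,q)=\nu q+q(1-2\nu q)$ is an affine-plus-strictly-concave function of $q$ for $\nu<1$ while $\nu q$ is linear; alternatively, one may observe that the critical point equation $h_\rmb'(\beta(\nu,q))\tfrac{\partial\beta}{\partial q}=\nu h_\rmb'(\nu q)$ has a unique solution in $(0,1)$ for $\nu\in(0,1)$, contrasting with the degenerate case $\nu=1$ where symmetry produces the two solutions $q_1^*$ and $q_2^*=1-q_1^*$ described in the text. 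With uniqueness established, both branches of \eqref{def:veps} collapse to $V(\nu)$, and invoking Theorem \ref{result:second} completes the proof.
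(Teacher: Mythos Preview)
Your proposal is correct and follows the same route the paper (implicitly) takes: the paper states Corollary~\ref{bsc:non-adaptive} as a direct specialization of Theorem~\ref{result:second} after computing $C(\nu,q)$ and $V(\nu,q)$, and simply asserts uniqueness of $q^*$ for $\nu\in(0,1)$ on the basis of Figure~\ref{bsc_capacity} without an analytic proof. Your write-up is in fact more careful than the paper's, since you explicitly check \eqref{assump:continuouschannel} and the third-moment hypothesis.

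One small caution on step (c): the strict-concavity sketch is not quite airtight as written. The composition $h_\rmb\!\circ\!\beta(\nu,\cdot)$ need not be concave merely because both $h_\rmb$ and $\beta(\nu,\cdot)$ are concave, since $h_\rmb$ is not monotone on $(0,1)$; and subtracting the concave term $h_\rmb(\nu q)$ pushes in the wrong direction. So if you pursue that route you will need a more detailed second-derivative computation rather than a structural argument. Your alternative via the first-order condition is the cleaner path, and it matches the level of rigor the paper itself adopts (an assertion backed by the plot). Either way, once uniqueness of $q^*$ is in hand, the identification $V_\varepsilon=V(\nu,q^*)=V(\nu)$ for all $\varepsilon\in(0,1)$ is immediate from \eqref{def:veps}, and Theorem~\ref{result:second} gives the corollary.
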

We make the following observations. Firstly, if we let $\nu=1$ and take $n\to\infty$, then for any $\varepsilon\in(0,1)$,
\begin{align}
\lim_{n\to\infty}\frac{-\log\delta^*(n,d,\varepsilon)}{n}=\frac{\max_{q\in[0,1]} \big(h_\rmb(\beta(1,q))-h_\rmb(q)\big)}{d}.
\end{align}
This is a strengthened version of \cite[Theorem 1]{kaspi2018searching} with strong converse.

Secondly, when one considers the measurement-independent BSC with parameter $\nu\in(0,1)$, then it can be shown that the achievable resolution $\delta^*_{\rm{mi}}(n,d,\varepsilon)$ of optimal non-adaptive query procedures satisfies
\begin{align}
-d\log_2 \delta^*_{\rm{mi}}(n,d,\varepsilon)=n(1-h_\rmb(\nu))+\sqrt{n\nu(1-\nu)}\log_2\frac{1-\nu}{\nu}\Phi^{-1}(\varepsilon)+O(\log n).
\end{align}
\blue{To compare the performances of optimal non-adaptive query procedures under measurement-dependent and measurement-independent channels respectively, we plot in Figure \ref{bsc_fbl} the second-order approximation to the per query average number of bits, specifically, we compare $\frac{-\log_2 \delta^*(n,2,\varepsilon)}{n}$ and $\frac{-\log_2 \delta^*_{\rm{mi}}(n,2,\varepsilon)}{n}$ for $\varepsilon=0.001$ and different values of $\nu$ and $n$.}

\blue{When $\nu<0.5$, the optimal query procedure under a measurement-dependent channel achieves a higher resolution than its counterpart in the measurement-independent case.} The intuition is that the probability of receiving wrong answers in the measurement-dependent channel is always smaller compared with the measurement-independent channel with the same parameter. However, when $\nu>0.5$, we find that the relative performances can be reversed. The reasons for this phenomenon are two fold: i) BSC is a symmetric channel, thus under the measurement-independent setting, having a BSC with crossover probability $\nu>0.5$ is equivalent to having a BSC with parameter $1-\nu<0.5$ since one can easily flip all bits; ii) under the measurement-dependent setting, since the probability of receiving wrong answers depends on the size of the query, thus the symmetric nature of BSC is lost.

\begin{figure}[tb]
\centering
\includegraphics[width=.5\columnwidth]{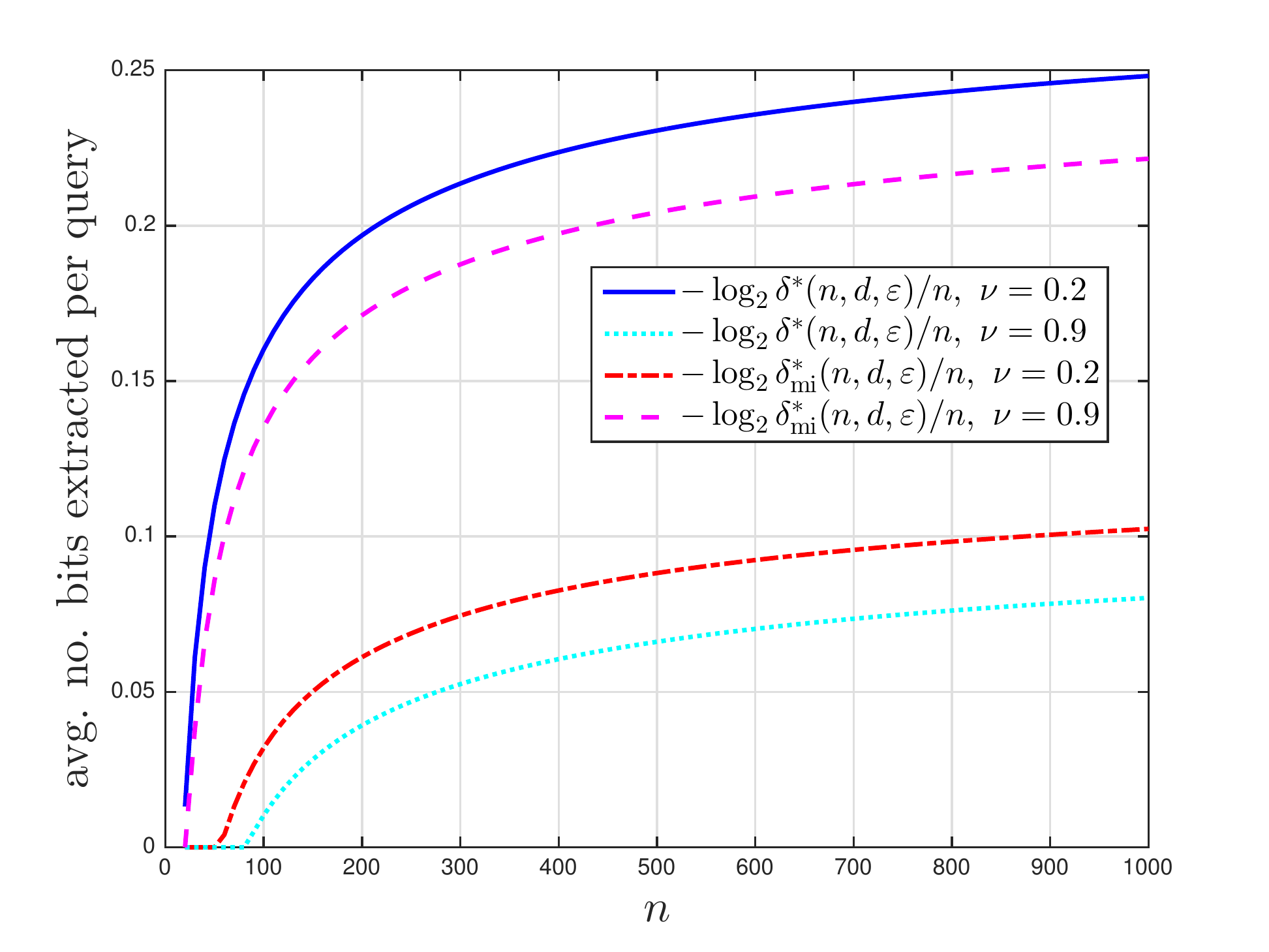}
\caption{\blue{The second-order asymptotic approximation} to the average number of bits extracted per query using an optimal non-adaptive query procedure for both measurement-dependent ($-\log_2\delta^*(n,d,\varepsilon)/n$) and measurement-independent ($-\log_2\delta_{\rm{mi}}^*(n,d,\varepsilon)/n$) versions of the BSC for different parameters $\nu$. Here we consider the case of $d=2$ and $\varepsilon=0.001$.}
\label{bsc_fbl}
\end{figure}

\subsection{Case of Measurement-Dependent BEC}

We next consider a measurement-dependent BEC. Given any $\tau\in[0,1]$ and any $q\in(0,1)$, for any $(x,y)\in\{0,1\}\times\{0,1,\rme\}$, the information density for a measurement-dependent BEC with parameter $\tau$ is
\begin{align}
\imath_{q,q \tau}(x;y)
&=\bbo(y=x)\log(1-q\tau)-\bbo(y=1)\log(q(1-q\tau))-\bbo(y=0)\log((1-q)(1-q\tau)).
\end{align}
The mean and variance of the information density are respectively
\begin{align}
C(\tau,q)
&:=\mathbb{E}[\imath_{q,q \tau}(X;Y)]=(1-q\tau)h_\rmb(q),\\
V(\tau,q)
\nn&:=\mathrm{Var}[\imath_{q,q\tau}(X;Y)]=(1-q\tau)\Big(h_\rmb(q)\log(1-q\tau)+q\log q\log(q(1-q\tau))\\*
&\qquad+(1-q)\log(1-q)\log((1-q)(1-q\tau))\Big)-(1-q\tau)^2h_\rmb(q)^2.
\end{align}

The capacity of the measurement-dependent BEC with parameter $\tau\in[0,1]$ is given by
\begin{align}
C(\tau)=\max_{q\in[0,1]}C(\tau,q)=\max_{q\in[0,0.5]}(1-q\tau)h_\rmb(q),
\end{align}
where the second equality follows since for any $\tau$, $C(\tau,q)$ is decreasing in $q\in[0.5,1]$. We plot $C(\tau,q)$ for different values of $\tau$ in Figure \ref{bec_capacity}. It can be verified that the capacity-achieving parameter for the measurement-dependent BEC is unique and we denote it by $q^*$. Thus, the dispersion of the channel is $V(\tau,q^*)$.

\begin{figure}[tb]
\centering
\includegraphics[width=.5\columnwidth]{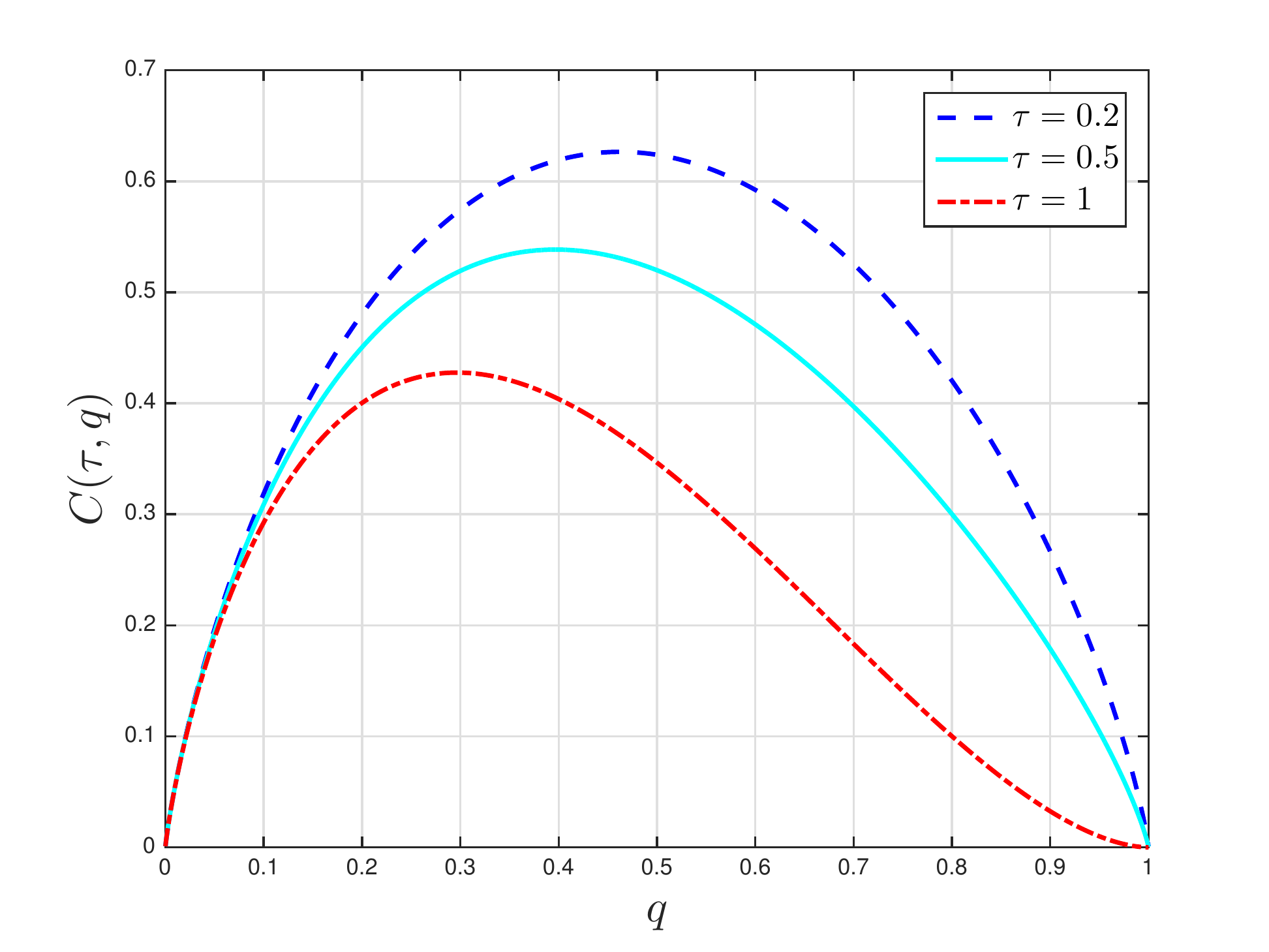}
\caption{Plot of $C(\tau,q)$ for the measurement-dependent BEC with parameter $\tau$ for $q\in[0,1]$. For each given $\tau$, the maximum value of $C(\tau,q)$ over $q\in[0,1]$ is the capacity of the measurement-dependent BEC. Note that the capacity-achieving parameter $q^*$ for measurement-dependent BEC is unique for any $\tau\in(0,1]$.}
\label{bec_capacity}
\end{figure}

We obtain the following:
\begin{corollary}
\label{bec:non-adaptive}
Let $\tau\in[0,1]$. If the channel from the oracle to the player is a measurement-dependent BEC with parameter $\tau$, then Theorem \ref{result:second} holds with $C=C(\tau)$ and $V_\varepsilon=V(\tau,q^*)$ for any $\varepsilon\in[0,1)$.
\end{corollary}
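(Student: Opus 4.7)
The plan is to obtain Corollary~\ref{bec:non-adaptive} as a direct specialization of Theorem~\ref{result:second} to the measurement-dependent BEC. To that end, I will verify that the measurement-dependent BEC satisfies the two standing hypotheses of the general theorem, namely the continuity condition \eqref{assump:continuouschannel} on the channel family $\{P_{Y|X}^q\}_{q\in[0,1]}$ and the finiteness of the third absolute moment of $\imath_{q,q}(X;Y)$ at every capacity-achieving input, and then identify the ``capacity'' and ``dispersion'' quantities $C$ and $V_\varepsilon$ defined in \eqref{def:capacity}--\eqref{def:veps} with $C(\tau)$ and $V(\tau,q^*)$, respectively.

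For the continuity condition, I would work directly from the definition of the measurement-dependent BEC. Writing $P_{Y|X}^q(x|x)=1-q\tau$ and $P_{Y|X}^q(\rme|x)=q\tau$, the only $q$-dependent log-ratios to control are $\log\frac{1-q\tau}{1-(q\pm\xi)\tau}$ and $\log\frac{q\tau}{(q\pm\xi)\tau}$. A mean value theorem argument bounds both by $c(q)\xi$ with $c(q)$ depending only on $q$ through $\tau$ and $\min\{q\tau, 1-q\tau\}$; since queries of positive volume are used in the proofs of Theorems~\ref{ach:fbl} and~\ref{result:second}, these suffice. The finite third moment condition is immediate: because $\calX$ and $\calY$ are finite and $q\tau$, $1-q\tau$ are bounded strictly between $0$ and $1$ at any $q\in(0,1)$, the random variable $\imath_{q,q\tau}(X;Y)$ is bounded, so all absolute moments are finite.

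The main substantive step is the identification of $\calP_{\rm ca}$ and the dispersion. I would first restrict the maximization defining $C(\tau)$ to $q\in[0,\tfrac12]$ as noted after the definition of $C(\tau)$, since $(1-q\tau)h_\rmb(q)$ strictly decreases on $[\tfrac12,1]$. I would then argue that $f(q)=(1-q\tau)h_\rmb(q)$ is strictly concave on $(0,\tfrac12]$: differentiating twice gives $f''(q)=-2\tau h_\rmb'(q)-(1-q\tau)/(q(1-q))$, and both terms are negative on $(0,\tfrac12]$ because $h_\rmb'(q)=\log\frac{1-q}{q}\ge 0$ there and $1-q\tau>0$. Hence $f$ admits a unique maximizer $q^*\in(0,\tfrac12]$, so $\calP_{\rm ca}=\{q^*\}$ is a singleton. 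Consequently the $\min$ and $\max$ in \eqref{def:veps} coincide and $V_\varepsilon=V(\tau,q^*)$ for every $\varepsilon\in[0,1)$, independently of whether $\varepsilon$ is above or below $\tfrac12$.

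With the hypotheses checked and $C$, $V_\varepsilon$ identified, the conclusion follows by invoking Theorem~\ref{result:second}. The only step that requires real care is the strict concavity (equivalently, the uniqueness) of the capacity-achieving input, since the entire force of the corollary lies in being able to write $V_\varepsilon$ as a single value rather than a min or max over $\calP_{\rm ca}$; the remaining verifications are essentially bookkeeping.
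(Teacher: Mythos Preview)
Your proposal is correct and follows the same route as the paper: the corollary is obtained by direct specialization of Theorem~\ref{result:second}, and the only nontrivial point is the uniqueness of the capacity-achieving parameter $q^*$, which forces $\calP_{\rm ca}$ to be a singleton and hence collapses the min/max in \eqref{def:veps} to the single value $V(\tau,q^*)$. The paper in fact gives no proof beyond asserting that ``it can be verified that the capacity-achieving parameter for the measurement-dependent BEC is unique,'' so your strict-concavity computation for $f(q)=(1-q\tau)h_\rmb(q)$ on $(0,\tfrac12]$ together with the monotonicity on $[\tfrac12,1]$ actually supplies the missing justification rather than deviating from the intended argument.
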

\blue{The remarks we made for Corollary \ref{bsc:non-adaptive} apply equally to Corollary \ref{bec:non-adaptive}, but additional properties are worthwhile to mention.} Firstly, if one considers a measurement-independent BEC with parameter $\tau\in[0,1]$, then the achievable resolution $\delta^*_{\rm{mi}}(n,d,\varepsilon)$ of optimal non-adaptive query procedures satisfies
\begin{align}
-d\log_2 \delta^*_{\rm{mi}}(n,d,\varepsilon)=n(1-\tau)+\sqrt{n\tau(1-\tau)}\Phi^{-1}(\varepsilon)+O(\log n).
\end{align}
We can then compare the performances of optimal query procedures for measurement-dependent and measurement-independent channels.  \blue{Accordingly, in Figure \ref{bec_fbl}, we plot the second-order approximation to both $-\log_2 \delta^*(n,d,\varepsilon)/n$ and $-\log_2 \delta^*_{\rm{mi}}(n,d,\varepsilon)/n$ for $d=2$, $\varepsilon=0.001$ and various values of $\nu$, where again we ignore the remainder $O(\log n)$. As can be observed in Figure \ref{bec_fbl}, for the considered cases, optimal querying with a measurement-dependent channel achieves a higher resolution than its counterpart with a measurement-independent channel.} The intuition is that in the measurement-dependent channel, the probability of erasure is usually smaller than the probability of erasure in a measurement-independent channel. In particular, when $\tau=1$, the measurement-independent channel is degraded by noise and thus no useful information can be obtained from the measurement-independent channel output. In contrast, in the measurement-dependent case, optimal non-adaptive querying can still accurately extract a significant number of bits in the binary expansion of the target variable.

Secondly, if the channel is a noiseless (i.e., $\tau=0$), the achievable resolution of optimal non-adaptive query procedures satisfies
\begin{align}
-d\log_2 \delta^*(n,d,\varepsilon)=n+O(\log n).
\end{align}
Note that, interestingly, for the noiseless 20 questions problem, the achievable resolution of optimal non-adaptive querying does not depend on the target excess-resolution probability $\varepsilon\in[0,1)$ for any number of queries $n$. This is in contrast to the noisy 20 questions problem where a similar phenomenon occurs only when $n\to\infty$, c.f. \eqref{phasetransition}. The implication is that that in the noiseless 20 questions problem, for any number of the queries $n\in\bbN$, the achievable resolution of optimal non-adaptive query procedures cannot be improved even if one tolerates a larger excess-resolution probability $\varepsilon$.

\begin{figure}[tb]
\centering
\includegraphics[width=.5\columnwidth]{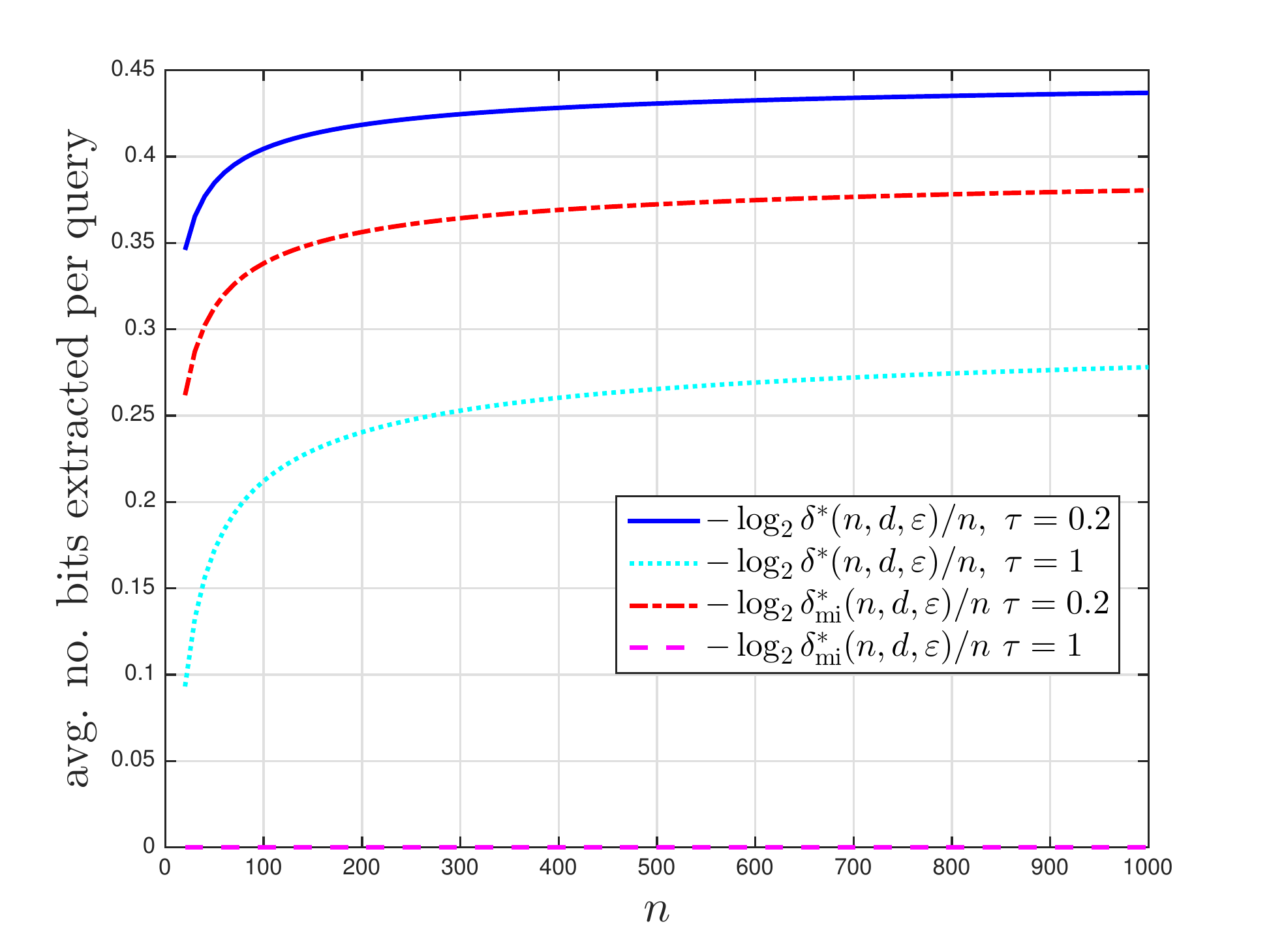}
\caption{\blue{The second-order approximation to the per query average number of bits extracted by optimal non-adaptive query procedures for both measurement-dependent ($-\log_2\delta^*(n,d,\varepsilon)/n$) and measurement-independent ($-\log_2\delta_{\rm{mi}}^*(n,d,\varepsilon)/n$) versions of the BEC with parameter $\tau$. We consider the case of $d=2$ and $\varepsilon=0.001$.}}
\label{bec_fbl}
\end{figure}

\subsection{Case of Measurement-Dependent Z-Channel}
We next consider a measurement-dependent Z-channel. Given any $\zeta\in(0,1]$ and $q\in(0,1]$, for any $(x,y)\in\{0,1\}^2$, the information density of a measurement-dependent Z-channel with parameter $\zeta$ is
\begin{align}
\imath_{q,\zeta q}(x;y)
&=\bbo(y=x=0)\log\frac{1}{1-q+\zeta q^2}+\bbo(y=0,x=1)\log\frac{\zeta q}{1-q+\zeta q^2}+\bbo(y=x=1)\log\frac{1-\zeta q}{q-\zeta q^2}.
\end{align}
The mean and the variance of the information density are respectively
\begin{align}
C(\zeta,q)
&:=\mathbb{E}[\imath_{q,\zeta q}(X;Y)]
=h_\rmb(q(1-\zeta q))-q h_\rmb(\zeta q),\\
V(\zeta,q)
\nn&:=\mathrm{V}[\imath_{q,\zeta q}(X;Y)].
\end{align}
The capacity of the measurement-dependent Z-channel with parameter $\zeta$ is
\begin{align}
C(\zeta)
&=\max_{q\in[0,1]}C(\zeta,q).
\end{align}
We plot $C(\zeta,q)$ for different values of $\zeta$ and $q\in[0,1]$ in Figure \ref{z_capacity}.
\begin{figure}[tb]
\centering
\includegraphics[width=.5\columnwidth]{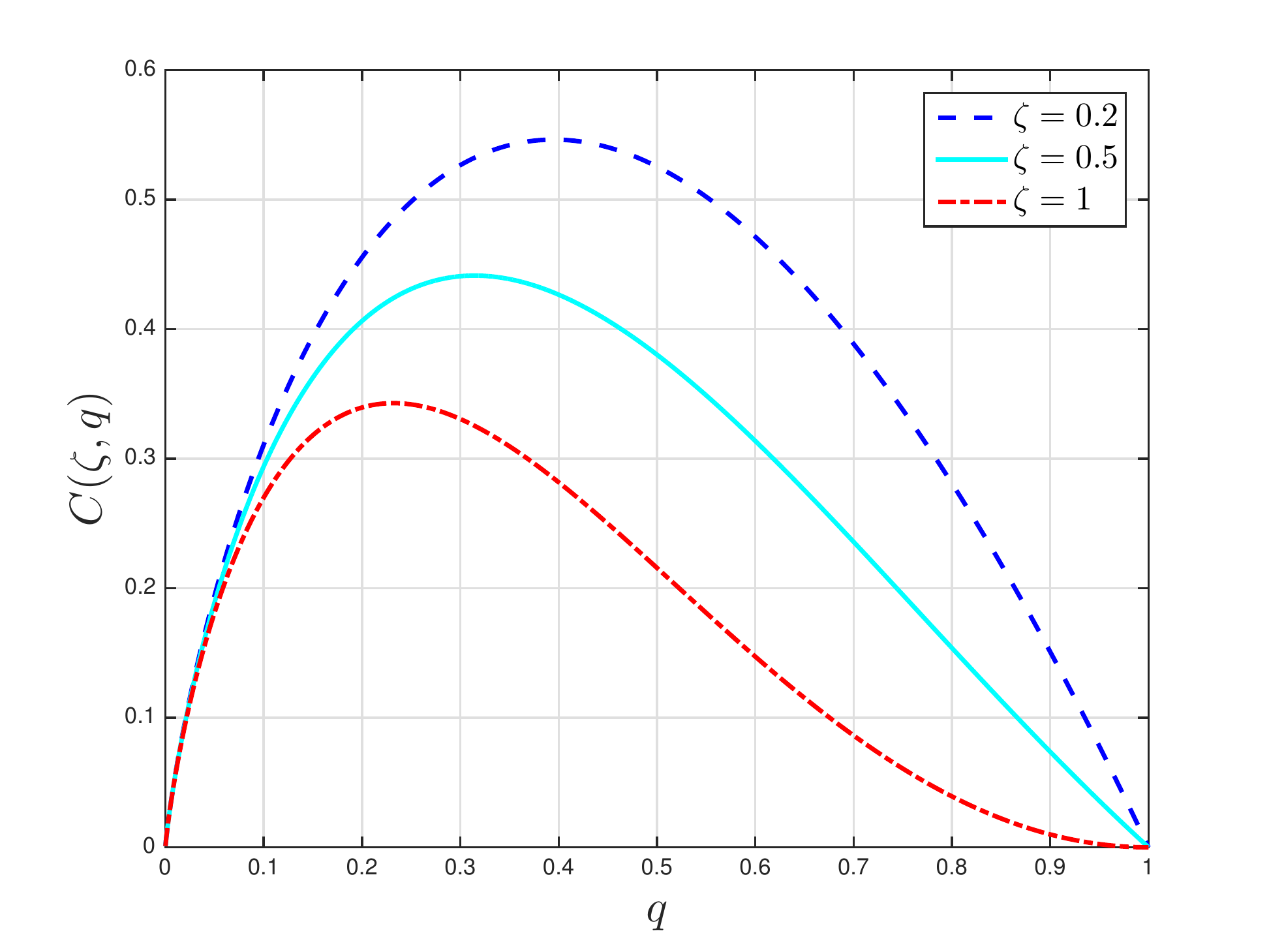}
\caption{Plot of $C(\zeta,q)$, the mean of the information density, of the measurement-dependent binary Z-channel with parameter $\zeta$ for $q\in[0,1]$. For any $\zeta\in(0,1]$, there exists a unique capacity-achieving value $q^*\in[0,1]$.}
\label{z_capacity}
\end{figure}
It can be verified that the capacity achievable parameter for the measurement-dependent Z-channel is unique and we denote the optimizer as $q^*$. Therefore, the dispersion of the Z-channel is $V(\zeta,q^*)$. Our second-order asymptotic result in Theorem \ref{result:second} specializes to the Z-channel as follows.
\begin{corollary}
\label{bz:non-adaptive}
Let $\zeta\in(0,1]$. If the channel from the oracle to the player is a measurement-dependent Z-channel with parameter $\zeta$, then Theorem \ref{result:second} holds with $C=C(\zeta,q^*)$ and $V_\varepsilon=V(\zeta,q^*)$ for any $\varepsilon\in(0,1)$.
\end{corollary}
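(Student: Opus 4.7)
The plan is to derive Corollary \ref{bz:non-adaptive} by verifying that the measurement-dependent Z-channel satisfies the hypotheses of Theorem \ref{result:second} and then reading off $C$ and $V_\varepsilon$ from the definitions in \eqref{def:capacity} and \eqref{def:veps}. There are essentially four boxes to check: (i) the continuity assumption \eqref{assump:continuouschannel} on the measurement-dependent channel, (ii) the uniqueness of the capacity-achieving parameter, (iii) the finiteness of the third absolute moment of the information density, and (iv) the identification $C = C(\zeta,q^*)$ and $V_\varepsilon = V(\zeta,q^*)$.

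\textbf{Step 1 (continuity).} Given any $\calA,\calA^+,\calA^-$ of sizes $q$, $q+\xi$, $q-\xi$ respectively, the only nonzero entries of $P_{Y|X}^{\calA}$ are $1-\zeta q$, $\zeta q$, $1$, and the corresponding shifted quantities for $\calA^{\pm}$. First I would compute $\bigl|\log\frac{\zeta q}{\zeta(q\pm\xi)}\bigr| = \bigl|\log(1\pm \xi/q)\bigr|$ and $\bigl|\log\frac{1-\zeta q}{1-\zeta(q\pm\xi)}\bigr| = \bigl|\log(1 \mp \zeta\xi/(1-\zeta q))\bigr|$. Using $|\log(1+u)|\leq 2|u|$ for $|u|\leq 1/2$ and restricting to $\xi < \min(q,1-q)/2$, these are each bounded by a constant times $\xi$, so \eqref{assump:continuouschannel} holds with $c(q) = \max\{2/q,\, 2\zeta/(1-\zeta q)\}$ (or any continuous upper bound thereof).

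\textbf{Step 2 (uniqueness of $q^*$).} Next I would argue that for each fixed $\zeta \in (0,1]$, the map $q \mapsto C(\zeta,q) = h_\rmb(q(1-\zeta q)) - q\,h_\rmb(\zeta q)$ has a unique maximizer on $[0,1]$. The cleanest route is to differentiate in $q$ and show that $\partial_q C(\zeta,q)$ is strictly decreasing on the relevant interval, so that it vanishes at a unique interior point; Figure \ref{z_capacity} already indicates this graphically. Once $\calP_{\rm ca} = \{q^*\}$ is a singleton, the two cases in \eqref{def:veps} coincide, and $V_\varepsilon = V(\zeta,q^*)$ for every $\varepsilon \in (0,1)$.

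\textbf{Step 3 (moments) and Step 4 (conclusion).} Since $(X,Y)$ is supported on a finite alphabet and, for $q^* \in (0,1)$ and $\zeta \in (0,1]$, every mass in $P_{Y|X}^{q^*}$ and $P_Y^{q^*,q^*}$ that appears with positive probability is bounded away from $0$, the information density $\imath_{q^*,\zeta q^*}(X;Y)$ is a bounded random variable, so its third absolute moment is trivially finite. All the hypotheses of Theorem \ref{result:second} are in place, and invoking it yields the claim with $C = C(\zeta,q^*)$ and $V_\varepsilon = V(\zeta,q^*)$. The only step with any real content is Step 2: verifying uniqueness of $q^*$ rigorously for all $\zeta \in (0,1]$ requires a careful monotonicity analysis of $\partial_q C(\zeta,q)$, since the Z-channel is asymmetric and no symmetry argument is available. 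If a closed-form proof of strict concavity proves cumbersome, one can instead argue that any two maximizers would give the same $V$ by a direct calculation, so that $V_\varepsilon = V(\zeta,q^*)$ is well-defined regardless.
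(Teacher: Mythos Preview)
Your proposal is correct and follows the same approach as the paper, which simply specializes Theorem~\ref{result:second} by observing that the capacity-achieving parameter $q^*$ is unique (the paper asserts this with ``It can be verified'' and refers to Figure~\ref{z_capacity}) and hence $V_\varepsilon = V(\zeta,q^*)$ for all $\varepsilon\in(0,1)$. In fact you supply more detail than the paper itself does: the paper gives no explicit proof for Corollary~\ref{bz:non-adaptive} beyond the surrounding discussion, and in particular does not separately verify the continuity assumption~\eqref{assump:continuouschannel} or the third-moment condition, both of which you correctly note are routine for a finite-alphabet channel with $q^*\in(0,1)$.
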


When one considers a measurement-independent Z-channel with parameter $\zeta$, it can be easily shown that the achievable resolution $\delta^*_{\rm{mi}}(n,d,\varepsilon)$ of optimal non-adaptive query procedures satisfies
\begin{align}
-d\log \delta_{\rm{mi}}^*(n,d,\varepsilon)=nC_{\rm{mi}}(\zeta)+\sqrt{nV_{\rm{mi}}(\zeta)}\Phi^{-1}(\varepsilon)+O(\log n),
\end{align}
where $C_{\rm{mi}}(\zeta)$ and $V_{\rm{mi}}(\zeta)$ are the capacity and dispersion of the Z-channel:
\begin{align}
C_{\rm{mi}}(\zeta)&=\sup_{q\in[0,1]}h_\rmb(q(1-\zeta))-qh_\rmb(\zeta),\\
V_{\rm{mi}}(\zeta)\nn&=q^*_{\rm{mi}}(1-q^*_{\rm{mi}})\Big(\log(1-q^*_{\rm{mi}}+\zeta q^*_{\rm{mi}})\Big)^2+\zeta q^*_{\rm{mi}}(1-\zeta q^*_{\rm{mi}})\Big(\log\frac{\zeta}{1-q^*_{\rm{mi}}+\zeta q^*_{\rm{mi}}}\Big)^2\\
&\qquad+(q^*_{\rm{mi}}-\zeta q^*_{\rm{mi}})(1-q^*_{\rm{mi}}+\zeta q^*_{\rm{mi}})\Big(\log\frac{1-\zeta}{q-\zeta q^*_{\rm{mi}}}\Big),
\end{align}
with $q^*_{\rm{mi}}\in[0,1]$ being the unique optimizer of $C_{\rm{mi}}(\zeta)$.

We compare the performances of optimal non-adaptive querying for the measurement-dependent and measurement-independent Z-channels with different parameters. \blue{In Figure \ref{z_fbl}, the second-order approximation to both  $-\log \delta_{\rm{mi}}^*(n,d,\varepsilon)/n$ and $-\log \delta_{\rm{mi}}^*(n,d,\varepsilon)/n$ is plotted for $d=2$, $\varepsilon=0.001$ and different values of $\zeta$.} \blue{Once again, we find that the optimal non-adaptive query procedure for the measurement-dependent channel achieves a higher resolution than its counterpart for the measurement-independent channel.}
\begin{figure}[tb]
\centering
\includegraphics[width=.5\columnwidth]{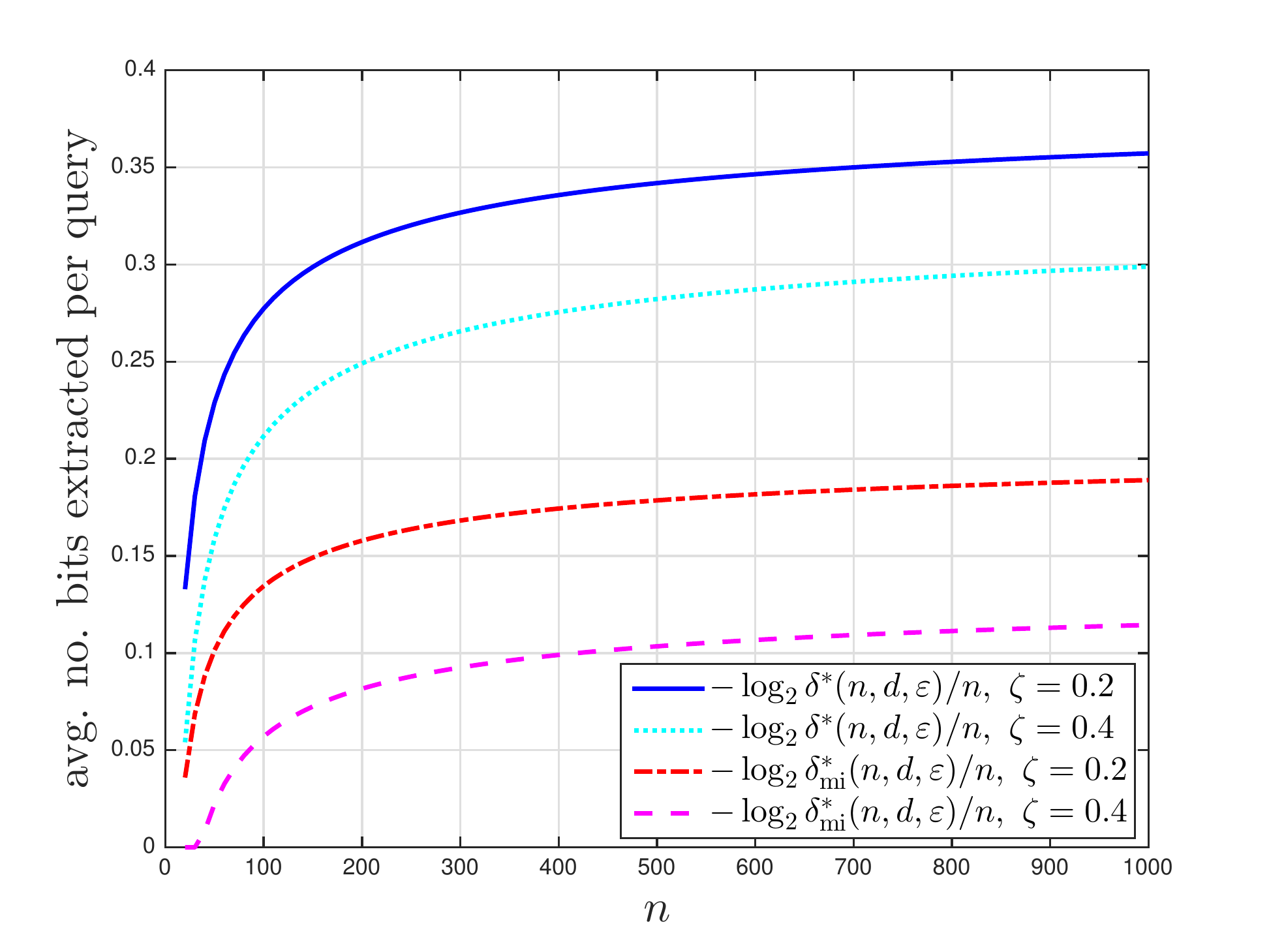}
\caption{\blue{The second-order approximation to the per query average number of bits extracted} by optimal non-adaptive query procedures for both measurement-dependent ($-\log_2\delta^*(n,d,\varepsilon)/n$) and measurement-independent ($-\log_2\delta_{\rm{mi}}^*(n,d,\varepsilon)/n$) versions of the binary Z-channel with parameter $\zeta$. Here we consider the case of $d=2$ and $\varepsilon=0.001$. Note that the per query average number of bits extracted is larger for the measurement-dependent channel.}
\label{z_fbl}
\end{figure}

{\color{blue}
\subsection{Generalization to Simultaneous Searching for Multiple Targets}
\label{sec:search:ktarget}
In this subsection, we consider the case of a simultaneous 20 questions search for multiple targets~\cite{kaspi2015searching}. Given finite integers $(k,d)\in\bbN^2$, let $(\bS_1,\ldots,\bS_k)$ be a sequence of $k$ target random vectors, where for each $i\in[k]$, the random vector $\bS_i=(S_{i,1},\ldots,S_{i,d})$ is generated independently from an arbitrary pdf $f_{\bS}$ defined $[0,1]^d$. The task is to design a query procedure to locate the $k$ targets simultaneously with as few queries as possible. 

Similarly to Definition \ref{def:procedure}, a non-adaptive query procedure is defined as follows.
\begin{definition}
\label{def:procedure:ktarget}
Given any $(n,k,d)\in\bbN^3$, $\delta\in\bbR_+$ and $\varepsilon\in[0,1]$, an $(n,k,d,\delta,\varepsilon)$-non-adaptive query procedure for noisy 20 questions consists of 
\begin{itemize}
\item $n$ queries $(\calA_1,\ldots,\calA_n)$ where each $\calA_i\subseteq[0,1]^d$,
\item and a decoder $g:\calY^n\to[0,1]^{kd}$
\end{itemize}
such that the excess-resolution probability satisfies
\begin{align}
\rmP_\rme(n,k,d,\delta)&:=\sup_{f_{\bS}\in\calF([0,1]^d)}\Pr\{\exists~(i,j)\in[k]\times[d]:~|\hatS_{i,j}-S_{i,j}|>\delta\}\leq \varepsilon\label{def:excessresolution3},
\end{align}
where $\hatS_{i,j}$ is the $j$-th coordinate of the estimate for the $i$-th target vector $\bS_i$ using the decoder $g$, i.e., $g(Y^n)=(\hat{\bS}_1,\ldots,\hat{\bS}_k)$.
\end{definition}
The minimal achievable resolution is then defined as
\begin{align}
\delta^*(n,k,d,\varepsilon)
&:=\inf\{\delta:\exists~\mathrm{an~}(n,k,d,\delta,\varepsilon)\mathrm{-non}\mathrm{-adaptive}\mathrm{~query}\mathrm{~procedure}\}.
\end{align}

The following definitions will be needed. For any $(k,M)\in\bbN^2$, let $\calL(k,M)$ be the set of all length-k vectors whose elements are ordered in increasing order and each element takes values in $[M]$, i.e.,
\begin{align}
\calL(k,M)
&:=\{(i_1,\ldots,i_k)\in[M]^k:\forall~(j,l)\in[k]^2~\mathrm{s.t.~}j<l,~i_j<i_l\}
\label{def:calL}.
\end{align}
Furthermore, given any $(d,M)\in\bbN^2$, define a function $\Gamma:[M]^d\to [M^d]$ as follows: for any $(i_1,\ldots,i_d)\in[M]^d$,
\begin{align}
\Gamma(i_1,\ldots,i_d)=1+\sum_{j\in[d]}(i_j-1)M^{d-j}\label{def:Gamma}.
\end{align}
Note that the function $\Gamma(\cdot)$ is invertible. We denote $\Gamma^{-1}:[M^d]\to [M]^d$ the inverse function.

Given any $(x_1,\ldots,x_k,y)\in[0,1]^k\times\calY$, define the joint distribution
\begin{align}
P_{X^kY}^{p,k}(x_1,\ldots,x_k,y)
&:=\Big(\prod_{j\in[k]}P_X(x_k)\Big)P_{Y|Z}^p(y|1\{\exists j\in[k]:~x_j=1\})\label{def:pjointk},
\end{align}
where $P_{Y|Z}^p$ denotes a measurement-dependent channel. Other distributions, such as $P_Y^{p,k}(\cdot)$ used below, are defined similarly to \eqref{def:pjointk}. For any $k\in\bbN$ and $\calJ\subseteq[k]$, we use $X_\calJ$ to denote the collection of random variables $X_j$ with $j\in\calJ$, use $X^n(\calJ)$ to denote the collection of sequences $X^n(j)$ with $j\in\calJ$ and use $x_\calJ$ and $x^n(\calJ)$ similarly.  For any $\gamma>0$, define the following sets of sequences
\begin{align}
\calD_\calJ^{n,k}(\gamma)
&:=\bigg\{(x^n([k]),y^n)\in\{0,1\}^{kn}\times\calY^n:~\log\frac{(P_{Y|X_{[k]}}^{p,k})^n(y^n|x^n([k]))}{(P_{Y|X_\calJ}^{p,k})^n(y^n|x^n(\calJ))}>d|\calJ|\log M+\gamma\bigg\},~\calJ\subseteq[k],\label{def:calDj}\\
\calD^{n,k}(\gamma)&:=\bigcap_{\calJ\subseteq[k]}\calD_\calJ^{n,k}(\gamma)\label{def:calD}.
\end{align}
Furthermore, given any $t\in[k]$, for any $\calJ\subseteq[t]$, define the (conditional) mutual information density
\begin{align}
\imath_\calJ^{p,t}(x_{[t]};y)
&:=
\left\{
\begin{array}{cc}
\log\frac{P_{Y|X_{[t]}}^{p,t}(y|x_{[t]})}{P_{Y|X_\calJ}^{p,t}(y|x_\calJ)}&|\calJ|<t\\
\log\frac{P_{Y|X_{[t]}}^{p,t}(y|x_{[t]})}{P_Y^{p,t}(y)}&|\calJ|=t
\end{array}
\right.
\label{def:cdmi},
\end{align}
and define the following moments of $\imath_\calJ^{p,t}(\cdot)$:
\begin{align}
C_\calJ(p,t)&:=\mathbb{E}_{P_{X^tY}^{p,t}}[\imath_\calJ^{p,t}(X_{[t]};Y)]\label{def:cjpt},\\
V_\calJ(p,t)&:=\mathrm{Var}_{P_{X^tY}^{p,t}}[\imath_\calJ^{p,t}(X_{[t]};Y)],\\
T_\calJ(p,t)&:=\mathbb{E}_{P_{X^tY}^{p,t}}[|\imath_\calJ^{p,t}(X_{[t]};Y)-C_\calJ(p,t)|^3]\label{def:tjpt}.
\end{align}
Finally, let
\begin{align}
(p^*,t^*)&:=\argmax_{p\in[0,1]}\argmin_{t\in[k]}\frac{C_{[t]}(p,t)}{t}.
\end{align}

\begin{algorithm}
\caption{\color{blue} Non-adaptive query procedure for simultaneous searching for multiple targets}
\label{procedure:nonadapt:ktaget}
\begin{algorithmic}
\REQUIRE The number of queries $n\in\bbN$ and three parameters $(M,p,\gamma)\in\bbN\times(0,1)\times\bbR_+$
\ENSURE A vectors of  estimates of the locations of $k$ targets $(\bs_1,\ldots,\bs_k)$ where each $\bs_i\in[0,1]^d$
\STATE Partition the unit cube of dimension $d$ into $M^d$ equal-sized disjoint cubes $\{\calS_{i_1,\ldots,i_d}\}_{(i_1,\ldots,i_d)\in[M]^d}$.
\STATE Generate $M^d$ binary vectors $\{x^n(i_1,\ldots,i_d)\}_{(i_1,\ldots,i_d)\in[M]^d}$ where each binary vector is generated i.i.d. from a Bernoulli distribution with parameter $p$.
\STATE $l \leftarrow 1$.
\WHILE{$l\leq n$}
\STATE Form the $l$-th query as
\begin{align*}
\calA_l:=\bigcup_{(i_1,\ldots,i_d)\in[M]^d:x_l(i_1,\ldots,i_d)=1}\calS_{i_1,\ldots,i_d}
\end{align*}
\STATE Obtain a noisy response $y_l$ from the oracle to the query $\calA_l$.
\STATE $l \leftarrow l+1$.
\ENDWHILE
\STATE $t \leftarrow k$.
\WHILE{$t>0$}
\IF{$\exists$ a tuple $(j_1,\ldots,j_t)\in\calL(t,M^d)$ such that $(X^n(\Gamma^{-1}(j_1)),\ldots,X^n(\Gamma^{-1}(j_t)))\in\calD^{n,t}(\gamma)$}
\STATE Return the locations of $k$ targets as being in subintervals $\calS_{\Gamma^{-1}(j_l)}$ with $l\in[t]$.
\STATE $t\leftarrow 0$
\ELSE
\STATE $t \leftarrow t-1$.
\ENDIF
\ENDWHILE
\end{algorithmic}
\end{algorithm}

\begin{theorem}
\label{result:second:ktarget}
Assume that i) the optimizer $(p^*,t^*)$ is unique and ii)
the third absolute moment $T_{[t^*]}(p^*,t^*)$ is finite. For any finite numbers $(k,d)\in\bbN^2$ and any $\varepsilon\in(0,1)$, the achievable resolution $\delta^*(n,k,d,\varepsilon)$ of an optimal non-adaptive query procedure satisfies
\begin{align}
-\log\delta^*(n,k,d,\varepsilon)
&=\frac{nC_{[t^*]}(p^*,t^*)+\sqrt{nV_{[t^*]}(p^*,t^*)}\Phi^{-1}(\varepsilon)+O(\log n)}{dt^*}.
\end{align}
\end{theorem}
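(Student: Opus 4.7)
The plan is to adapt the single-target argument behind Theorem \ref{result:second} to the $k$-target setting, using Algorithm \ref{procedure:nonadapt:ktaget} for achievability and a multi-message analogue of Theorem \ref{fbl:converse} for the converse. On the achievability side, I would choose the resolution parameter $M$ so that
\begin{align}
dt^*\log M=nC_{[t^*]}(p^*,t^*)+\sqrt{nV_{[t^*]}(p^*,t^*)}\,\Phi^{-1}(\varepsilon)+\Theta(\log n),
\end{align}
set the Bernoulli parameter to $p=p^*$, and pick $\gamma=\Theta(\log n)$. As in the proof of Theorem \ref{ach:fbl}, I would first invoke the change-of-measure technique of \cite{csiszar2011information}: with high probability the empirical fraction of $1$'s among $\{x_l(i_1,\ldots,i_d)\}$ concentrates around $p^*$, so the true measurement-dependent channel $P_{Y^n|X^n}^{\calA^n}$ can be replaced by the i.i.d.\ surrogate $(P_{Y|X}^{p^*})^n$ up to an additive term $4n\exp(-2M^d\eta^2)$ and a multiplicative factor $\exp(n\eta c(p^*))$, with $\eta$ chosen as a slowly-decaying function of $n$.

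The core of the achievability analysis is to bound the probability that the decoder in Algorithm \ref{procedure:nonadapt:ktaget} fails. An excess-resolution event for at least one of the $k$ targets is equivalent to the decoder outputting a tuple that differs from the true one in at least one coordinate. I would decompose the error event by the number $j\in[0:k-1]$ of correctly identified targets, so the decoder falsely accepts an alternative tuple with $t=k-j$ incorrect entries. For each $t\in[k]$ and each such wrong tuple $(\tilde\jmath_1,\ldots,\tilde\jmath_t)$ of surrogate codewords (independent of the true ones), the probability that the joint typicality test using $\calD_{[t]}^{n,t}(\gamma)$ accepts is at most $M^{-dt}\exp(-\gamma)$ by the usual change-of-measure bound built into the definition \eqref{def:calDj}. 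Combined with the $\binom{M^d}{t}\le M^{dt}$ possibilities for the wrong tuple, the contribution of this term is $O(\exp(-\gamma))$. The remaining piece is the probability that the true tuple itself fails $\calD^{n,k}(\gamma)$, i.e.\ that for some $\calJ\subseteq[k]$ the information density $\sum_{l\in[n]} \imath_\calJ^{p^*,k}(X_{[k],l};Y_l)$ falls below $d|\calJ|\log M+\gamma$. Here I would apply the Berry--Esseen theorem separately for each $\calJ$; the dominating term is controlled by whichever subset $\calJ$ achieves $\min_t C_{[t]}(p^*,t)/t$, which by assumption is the unique set $[t^*]$. This is where $(C_{[t^*]},V_{[t^*]})$ enter as the mean and variance, and $T_{[t^*]}$ is needed as the third absolute moment for the Berry--Esseen remainder.

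For the converse, I would generalize Theorem \ref{fbl:converse} by lifting the reduction to channel coding: estimating $k$ targets within resolution $\delta$ on $[0,1]^d$ amounts, up to a $\beta$-loss on a bad set of initial distributions, to decoding one of $\binom{M^d}{k}$ roughly equiprobable messages through the time-varying channel $\{P_{Y|X}^{|\calA_i|}\}_{i\in[n]}$. Applying the non-asymptotic meta-converse of \cite[Proposition 4.4]{TanBook} to any subset of size $t$ of the targets (treating the locations of the other $k-t$ targets as side information available to a genie) yields, for every $t\in[k]$,
\begin{align}
-dt\log\delta\le \sup_{\calA^n}\sup\Big\{r:\Pr\Big\{\sum_{i\in[n]}\imath_{\calA_i,[t]}^{p,t}(X_{[t],i};Y_i)\le r\Big\}\le \varepsilon+\kappa_{k,\beta}\Big\}+O(\log n).
\end{align}
Taking the minimum over $t$ and then applying the Berry--Esseen theorem pointwise in $t$ (uniform size queries optimise the bound, as already argued after Theorem \ref{fbl:converse}) matches the achievability expression because the infimum in $t$ is attained at $t^*$. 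The main technical obstacle, in my view, is the careful bookkeeping of the union bound across the $2^k-1$ non-empty subsets $\calJ$ in the achievability, coupled with the Berry--Esseen remainder: one needs the uniqueness of the minimiser $(p^*,t^*)$ (assumption (i)) to guarantee that the dispersion arising from the bottleneck set $[t^*]$ strictly dominates and that the $O(\log n)$ slack absorbed into $\gamma$ and the change-of-measure does not inflate the second-order term. With these pieces in place, matching achievability and converse up to $O(\log n)$ yields the claimed expansion.
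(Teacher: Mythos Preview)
Your skeleton (random coding via Algorithm~\ref{procedure:nonadapt:ktaget}, change of measure to the i.i.d.\ surrogate channel, Berry--Esseen; channel-coding reduction for the converse) matches the paper's. The substantive gap is in the achievability: you misidentify where the minimisation over $t\in[k]$ --- and hence $t^*$ --- comes from. You analyse only the case in which all $k$ targets quantise to distinct cells and then assert that the dominating subset $\calJ\subseteq[k]$ in the atypicality union bound is $[t^*]$. That is backwards: for any \emph{fixed} number $t$ of distinct quantised targets, inequality \eqref{imineq} gives $C_\calJ(p,t)/|\calJ|>C_{[t]}(p,t)/t$ for every proper $\calJ\subsetneq[t]$, so the binding $\calJ$ is always the \emph{full} set $[t]$ and the remaining $\calJ$ contribute only exponentially small terms. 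With $t=k$ fixed, the dominating $\calJ$ is therefore $[k]$, not $[t^*]$. Your phrase ``whichever subset $\calJ$ achieves $\min_t C_{[t]}(p^*,t)/t$'' conflates subsets of a fixed $k$-source model with the index $t$ of \emph{different} $t$-source models.

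The actual source of $\min_t$ is the quantity $k_\rmp(\bs^k)$ in \eqref{def:kp}: since the excess-resolution probability is a supremum over $f_\bS$, the pdf can be concentrated so that several targets fall into the same quantisation cell, leaving only $t<k$ distinct quantised targets. Algorithm~\ref{procedure:nonadapt:ktaget} accordingly searches over tuple sizes $t=k,k-1,\ldots$, and the paper's error events $\calE_1,\calE_2,\calE_3$ are all indexed by $k_\rmp(\bs^k)=t$. Achievability must hold for every such $t$, which forces $dt\log M\le nC_{[t]}(p,t)+\sqrt{nV_{[t]}(p,t)}\Phi^{-1}(\varepsilon)+O(\log n)$ for all $t\in[k]$ and hence the $\min_t$. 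Your write-up never introduces $k_\rmp(\bs^k)$, so this mechanism is absent. On the converse side, the paper also does not use your genie: it fixes $f_\bS$ uniform, reduces to an OR-type $k$-user multiple-access channel, and applies the information-spectrum MAC converse of \cite[Lemma~4]{han2003information}; revealing $k-t$ targets as you propose would not yield the clean $t$-source channel whose capacity is $C_{[t]}(p,t)$, because the oracle's response still depends on all $k$ indicators.
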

The proof of Theorem \ref{result:second:ktarget} is provided in Appendix \ref{proof:ktarget}. \blue{Note that Theorem \ref{result:second:ktarget} refines \cite{kaspi2015searching} by i) addressing \emph{arbitrary} finite output-alphabet measurement-dependent noisy channels instead of BSC only, ii) providing a non-asymptotic (second-order asymptotic) bound and iii) considering multidimensional targets instead of one-dimensional targets. Several other remarks are in order.}

Firstly, the proof of Theorem \ref{result:second:ktarget} uses the information spectrum method~\cite{han2003information,han2006information}. In the achievability part, we analyze the performance of the non-adaptive query procedure given in Algorithm \ref{procedure:nonadapt:ktaget}, which is based on random coding and is similar to Algorithm \ref{procedure:nonadapt} except that the decoder is different.

\blue{
Secondly, for any $\varepsilon\in(0,1)$,
\begin{align}
\lim_{n\to\infty}\frac{-\log\delta^*(n,k,d,\varepsilon)}{n}=\frac{C_{[t^*]}(p^*,t^*)}{dt^*}\label{pt:k}.
\end{align}
The result in \eqref{pt:k} implies that a phase transition (strong converse) exists for non-adaptive querying for multiple multidimensional targets, with the critical resolution decay rate given by $\frac{C_{[t^*]}(p^*,t^*)}{dt^*}$.
}

Thirdly, note that
\begin{align}
C_{[t^*]}(p^*,t^*)
&=\max_{p\in[0,1]}\min_{t\in[k]}C_{[t]}(p,t)\label{ctptmin}.
\end{align}
The minimization over $t\in[k]$ in \eqref{ctptmin} follows from the fact that given a certain resolution, the number of cubes containing targets might be fewer than the total number of targets. This is because in any query procedure, we need to do a partition of the unit cube into equal-sized disjoint regions and quantize the targets $(\bs_1,\ldots,\bs_k)$. Two targets $(\bs_i,\bs_j)$ might be quantized into the same cube if they are too close with respect to a given resolution. To ensure that our result holds for all possible cases, a minimization over the number of quantized targets accounts for the worst case. Furthermore, to maximize the performance of the non-adaptive query procedure, we choose the best possible codebook by maximizing over the parameter $p\in[0,1]$. For a measurement-dependent BSC when $d=1$, it was demonstrated in \cite{kaspi2015searching} that the worst case is achieved when all $k$ targets are quantized into distinct regions, i.e., $t^*=k$.

Fourthly, when the number of targets $k$ is unknown but an upper bound $K$ is known, our results in Theorem \ref{result:second:ktarget} still hold by replacing $k$ with the upper bound $K$. Furthermore, the query procedure to achieve the theoretical performance is similar to Algorithm \ref{procedure:nonadapt:ktaget} except that $k$ should be replaced with $K$.
}

\section{Upper Bound on Resolution of Adaptive Querying}
In this section, we present a second-order asymptotic upper bound on the achievable resolution of adaptive query procedures and use this bound to discuss the benefit of adaptivity. Our result provides an approximation to the minimal achievable resolution of the adaptive query procedure in Algorithm \ref{procedure:adapt} in Appendix \ref{proof:second:fbl:adaptive}. 

Recall the definition of the capacity $C$ of measurement-dependent channels in \eqref{def:capacity}.
\begin{theorem}
\label{second:fbl:adaptive}
For any $(l,d,\varepsilon)\in\bbR_+\times\bbN\times[0,1)$,
\begin{align}
-\log\delta^*_\rma(l,d,\varepsilon)\geq \frac{lC}{d(1-\varepsilon)}+O(\log l)\label{md:adaptive}.
\end{align}
\end{theorem}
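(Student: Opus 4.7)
The strategy is to construct an adaptive query procedure that implements a variable-length feedback (VLF) channel code, following Polyanskiy--Poor--Verd\'u~\cite{polyanskiy2011feedback}, and to derive the claimed second-order bound from a non-asymptotic achievability statement. Partition $[0,1]^d$ into $M^d$ equal sub-cubes $\{\calS_w\}_{w\in[M^d]}$ and let $W\in[M^d]$ be the index of the sub-cube containing $\bS$; if $\hat W=W$ and $\hat{\bS}$ is declared to be the center of $\calS_{\hat W}$, then each coordinate error is at most $\frac{1}{2M}$, so $-\log\delta\ge\log M-O(1)$ whenever $W$ is correctly identified. The task therefore reduces to transmitting the index $W$ over the measurement-dependent channel with full output feedback.

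To this end, draw an i.i.d.\ random codebook $\{X^\infty(w)\}_{w\in[M^d]}$ with Bern$(p)$ entries for a capacity-achieving $p\in\calP_{\rm{ca}}$, and at step $i$ form the adaptive query $\calA_i=\bigcup_{w:\, X_i(w)=1}\calS_w$; then $\bbo(\bS\in\calA_i)=X_i(W)$, so the responses $Y_i$ are exactly the channel outputs driven by the $W$-th codeword. Using a Polyanskiy--Poor--Verd\'u style stopping rule, let $\tau_w:=\inf\{n:\imath_p(X^n(w);Y^n)\ge\gamma\}$, stop at $\tau=\min_w \tau_w$, and decode $\hat W=\argmin_w \tau_w$. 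A Wald-type identity (using that $\imath_p(X^n(W);Y^n)-nC$ is a martingale) yields $\gamma\,\Pr(\hat W=W)\le C\,\mathbb{E}[\tau]$, while a union bound over the $M^d-1$ incorrect codewords gives $\Pr(\hat W\neq W)\le (M^d-1)e^{-\gamma}$. Setting $\gamma\approx\log(M^d/\varepsilon)$ and $\mathbb{E}[\tau]\le l$ yields $\log M^d\le \tfrac{lC}{1-\varepsilon}+O(\log l)$, and dividing by $d$ gives \eqref{md:adaptive}.

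The principal obstacle is that the Polyanskiy--Poor--Verd\'u analysis assumes a fixed DMC, whereas here the effective transition law $P_{Y|X}^{|\calA_i|}$ drifts from step to step with the random query volume $|\calA_i|=\tfrac{1}{M^d}\sum_{w}X_i(w)$. I would handle this exactly as in the proof of Theorem~\ref{ach:fbl}: on the atypicality event $\bigl\{\bigl||\calA_i|-p\bigr|>\eta\bigr\}$, which by Hoeffding's inequality has probability at most $2\exp(-2M^d\eta^2)$ per step, bound the contribution trivially by a union bound; on its complement, the continuity assumption~\eqref{assump:continuouschannel} allows one to replace $P_{Y|X}^{|\calA_i|}$ by $P_{Y|X}^{p}$ at a multiplicative cost of $\exp(\tau c(p)\eta)$ in the likelihood ratios, so that the information-density martingale and the Wald identity for the idealized fixed DMC remain valid up to an additive error of $O(\tau\eta c(p))$. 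Choosing $\eta=\Theta(\log l /l)$ together with $M^d$ exponential in $l$ makes both corrections $O(\log l)$, which gets absorbed into the stated remainder and completes the argument.
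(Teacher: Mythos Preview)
Your overall architecture is correct and matches the paper: reduce to transmitting the sub-cube index $W\in[M^d]$ over the measurement-dependent channel, use the Polyanskiy--Poor--Verd\'u stop-feedback rule $\tau_w=\inf\{n:\imath_p(X^n(w);Y^n)\ge\gamma\}$, bound the wrong-codeword crossing probability by $e^{-\gamma}$, and handle the drift of $|\calA_i|$ around $p$ by a change of measure using \eqref{assump:continuouschannel}. That part is fine.

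The gap is in how you obtain the factor $\tfrac{1}{1-\varepsilon}$. Your ``Wald-type identity'' $\gamma\,\Pr(\hat W=W)\le C\,\mathbb{E}[\tau]$ is not justified as written. Optional stopping applied to the martingale $U_n-nC$ at the global stopping time $\tau=\min_w\tau_w$ gives $\mathbb{E}[U_\tau]=C\,\mathbb{E}[\tau]$, but on the error event $\{\hat W\neq W\}$ one has $\tau<\tau_W$, so $U_\tau$ can be large and negative; you cannot simply drop that contribution to conclude $\gamma(1-\varepsilon)\le C\,\mathbb{E}[\tau]$. If instead you use the honest bound $\mathbb{E}[\tau]\le\mathbb{E}[\tau_W]\le(\gamma+a_0)/C$ and set $\gamma\approx\log(M^d/\varepsilon)$, you only get $d\log M\le lC+O(1)$, i.e.\ you lose exactly the $\tfrac{1}{1-\varepsilon}$ improvement that is the point of the theorem.

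The paper (and \cite{polyanskiy2011feedback}) obtains $\tfrac{1}{1-\varepsilon}$ differently: first build the stop-feedback scheme with threshold $\lambda=l'C-a_0$ and $d\log M=\lambda-\log l'$, which yields expected length $\le l'$ and error probability $\le (M^d)e^{-\lambda}\approx 1/l'$; then \emph{randomize} by skipping the entire procedure with probability $\tfrac{l'\varepsilon-1}{l'-1}\approx\varepsilon$ and declaring an arbitrary estimate. The mixed scheme has error probability $\varepsilon$ and expected length $l\approx(1-\varepsilon)l'$, whence $d\log M\approx l'C=\tfrac{lC}{1-\varepsilon}$. This time-sharing step is the missing ingredient in your proposal; once you insert it, the rest of your outline (including the change-of-measure treatment of the measurement-dependent channel) goes through and matches the paper's argument.
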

The proof of Theorem \ref{second:fbl:adaptive} is in Appendix \ref{proof:second:fbl:adaptive}. 

{\color{blue}
We make several remarks. Firstly, it is instructive to compare the performance of Algorithm \ref{procedure:adapt} with existing results~\cite{chiu2016sequential,kaspi2018searching} for the case of $d=1$. Asymptotically, we have that for any $\varepsilon\in(0,1)$, our adaptive query procedure achieves the asymptotic resolution decay rate
\begin{align}
\liminf_{l\to\infty}\frac{-\log\delta^*_\rma(l,1,\varepsilon)}{l}\geq \frac{C}{1-\varepsilon}\label{asym_a2}.
\end{align} 
On the other hand, for a measurement-dependent BSC with any parameter $\nu\in[0,1]$, the adaptive query procedures in \cite{chiu2016sequential,kaspi2018searching} achieve the asymptotic resolution decay rate
\begin{align}
\liminf_{l\to\infty}\frac{-\log\delta^*_\rma(l,1,\varepsilon)}{l}\geq C(0)\label{usec0},
\end{align} 
where $C(\cdot)$ is the channel capacity defined in \eqref{def:cnu}. For small value of excess-resolution probability $\varepsilon$ or large value of channel parameter $\nu$, it is usually true that the adaptive query procedures in \cite{chiu2016sequential,kaspi2018searching} achieve a larger resolution decay rate. However, for large value of $\varepsilon$ and small value of $\nu$, Algorithm \ref{procedure:adapt} can achieve a faster  resolution decay rate. An example is provided in Figure \ref{com_adap_asymp} to compare the asymptotic resolution decay rate of Algorithm \ref{procedure:adapt} and the sorted posterior matching (PM) algorithm in \cite{chiu2016sequential} for a measurement-dependent BSC with different parameters $\nu\in[0,1]$. Furthermore, we provide numerical simulations in Figure \ref{sim_adap} that compare the non-asymptotic simulated performance of Algorithm \ref{procedure:adapt} and the sorted PM algorithm. Note that the sorted PM algorithm was only analyzed for a measurement-dependent BSC in \cite{chiu2016sequential}.
\begin{figure}[tb]
\centering
\includegraphics[width=.5\columnwidth]{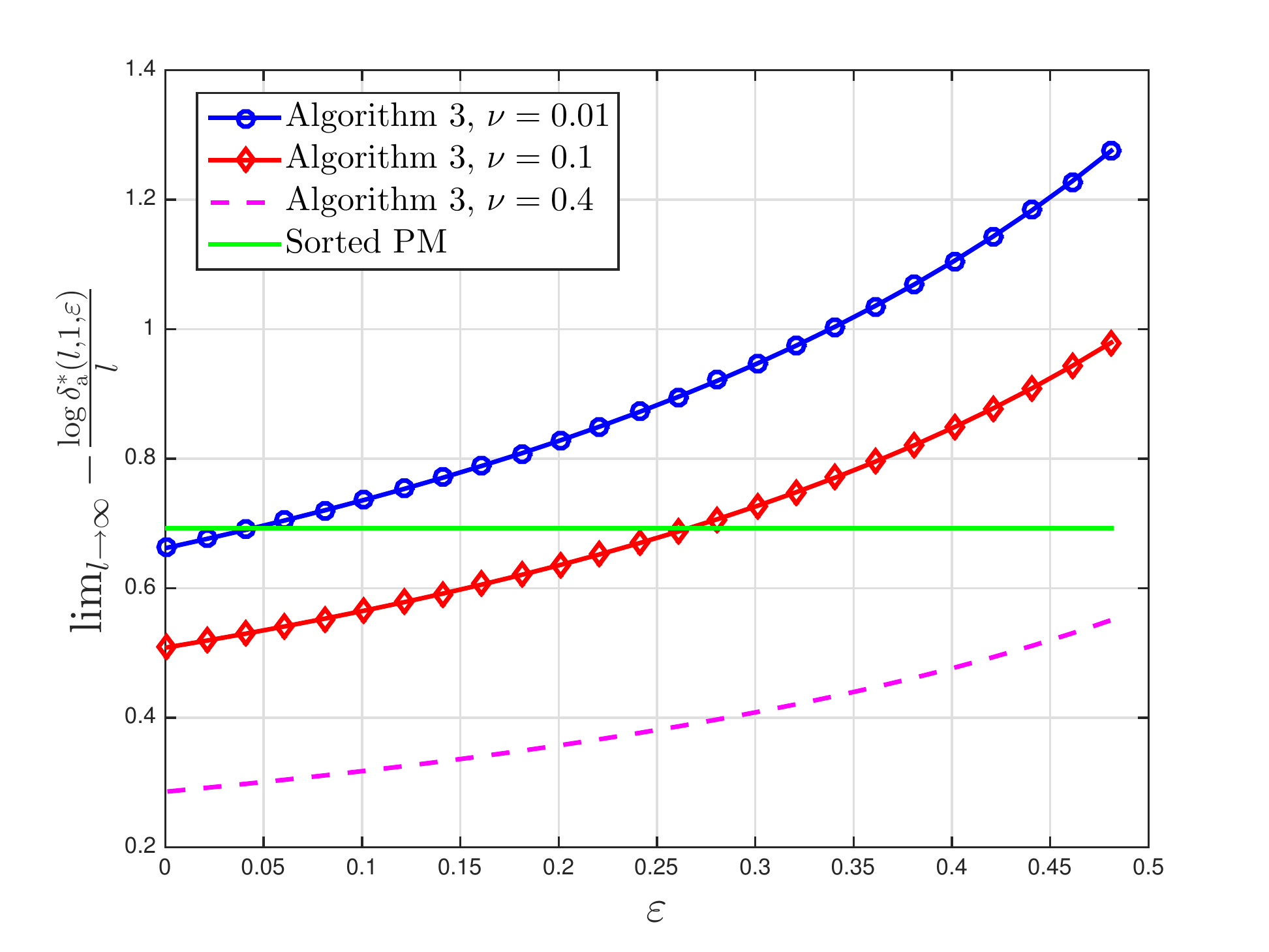}
\caption{Comparison of the asymptotic resolution decay rate of Algorithm \ref{procedure:adapt} and the sorted PM algorithm in \cite{chiu2016sequential} for a measurement-dependent BSC with parameter $\nu$ over a range of tolerable excess-resolution probability $\varepsilon$.}
\label{com_adap_asymp}
\end{figure}

Secondly, a converse bound is necessary to establish the optimality of any adaptive query algorithm under a measurement-dependent channel. However, a converse is elusive, since as pointed out in \cite{kaspi2018searching}, under the measurement-dependent channel, each noisy response $Y_i$ depends not only on the target vector $\bS$, but also the previous queries $\calA^{i-1}$ and noisy responses $Y^{i-1}$. This strong dependency makes it difficult to directly relate the current problem to channel coding with feedback~\cite{horstein1963sequential}. Indeed, under such a setting, the corresponding classical coding analogy is channel coding with feedback and with state where the state has memory. New ideas and techniques are likely required to establish a converse proof for this setting.
}

Thirdly, by comparing Theorem \ref{result:second} to Theorem \ref{second:fbl:adaptive}, we can analyze the benefit of adaptivity for the noisy 20 questions problem with measurement-dependent noise. For any $(n,d,\varepsilon)\in\bbN^2\times[0,1)$, define the benefit of adaptivity, called adaptivity gain, as
\begin{align}
\rmG(n,d,\varepsilon):=\log \delta^*(n,d,\varepsilon)-\log\delta^*_\rma(n,d,\varepsilon).
\end{align}
Using Theorems \ref{result:second} and \ref{second:fbl:adaptive}, we have
\begin{align}
\rmG(n,d,\varepsilon)
\geq\frac{1}{d}\bigg(\frac{nC\varepsilon}{1-\varepsilon}-\sqrt{nV_\varepsilon}\Phi^{-1}(\varepsilon)\bigg)+O(\log n)=:\underline{\rmG}(n,d,\varepsilon).
\end{align}
Note that $\Phi^{-1}(\varepsilon)<0$. To illustrate the adaptivity again, Figure \ref{gain_adaptivity}, we plot $\underline{\rmG}(n,d,\varepsilon)$ for $d=2$, $\varepsilon=0.001$ and three types of measurement-dependent channels with various parameters. Note that adaptive query procedures gain over non-adaptive query procedures since for the former, one can make different number of queries with respect to different realization of the target variable.
\begin{figure}[tb]
\centering
\begin{tabular}{ccc}
\hspace{-.25in} \includegraphics[width=.33\columnwidth]{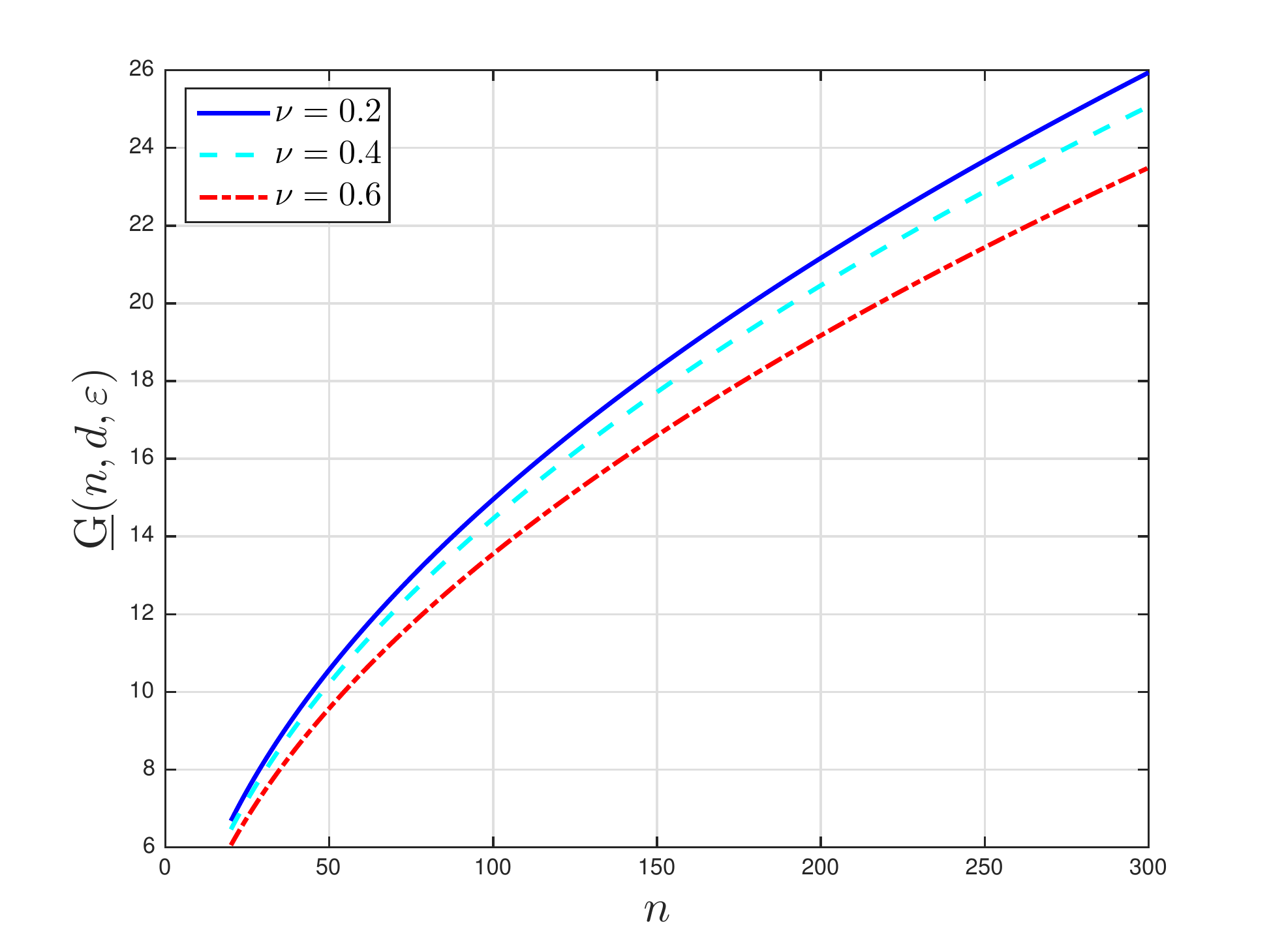}& \hspace{-.4in} \includegraphics[width=.33\columnwidth]{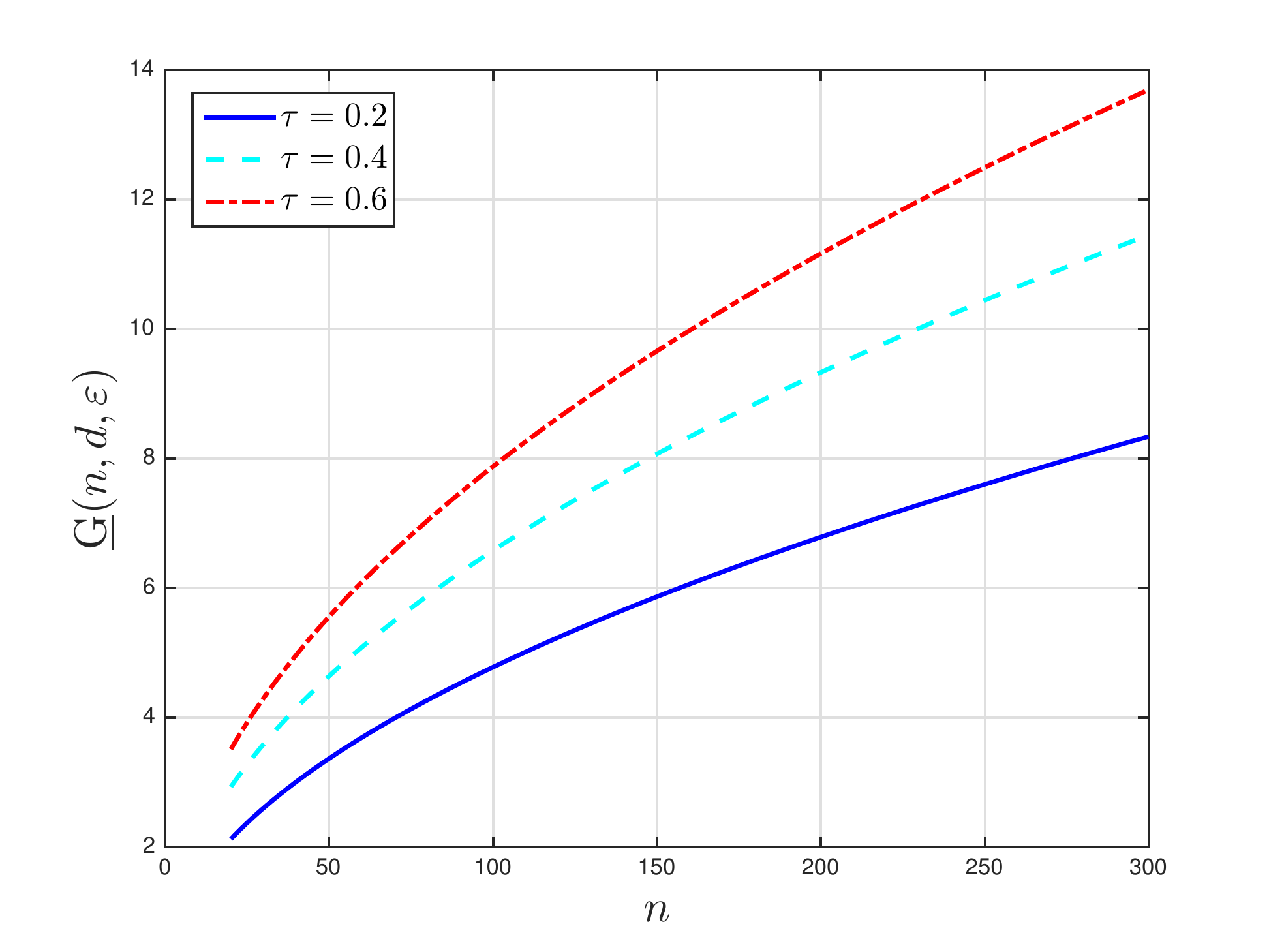}&\hspace{-.4in} \includegraphics[width=.33\columnwidth]{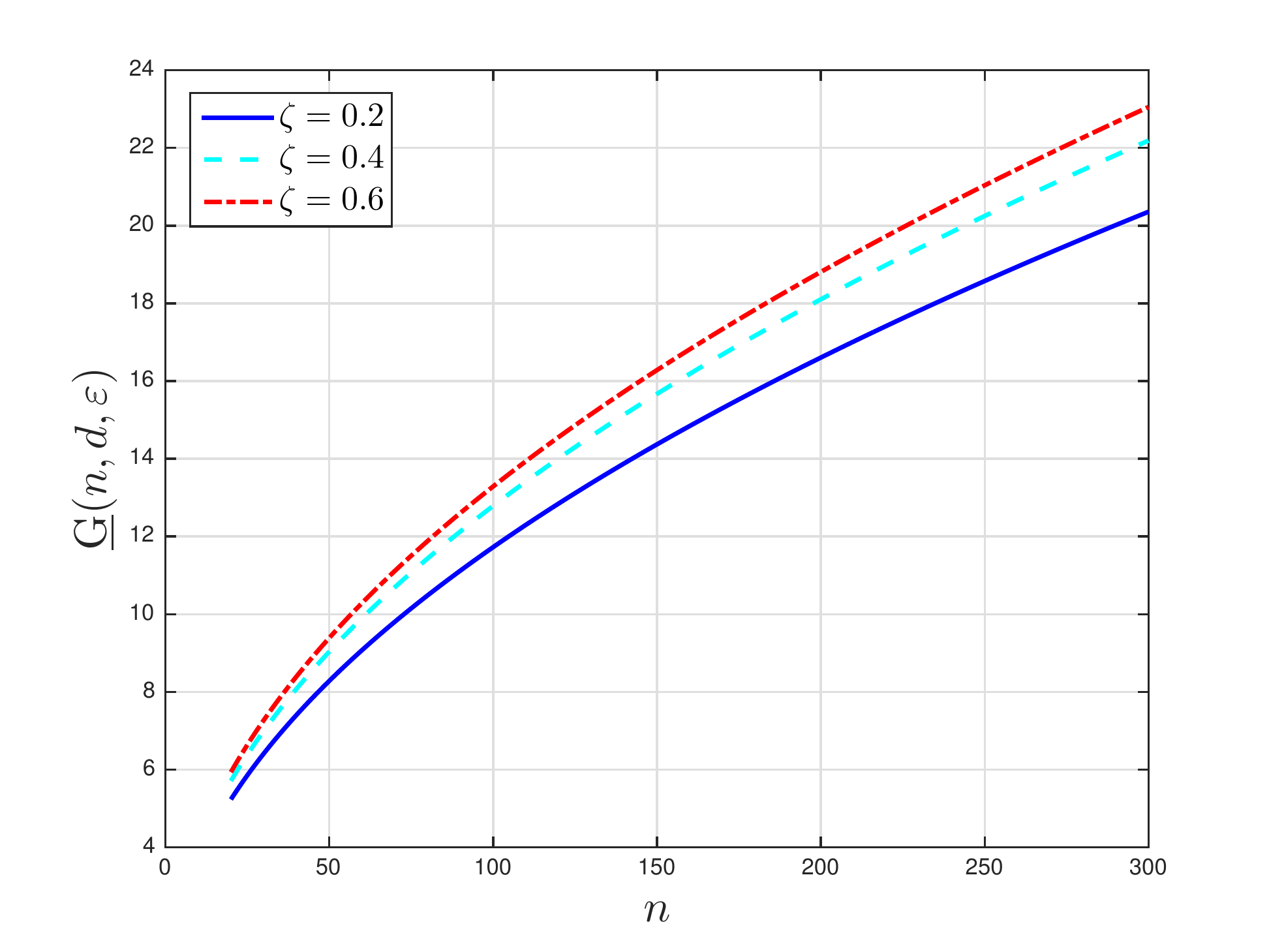}\\
\hspace{-.25in} {(a) measurement-dependent BSC} & \hspace{-.4in}  { (b) measurement-dependent BEC}& \hspace{-.4in}  { (c) measurement-dependent Z-channel}\\
\end{tabular}
\caption{Lower bound $\underline{\rmG}(n,d,\varepsilon)$ on the benefit of adaptivity  where $\underline{\rmG}(n,d,\varepsilon)=\frac{1}{d}\Big(\frac{nC\varepsilon}{1-\varepsilon}-\sqrt{nV_\varepsilon}\Phi^{-1}(\varepsilon)\Big)+O(\log n)$. The $O(\log n)$ term is not included in the plots. We consider the case of $d=2$ and $\varepsilon=0.001$.}
\label{gain_adaptivity}
\end{figure}

\blue{Finally, using the techniques in \cite{polyanskiy2011feedback} and the relationship between adaptive querying in 20 questions and channel coding with feedback}, we have that the achievable resolution $\delta^*_{\rma,\rm{mi}}(l,d,\varepsilon)$ of optimal adaptive query procedures for for measurement-independent channels satisfies
\begin{align}
-\log\delta^*_{\rma,\rm{mi}}(l,d,\varepsilon)=\frac{lC_{\rm{mi}}}{d(1-\varepsilon)}+O(\log l),\label{mi:adaptive}
\end{align}
where $C_{\rm{mi}}$ is the capacity of the measurement-independent channel. Using \eqref{md:adaptive} and \eqref{mi:adaptive}, the performances of adaptive query procedures under measurement-dependent and measurement-independent channels can be compared, analogous to the non-adaptive cases. See Appendix \ref{supp:adap} for numerical results.

\section{Numerical Illustrations}
\label{sec:numerical}

We consider a measurement-dependent BSC with parameter $\nu=0.4$ and set the target excess-resolution probability to be $\varepsilon=0.1$ in call cases.

{\color{blue}
\subsection{Searching for a Multidimensional Target over the Unit Cube}

In this subsection, we present numerical simulations to illustrate Theorem \ref{result:second} on non-adaptive searching for a multidimensional target. We consider the case where the target variable $\bS=(S_1,\ldots,S_d)$ is uniformly distributed over the unit cube of dimension $d$.
\begin{figure}[tb]
\centering
\includegraphics[width=.5\columnwidth]{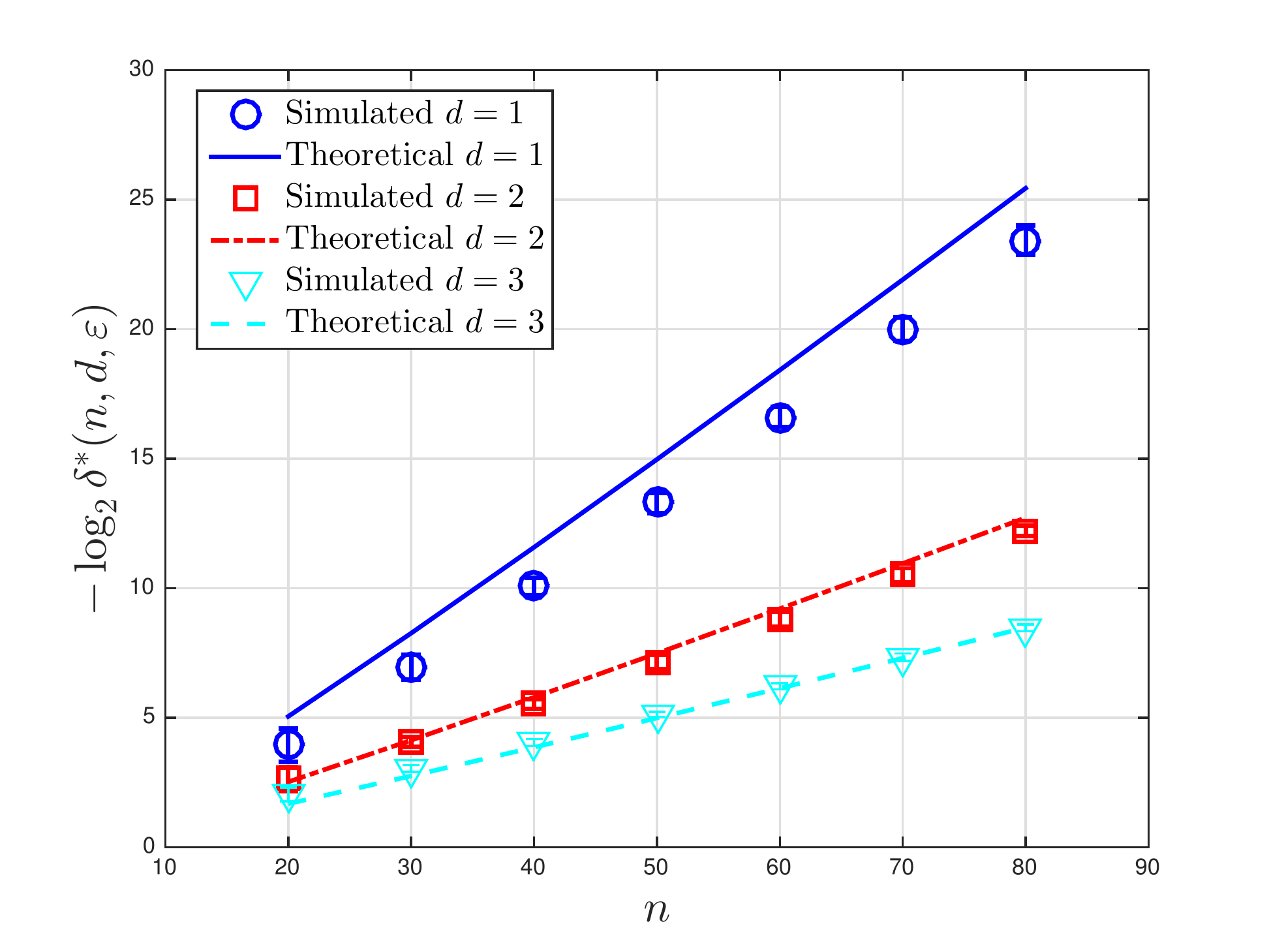}
\blue{
\caption{Minimal achievable resolution of non-adaptive query procedures for estimating a uniformly distributed target variable $\bS=(S_1,\ldots,S_d)$ in the unit cube of dimension $d$. The theoretical results correspond to the second-order asymptotic result in Theorem \ref{result:second} and the simulate results correspond to the Monte Carlo simulation of the non-adaptive query procedure in Algorithm \ref{procedure:nonadapt}. The error bar for simulated results denotes thirty standard deviations below and above the mean.}
\label{sim_non_adap_ddim}
}
\end{figure}

In Figure \ref{sim_non_adap_ddim}, the simulated achievable resolution for the non-adaptive query procedure in Algorithm \ref{procedure:nonadapt} is plotted and compared to the theoretical predictions in Theorem \ref{result:second} for several values of the dimension $d$. Given $d\in\bbN$, for each $n\in\{20,30,\ldots,80\}$, the target resolution in the numerical simulation is chosen to be the reciprocal of $M$ such that
\begin{align}
\log M=\frac{1}{d}\left(nC(\nu)+\sqrt{nV(\nu)}\Phi^{-1}(\varepsilon)\right).
\end{align}
For each number of queries $n\in\{20,30,\ldots,80\}$, the non-adaptive query procedure in Algorithm \ref{procedure:nonadapt} is run independently $10^4$ times and the achievable resolution is calculated. From Figure \ref{sim_non_adap_ddim}, we observe that 
our theoretical result in Theorem \ref{result:second} provides a good approximation to the non-asymptotic performance of the query procedure in Algorithm \ref{procedure:nonadapt}.

\begin{figure}[tb]
\centering
\includegraphics[width=.5\columnwidth]{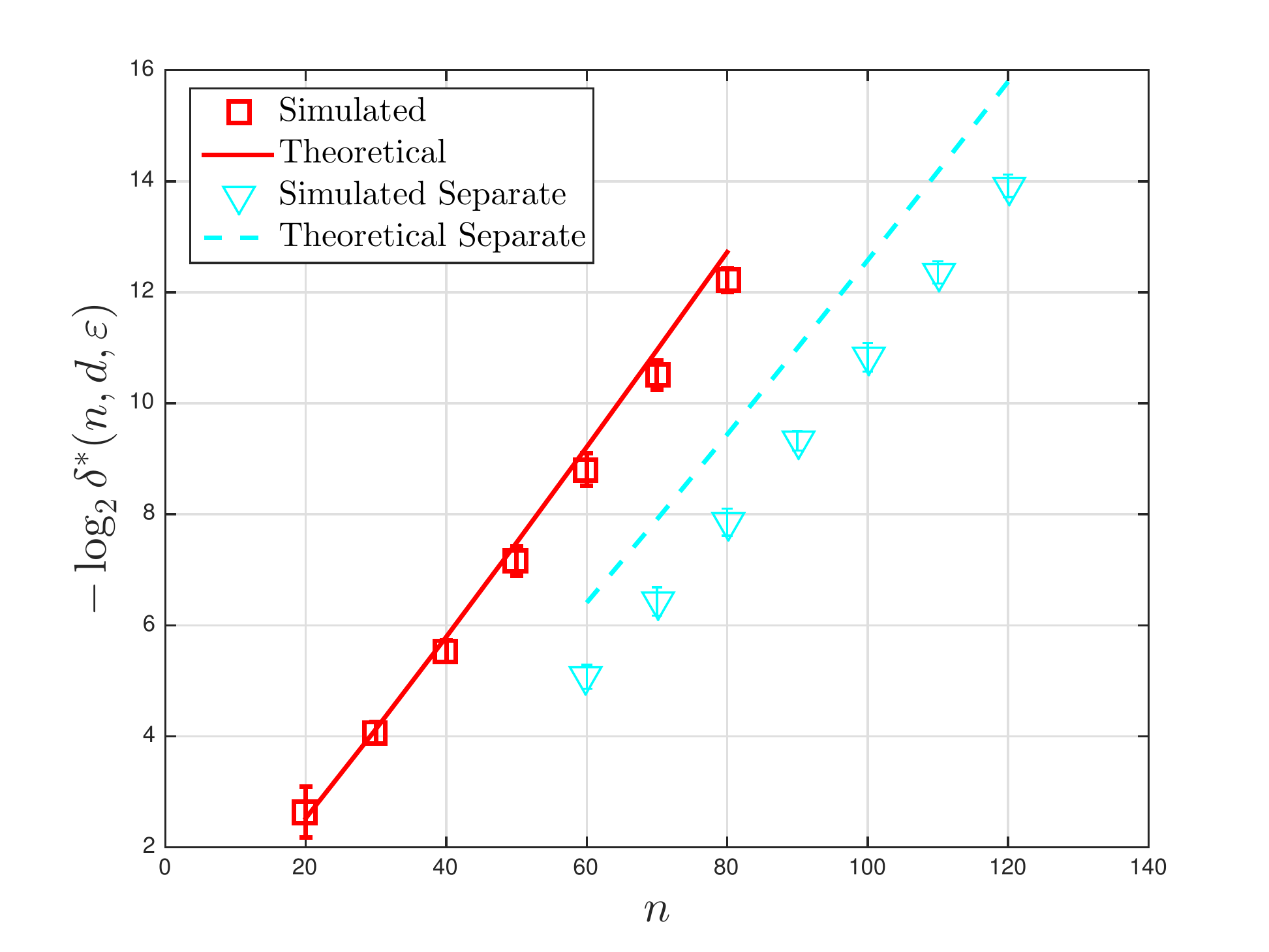}
\caption{
\blue{
Minimal achievable resolution of non-adaptive query procedures of searching for a uniformly distributed target variable $\bS=(S_1,S_2)$ over the unit cube of dimension $d=2$. The red line corresponds to the second-order asymptotic result in Theorem \ref{result:second} and the red square denotes the Monte Carlo simulation of the non-adaptive query procedure in Algorithm \ref{procedure:nonadapt}. The cyan dashed line and triangle correspond to the second-order asymptotic result in \eqref{performance:seperate} and the Monte Carlo simulation of Algorithm \ref{procedure:nonadapt} for separate searching over each dimension of $\bS$ respectively. The error bar for the simulated results denotes thirty standard deviations above and below the mean.}
\label{sim_non_adap_sep}
}
\end{figure}
In Figure \ref{sim_non_adap_sep}, for a $2$-dimensional target variable $\bS=(S_1,S_2)$, the simulated achievable resolutions is plotted for Algorithm \ref{procedure:nonadapt} and
a decoupled dimension-by-dimension search. Also shown are the theoretical predictions in Theorem \ref{result:second} and \eqref{performance:seperate} respectively. The gap between theoretical and simulated results arises since we have not accounted for the third-order term, which scales as $O(\log n)$. From Figure \ref{sim_non_adap_sep}, it could be observed that separate searching over each dimension is strictly suboptimal.

\subsection{Simultaneous Searching for Multiple Targets}
We specialize to the case where $k=2$, $d=1$ and the target random variables $(S_1,S_2)$ are both uniformly distributed over $[0,1]$. For this case, the minimal achievable resolution of the non-adaptive query procedure in Algorithm \ref{procedure:nonadapt:ktaget} is illustrated. Since we consider the measurement-dependent BSC, from \cite{kaspi2015searching}, we know that $t^*=2$.

In Figure \ref{sim_non_adap_ktarget}, the simulated achievable resolution is plotted for the non-adaptive query procedure in Algorithm \ref{procedure:nonadapt:ktaget} and compared to the theoretical predictions in Theorem \ref{result:second:ktarget}. For each $n\in\{40,45,\ldots,60\}$, the target resolution is chosen to be the reciprocal of $M$ satisfying
\begin{align}
\log M=\frac{nC_{[2]}(p^*,2)+\sqrt{nV_{[2]}(p^*,2)}\Phi^{-1}(\varepsilon)-\frac{1}{2}\log n}{2}.
\end{align}
For each $n\in\{40,45,\ldots,60\}$, the non-adaptive procedure in Algorithm \ref{procedure:nonadapt:ktaget} is run independently $10^4$ times and the achievable resolution is calculated.

\begin{figure}[tb]
\centering
\includegraphics[width=.5\columnwidth]{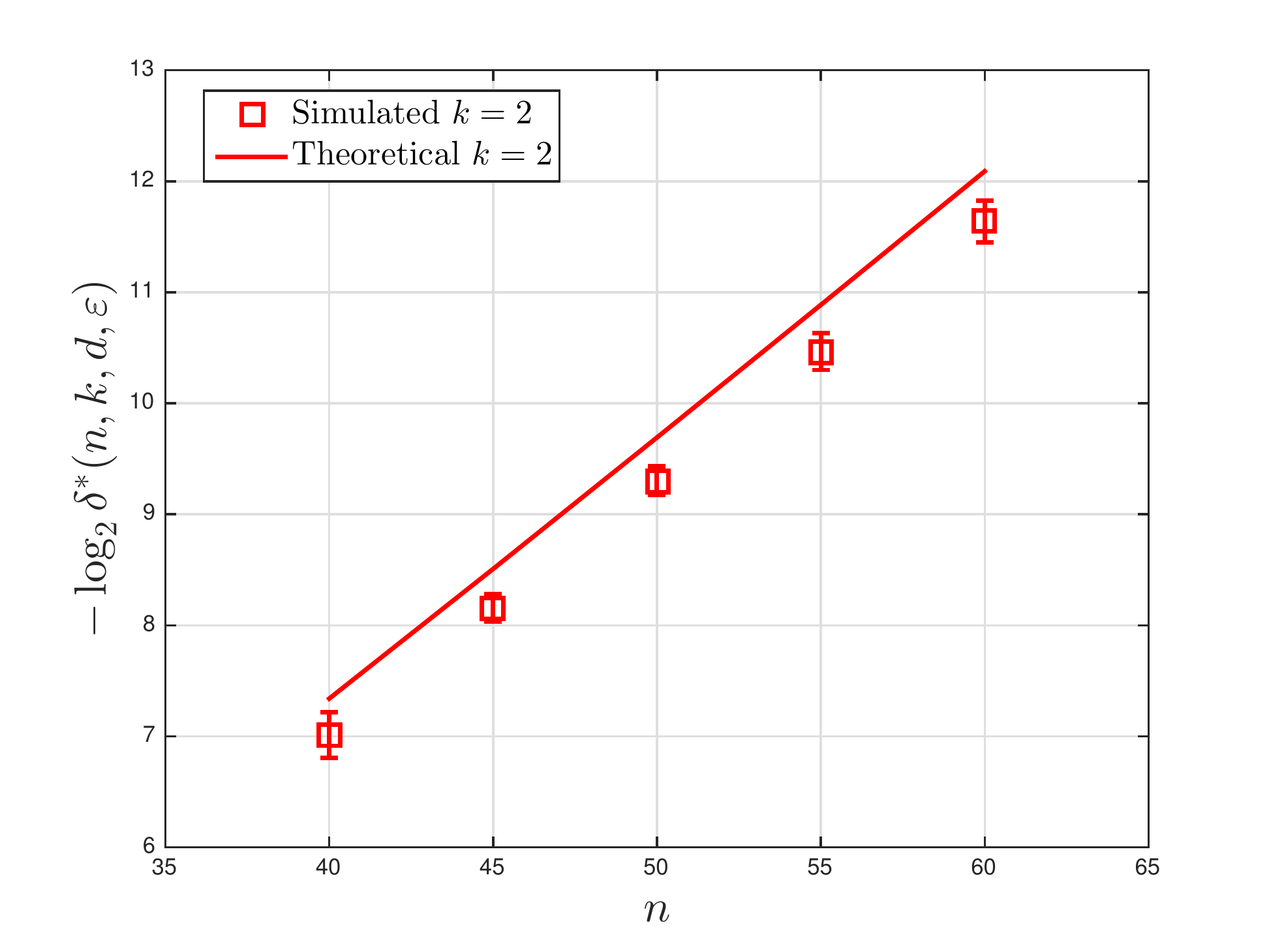}
\blue{
\caption{Minimal achievable resolution of non-adaptive query procedures for estimating $k=2$ independent one-dimensional target variables $(S_1,S_2)$ in the unit interval. The theoretical results correspond to the second-order asymptotic result in Theorem \ref{result:second:ktarget} and the simulate results correspond to the Monte Carlo simulation of the non-adaptive query procedure in Algorithm \ref{procedure:nonadapt:ktaget}. The error bar for simulated results denotes thirty standard deviations below and above the mean.}
\label{sim_non_adap_ktarget}
}
\end{figure}

\subsection{Comparisons of Adaptive Query Procedures}
We consider the case of $d=1$. In Figure \ref{sim_adap}, the \blue{simulated} achievable resolution of the adaptive query procedure in Algorithm \ref{procedure:adapt} is plotted and compared to the theoretical predictions in Theorem \ref{second:fbl:adaptive}. Furthermore, we compare our results with the simulated and asymptotic theoretical performance of the sorted PM algorithm in \cite{chiu2016sequential}. Each point of the simulated result is obtained as follows. We use $\varepsilon=0.1$ as a designed parameter. Given each $n\in\{20,30,\ldots,60\}$, the target resolution in the simulation was selected as the reciprocal of $M$ satisfying
\begin{align}
\log M=\frac{nC(\nu)}{1-\varepsilon}-\log n.
\end{align}
For each $n\in\{20,30,\ldots,60\}$, the adaptive query procedure in Algorithm \ref{procedure:adapt} is run independently $10^4$ times and the average stopping time $l_n$ is determined. 

{\color{blue}
\begin{figure}[tb]
\centering
\includegraphics[width=.5\columnwidth]{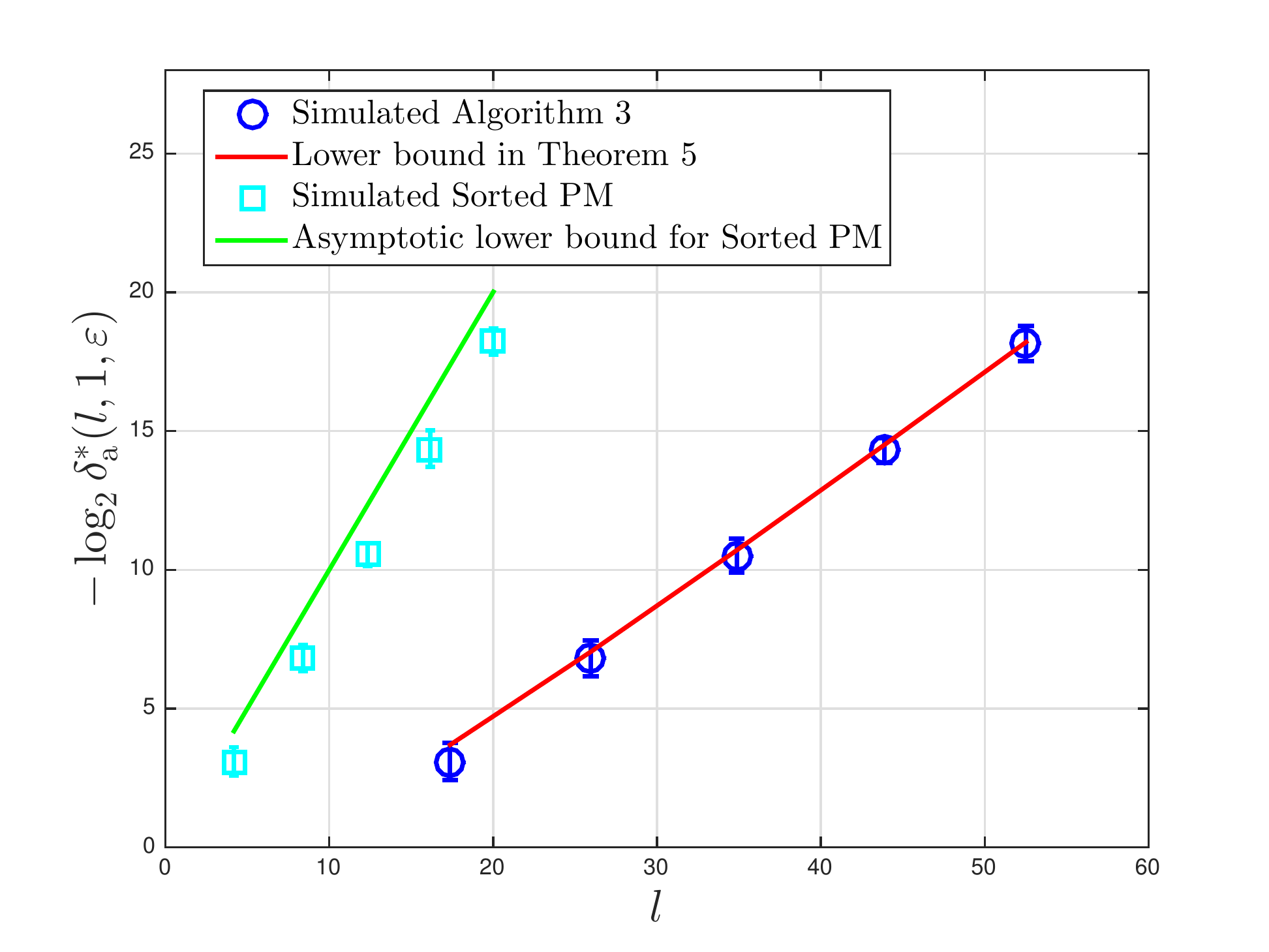}
\caption{Minimal achievable resolution of adaptive query procedures for estimating a uniformly distributed target $S\in[0,1]$ with parameter $\nu=0.4$. The red line corresponds to the theoretical value asserted in Theorem \ref{second:fbl:adaptive} and the blue dots denote the Monte Carlo simulation of the adaptive query procedure in Algorithm \ref{procedure:adapt} in Appendix \ref{proof:second:fbl:adaptive}. The green line denotes the asymptotic lower bound in \cite{chiu2016sequential} and the cyan square denotes the the Monte Carlo simulation of the sorted PM adaptive query procedure in \cite{chiu2016sequential}. The error bar for the simulated result denotes thirty standard deviations above and below the mean.}
\label{sim_adap}
\end{figure}

From Figure \ref{sim_adap}, we observe that our theoretical characterization in Theorem \ref{second:fbl:adaptive} provides a good approximation to the performance of Algorithm \ref{procedure:adapt}. Compared with the sorted PM algorithm in \cite{chiu2016sequential}, our adaptive query procedure in Algorithm \ref{procedure:adapt} is inferior in stop time if constrained to achieve the same resolution and thus the sorted PM algorithm has a better non-asymptotic performance in such a setting. However, if one considers a measurement-dependent BSC with a small parameter $\nu$ (say $\nu=0.01$), Algorithm \ref{procedure:adapt} can actually achieve a better performance when the tolerable excess-resolution probability $\varepsilon$ is relatively large, as demonstrated in the first remark after Theorem \ref{second:fbl:adaptive}. We do not include the small $\nu$ case in the numerical simulation since the time complexity of running numerical examples for such a case is high. Furthermore, for the case of $d=1$, note that the time complexity of the sorted PM algorithm is $O(M\log M)$~\cite[Table 2.1]{chiu2019noisy}, which is larger than $O(M)$ of Algorithm \ref{procedure:adapt}.}
}

\section{Conclusion}
We derived the minimal achievable resolution of non-adaptive query procedures for the noisy 20 questions problem where the channel from the oracle to the player is a measurement-dependent discrete channel. Furthermore, we generalized our results to derive bounds on the achievable resolution of adaptive query procedures and discussed the intrinsic resolution benefit due to adaptivity.

\blue{
There are several avenues for future research. Firstly, for adaptive query procedures, we derived achievability results on searching for a single target over the unit interval. It would be fruitful to apply novel techniques to derive a matching converse bound on the minimal achievable resolution of optimal adaptive query procedures. It would also be interesting to derive results for cases of searching for a multidimensional target variable and simultaneous searching for multiple targets for adaptive query procedures. Secondly, we considered discrete channel (finite output alphabet). It would be interesting to extend our results to continuous channels such as a measurement-dependent additive white Gaussian noise channel~\cite{lalitha2018improved}. Thirdly, in this paper, we were interested in fundamental limits of optimal query procedures. One can propose low-complexity practical query procedures and compare the performances of their proposed query procedures to our derived benchmarks. Finally, we considered stationary targets in the paper. In future, one can extend our results to searching for a moving target with unknown speed~\cite{6970834}.}

\appendix
\subsection{Proof of the Non-Asymptotic Achievability Bound (Theorem \ref{ach:fbl})}
\label{proof:ach}
In this subsection, we analyze the performance of the non-adaptive query procedure in Algorithm \ref{procedure:nonadapt} using ideas from channel coding~\cite{shannon1948mathematical}. To begin with, we first briefly recall the query procedure in Algorithm \ref{procedure:nonadapt}.

Fix any $M\in\bbN$, we partition the unit cube of dimension $d$ into $M^d$ equal-sized disjoint cubes $\{\calS_{i_1,\ldots,i_d}\}_{(i_1,\ldots,i_d)\in[M]^d}$. Let $\bx=\{x^n(i_1,\ldots,i_d)\}_{(i_1,\ldots,i_d)\in[M]^d}$ be a sequence of $M^d$ binary codewords. For each $t\in[n]$, the $t$-th query is designed as
\begin{align}
\calA_t
&:=\bigcup_{(i_1,\ldots,i_d)\in[M]^d:x_t(i_1,\ldots,i_d)=1}\calS_{i_1,\ldots,i_d}\label{def:query:ddim},
\end{align}
where $x_t(i_1,\ldots,i_d)$ denotes the $t$-th element of the codeword $x^n(i_1,\ldots,i_d)$. By the above query design, our $t$-th query to the oracle is whether the target $\bs=(s_1,\ldots,s_d)$ lies in the union of cubes with indices of the codewords whose $t$-th element are one. Hence, for each $t\in[n]$, the $t$-th element of each codeword can be understood as an indicator function for whether a particular cube would be queried in $i$-th question, with one being positive and zero being negative.

For subsequent analysis, given any $s\in[0,1]$, define the following quantization function
\begin{align}
\rmq(s):=\lceil sM\rceil\label{def:qs},
\end{align} 
Given any d-dimensional target variable $\bs$, we use $\bw=(w_1,\ldots,w_d)$ to denote the vector $(\rmq(s_1),\ldots,\rmq(s_d))$, i.e., $w_i=\rmq(s_i)$. Given $\bs$, the noiseless answer of the oracle to the query $\calA_t$ is
\begin{align}
Z_t
&=\bbo(\bs\in\calA_t)=\bbo\bigg(\bs\in\bigcup_{(i_1,\ldots,i_d)\in[M]^d:x_t(i_1,\ldots,i_d)=1}\calS_{i_1,\ldots,i_d}\bigg)\\
&=\bbo(x_t(\bw)=1)=x_t(\bw).
\end{align}
Then the noisy response $Y_t$ is obtained by passing $x_t(\rmq(\bs))$ over the measurement-dependent channel.

Given noisy responses $Y^n=(Y_1,\ldots,Y_n)$, the decoder produces estimates $\hatS=(\hatS_1,\ldots,\hatS_d)$ using the following two-step decoding:
\begin{enumerate}
\item the player first estimates $\bw$ as $\hat{\bW}=(\hatW_1,\ldots,\hatW_d)$ using a maximal mutual information decoder, i.e.,
\begin{align}
\hat{\bW}=(\hatW_1,\ldots,\hatW_d)=\max_{(\tili_1,\ldots,\tili_d)\in[M]^d}\imath_p(x^n(\tili_1,\ldots,\tili_d);Y^n);
\end{align}
\item the player then produces estimates $\hatS=(\hatS_1,\ldots,\hatS_d)$ as follows:
\begin{align}
\hatS_j=\frac{2\hatW_j-1}{2M}
\end{align}
for all $j\in[d]$.
\end{enumerate}

It is easy to verify that using the above query procedure, the estimate $\hatS_i$ is within $\frac{1}{M}$ of the target $s_i$ for all $i\in[d]$ if our estimate $\hat{\bW}=\bw$. Thus the excess-resolution probability of the multidimensional estimation is upper bounded by the error probability of channel coding with $M^d$ messages over a measurement-dependent codebook.

For subsequent analysis, we use $\bW=(W_1,\ldots,W_d)$ to denote the quantized vector of a target vector $\bS=(S_1,\ldots,S_d)\in[0,1]^d$, i.e., $W_i=\rmq(S_i)$ for each $i\in[d]$. We use $\bw$ to denote a particular realization. Note that each pdf $f_{\bS}\in\calF([0,1]^d)$ of the target vector $\bS$ induces a pmf $P_{\bW}\in\calP([M]^d)$. Using our query procedure, we have
\begin{align}
\nn&\sup_{f_{\bS}\in\calF([0,1]^d)}\Pr\left\{\exists~i\in[d],~|\hatS_i-S|>\frac{1}{M}\right\}\\*
&\leq \sup_{f_{\bS}\in\calF([0,1]^d)}\Pr\{\hat{\bW}\neq \bW\}\\
&\leq \sup_{P_{\bW}\in\calP([M]^d)}\Pr\{\hat{\bW}\neq \bW\}\\
&\leq \sup_{P_{\bW}\in\calP([M]^d)}\sum_{\bw}P_{\bW}(\bw)\Pr\{\exists~\bar{\bw}\in[M]^d:~\bar{\bw}\neq \bw,~\imath_p(x^n(\bar{\bw});Y^n)\geq \imath_p(x^n(\bw);Y^n)\}\label{specifydist}\\
&=:\sup_{P_{\bW}\in\calP([M]^d)}\sum_{\bw}P_{\bW}(\bw)\rmP_\rme(\bx,P_{\bW})\label{uppexcessp},
\end{align}
where the probability in \eqref{specifydist} is calculated with respect to the measurement-dependent channel 
\begin{align}
P_{Y^n|X^n}^{\calA^n}(y^n|x^n(\bw))
&=\prod_{t\in[n]}P_{Y|X}^{\calA_t}(y_t|x_t(\bw))\\
&=\prod_{t\in[n]}P_{Y|X}^{q_{t,d}^M(\bx)}(y_t|x_t(\bw))\label{useai},
\end{align}
and in \eqref{useai}, we define
\begin{align}
q_{t,d}^M(\bx):=\frac{1}{M^d}\sum_{\bw\in[M]^d}x_t(\bw).
\end{align}

Note that $\rmP_\rme(\bx,P_{\bW})$ is essentially the error probability of transmitting a message $\bW\in[M]^d$ with pmf $P_{\bW}$ over the measurement-dependent channel $P_{Y^n|X^n}^{\calA^n}$. Thus, to further bound $\rmP_\rme(\bx,P_{\bW})$, we need to analyze the error probability of a channel coding problem over a \emph{codebook dependent channel} where the channel output $Y^n$ depends on all codewords $\{x^n(i_1,\ldots,i_d)\}_{(i_1,\ldots,i_d)\in[M]^d}$. In contrast, in the classical channel coding problem, the channel output depends only on the channel input with respect to the message. However, as we shall see, using the change-of-measure technique and the assumption in \eqref{assump:continuouschannel}, with negligible loss in error probability, we can replace the measurement-dependent channel with a memoryless channel $(P_{Y|X}^p)^n$.

For this purpose, we use random coding ideas~\cite{gallager_ensemble}. Fix a Bernoulli distribution $P_X\in\calP(\{0,1\})$ with parameter $p$, i.e., $P_X(1)=p$. Let $\bX:=\{X^n(i_1,\ldots,i_d)\}_{(i_1,\ldots,i_d)\in[M]^d}$ be $M^d$ independent binary sequences, each generated i.i.d. from $P_X$. Furthermore, for any $(M,d,p,\eta)\in\bbN^2\times(0,1)\times\bbR_+$, define the following typical set of binary codewords $\bx$:
\begin{align}
\calT^n(M,d,p,\eta)
&:=\bigg\{\bx=\{x^n(i_1,\ldots,i_d)\}_{(i_1,\ldots,i_d)\in[M]^d}\in\calX^{Mdn}:\left|q_{t,d}^M(\bx)-p\right|\leq \eta,~\forall~t\in[n]\bigg\}\label{def:typical}.
\end{align}
For any $\bx\in\calT^n(M,d,p,\eta)$, recalling the query design in \eqref{def:query:ddim} and the condition in \eqref{assump:continuouschannel}, we have
\begin{align}
\log \frac{P_{Y^n|X^n}^{\calA^n}(y^n|x^n)}{(P_{Y|X}^p)^n(y^n|x^n)}
&=\sum_{t\in[n]}\log\frac{P_{Y|X}^{q_{t,d}^M(\bx)}(y_i|x_i)}{P_{Y|X}^p(y_i|x_i)}\leq n\eta c(p)\label{fromassumption}.
\end{align}

Note that given any $\bw\in[M]^d$, the joint distribution of $(\bX,Y^n)$ under the current query procedure is
\begin{align}
P_{\bX Y^n}^{\rm{md},\bw}(\bx,y^n)
&=\Big(\prod_{\bar{\bw}\in[M]^d}P_X^n(x^n(\bar{\bw}))\Big)\Big(\prod_{t\in[n]}P_{Y|X}^{q_{t,d}^M(\bx)}(y_t|x_t(\bw))\Big)\label{truedis}.
\end{align}
and furthermore, we need the following alternative joint distribution of $(\bX,Y^n)$ to apply the change-of-measure idea
\begin{align}
P_{\bX Y^n}^{p,\bw}(\bx,y^n)
&=\Big(\prod_{\bar{\bw}\in[M]^d}P_X^n(x^n(\bar{\bw}))\Big)\Big(\prod_{t\in[n]}P_{Y|X}^p(y_t|x_t(\bw))\Big)\label{altdis}.
\end{align}

For any message distribution $P_{\bW}\in\calP([M]^d)$,
\begin{align}
\mathbb{E}_{\bX}[\rmP_\rme(\bX,P_{\bW})]
&\leq \Pr\{\bX\notin\calT^n(M,d,p,\eta)\}+\mathbb{E}_{\bX}[\rmP_\rme(\bX,P_{\bW})\bbo(\bX\in\calT^n(M,d,p,\eta))]\\
&\leq 4n\exp(-2M^d\eta^2)+\mathbb{E}_{\bX}[\rmP_\rme(\bX,P_{\bW})\bbo(\bX\in\calT^n(M,d,p,\eta))]\label{useatypical},
\end{align}
where \eqref{useatypical} follows from \cite[Lemma 22]{tan2014state}, \blue{which provides an upper bound on the probability of the atypicality of i.i.d. random variables and implies that
\begin{align}
\Pr\{\bX\notin\calT^n(M,d,p,\eta)\}\leq 4n\exp(-2M^d\eta^2).
\end{align}
}
The second term in \eqref{useatypical} can be further upper bounded as follows:
\begin{align}
\nn&\mathbb{E}_{\bX}[\rmP_\rme(\bX,P_{\bW})\bbo(\bX\in\calT^n(M,d,p,\eta))]\\*
&=\sum_{\bw}P_{\bW}(\bw)\mathbb{E}_{P_{\bX Y^n}^{\rm{md},\bw}}[\bbo(\bX\in\calT^n(M,d,p,\eta))\bbo(\exists~\bar{\bw}\in[M]^d:~\bar{\bw}\neq\bw,~\imath_p(X^n(\bar{\bw});Y^n)\geq \imath_p(X^n(\bw);Y^n))]\\
&\leq \exp(n\eta c(p))\sum_{\bw}P_{\bW}(\bw)\Pr_{P_{\bX Y^n}^{p,\bw}}\{\exists~\bar{\bw}\in[M]^d:~\bar{\bw}\neq \bw,~\imath_p(X^n(\bar{\bw});Y^n)\geq \imath_p(X^n(\bw);Y^n)\}\label{cofmeasure}\\
&\leq \exp(n\eta c(p))\sum_{\bw}P_{\bW}(\bw)\sum_{\bar{\bw}\in[M]^d:\bar{\bw}\neq \bw}\Pr_{P_{\bX Y^n}^{p,\bw}}\{\imath_p(X^n(\bar{\bw});Y^n)\geq \imath_p(X^n(\bw);Y^n)\}\label{useunbound}\\
&=\exp(n\eta c(p))\mathbb{E}_{P_{X^nY^n}}[\min\{1,M^d\Pr_{P_X^n}\{\imath_p(\barX^n;Y^n)\geq \imath_p(X^n;Y^n)|X^n,Y^n\}]\}\label{rcu},
\end{align}
where \eqref{cofmeasure} follows from \eqref{fromassumption} and the change of measure technique, \eqref{useunbound} follows from the union bound, \eqref{rcu} follows by noting that the codewords $\{X^n(i_1,\ldots,i_d)\}_{(i_1,\ldots,i_d)\in[M]^d}$ are independent under $P_{\bX Y^n}^{\rm{alt}}$, the total number of codewords is no greater than $M^d$ and by applying ideas leading to the random coding union bound~\cite{polyanskiy2010finite}. In \eqref{rcu}, the joint distribution of $(X^n,Y^n)$ is
\begin{align}
P_{X^nY^n}(x^n,y^n)=\prod_{t\in[n]}P_X(x_t)P_{Y|X}^{p}(y_t|x_t).
\end{align}

Combining \eqref{useatypical} and \eqref{rcu}, we conclude that there exists a sequence of binary codewords $\bx$ such that $\rmP_\rme(\bx,P_{\bW})$ is upper bounded by the desired quantity for all message distributions $P_{\bW}\in\calP([M]^d)$ and thus the proof of Theorem \ref{ach:fbl} is completed.

\subsection{Proof of the Non-Asymptotic Converse Bound (Theorem \ref{fbl:converse})}
\label{proof:converse}

\subsubsection{Converse Proof}
Consider any sequence of non-adaptive queries $\calA^n\subseteq([0,1]^d)^n$ and any decoding function $g:\calY^n\to [0,1]^d$ such that the worst case excess-resolution probability with respect to a resolution $\delta$ is upper bounded by $\varepsilon$, i.e.,
\begin{align}
\sup_{f_{\bS}\in\calF([0,1]^d)}\Pr\big\{\exists~i\in[d]:~|\hatS_i-S_i|>\delta\big\}\leq\varepsilon\label{error4converse}.
\end{align}
As a result, for uniformly distributed target vector $\bS=(S_1,\ldots,S_d)$, the excess-resolution probability with respect to $\delta$ is also upper bounded by $\varepsilon$. In the rest of the proof, we consider a \emph{uniformly} distributed $d$-dimensional target $\bS$.

Let $\beta$ be any real number such that $\beta\leq\frac{1-\varepsilon}{2}\leq 0.5$ and let $\tilM:=\lfloor\frac{\beta}{\delta}\rfloor$. Define the following quantization function
\begin{align}
\rmq_\beta(s):=\lceil s\tilM\rceil,~\forall~s\in\calS\label{def:qbeta}.
\end{align}

Given any queries $\calA^n\in([0,1]^d)^n$, the noiseless responses from the oracle are $X^n=(X_1,\ldots,X_n)$ where for each $t\in[n]$, $X_t=\bbo(\bS\in\calA_t)$ is a Bernoulli random variable with parameter being the volume of $\calA_t$, which this follows from the definition of the measurement-dependent channel and the fact that the target variable $\bS$ is uniformly distributed. The noisy responses $Y^n$ is the output of passing $X^n$ over the measurement-dependent channel $P_{Y^n|X^n}^{\calA^n}$. Finally, an estimate $\hat{\bS}=(\hatS_1,\ldots,\hatS_d)$ is produced using the decoding function $g$.

For simplicity, let $\bW:=(W_1,\ldots,W_d)=(\rmq_\beta(S_1),\ldots,\rmq_\beta(S_d))$ and let $\hat{\bW}:=(\rmq_\beta(\hatS_1),\ldots,\rmq_\beta(\hatS_d))$. \blue{Similarly to} \cite{kaspi2018searching}, we have that
\begin{align}
\Pr\{\hat{\bW}\neq \bW\}
&=\Pr\{\hat{\bW}\neq \bW,~\exists~i\in[d]:~|\hatS_i-S_i|>\delta\}+\Pr\{\hat{\bW}\neq \bW,~\forall~i\in[d]:~|\hatS_i-S_i|\leq\delta\}\\
&\leq \Pr\{\exists~i\in[d]:~|\hatS_i-S_i|>\delta\}+\Pr\{\hat{\bW}\neq \bW,~\forall~i\in[d]:~|\hatS_i-S_i|\leq\delta\}\\
&\leq \varepsilon+\Pr\{\hat{\bW}\neq \bW,~\forall~i\in[d]:~|\hatS_i-S_i|\leq\delta\}\label{useerror4converse}\\
&\leq \varepsilon+\Pr\{\exists~i\in[d]:~\hatW_i\neq W_i~\mathrm{and}~|\hatS_i-S_i|\leq\delta\}\\
&\leq \varepsilon+\sum_{i\in[d]}\Pr\{\hatW_i\neq W_i~\mathrm{and}~|\hatS_i-S_i|\leq\delta\}\\
&\leq \varepsilon+2d\delta\tilM\label{boundaryerror}\\
&\leq\varepsilon+2d\beta\label{use:tilM},
\end{align}
where \eqref{useerror4converse} follows from \eqref{error4converse}, \eqref{boundaryerror} follows since i) only when $S_i$ is within $\delta$ to the boundaries (left and right) of the sub-interval with indices $W_i=\rmq_{\beta}(S_i)$ can the events $\hatW_i\neq W_i$ and $|\hatS_i-S_i|\leq \delta$ occur simultaneously, ii) $S_i$ is uniformly distributed over $\calS$ and thus iii) the probability of the event $\{\hatW_i\neq W_i,~|\hatS_i-S_i|\leq \delta\}$ is upper bounded by $2\delta\tilM$, and \eqref{use:tilM} follows from the definition of $\tilM$. To ease understanding of the critical step \eqref{boundaryerror}, we have provided a figure illustration in Figure \ref{figureillus4converse}.

\begin{figure}[tb]
\centering
\setlength{\unitlength}{0.5cm}
\scalebox{1}{
\begin{picture}(20,3)
\linethickness{1pt}
\put(0,0.5){\makebox(0,0){...}}
\put(1,0){\line(1,0){18}}
\put(20,0.5){\makebox(0,0){...}}
\put(1,0){\line(0,1){1}}
\put(7,0){\line(0,1){1}}
\put(13,0){\line(0,1){1}}
\put(19,0){\line(0,1){1}}
\put(8,0){\line(0,1){0.5}}
\put(9,0){\line(0,1){0.5}}
\put(10,0){\line(0,1){0.5}}
\put(11,0){\line(0,1){0.5}}
\put(12,0){\line(0,1){0.5}}
\put(6,0){\line(0,1){0.5}}
\put(14,0){\line(0,1){0.5}}
\put(1.1,1.5){\makebox(0,0){$(k-2)\delta/\beta$}}
\put(7.1,1.5){\makebox(0,0){$(k-1)\delta/\beta$}}
\put(13,1.5){\makebox(0,0){$ k\delta/\beta$}}
\put(19.1,1.5){\makebox(0,0){$(k+1)\delta/\beta$}}
\put(7,0){\transparent{0.3}\color{green}{\rule{\unitlength}{1\unitlength}}}
\put(12,0){\transparent{0.3}\color{blue}{\rule{\unitlength}{1\unitlength}}}
\end{picture}
}
\caption{Figure illustration of \eqref{boundaryerror} for $S_1$. Let $\delta=\frac{1}{600}$ and $\beta=\frac{1}{6}$. Thus, we partition the unit interval $[0,1]$ into $\tilM=100$ sub-intervals each with length $\frac{1}{100}$. In the figure, we plot three consecutive sub-intervals with indices $(k-1,k,k+1)$ for some $k\in[2:\tilM-1]$. Note that the $k$-th interval starts from $\frac{(k-1)\delta}{\beta}$ and end at $\frac{k\delta}{\beta}$ and contains $\frac{1}{\beta}$ small intervals, each of length $\delta$. Suppose $S_1$ lies in $k$-th sub-interval, then only if $S_1$ is with $\delta=\frac{1}{600}$ of the boundaries in $k$-th sub-interval, denoted with shaded color, can we find $\hatS_1$ in adjacent sub-interval such that $|\hatS_1-S_1|\leq \delta$ and $\hatW_1=\rmq_{\beta}(\hatS_1)\neq\rmq_{\beta}(S_1)=W_1$.}
\label{figureillus4converse}
\end{figure}
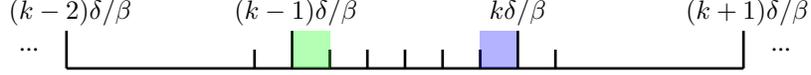

Using \eqref{use:tilM}, we have that the excess-resolution probability of searching for a multidimensional target variable is lower bounded by 
\begin{align}
\varepsilon
&\geq \Pr\{\hat{\bW}\neq \bW\}-2d\beta\label{conversechannel}.
\end{align}

Similarly to the definition of $\Gamma(\cdot)$ in \eqref{def:Gamma}, we define the function $\tilde{\Gamma}:[\tilM]^d\to[\tilM^d]$ as
\begin{align}
\tilde{\Gamma}(i_1,\ldots,i_d)
&=1+\sum_{j\in[d]}i_j\tilM^{d-j}.
\end{align}
Using \eqref{conversechannel}, we have
\begin{align}
\varepsilon
&\geq \Pr\big\{\tilde{\Gamma}(\hat{\bW})\neq \tilde{\Gamma}(\bW)\big\}-2d\beta\label{usegammatocon}\\
&=\Pr\{\hatW\neq W\}-2d\beta\label{finalequation}
\end{align}
where \eqref{usegammatocon} follows since $\tilde{\Gamma}(\cdot)$ is a one-to-one mapping from $[\tilM]^d$ to $[\tilM^d]$ and in \eqref{finalequation}, we define $\hatW=\Gamma(\hat{\bW})\in[M^d]$ and define $W=\Gamma(\bW)\in[M^d]$ similarly. Note that from the problem formulation, since $\bS$ is uniformly distributed over $[0,1]^d$, we have that $\bW$ is uniformly distributed over $[\tilM]^d$ and thus $W$ is uniformly distributed over $[M^d]$.

Note that given queries $\calA^n$, the probability $\Pr\{\hatW\neq W\}$ is the average error probability of channel coding with deterministic states when the distribution of the channel inputs is $P_{X^n}^{\calA^n}$ and the message $W$ is uniformly distributed over $[\tilM^d]$. Therefore, we can use converse bounds for channel coding to bound achievable resolution $\delta$ (via $\tilM$).

Similar as \cite[Proposition 4.4]{TanBook} \blue{which provides a finite blocklength converse bound for the channel coding problem}, we have that for any $\kappa\in(0,1-\varepsilon-2d\beta)$, 
\begin{align}
\log \tilM^d
&\leq\inf_{Q_{Y^n}\in\calP(\calY^n)}
\sup\bigg\{t\Big|\Pr\bigg\{\log\frac{P_{Y^n|X^n}^{\calA^n}(Y^n|X^n)}{Q_{Y^n}(Y^n)}\leq t\bigg\}\leq \varepsilon+2d\beta+\kappa\bigg\}-\log\kappa\label{fbl:converse:final}\\
&=\sup\bigg\{t\Big|\Pr\bigg\{\sum_{i\in[n]}\log\frac{P_{Y|X}^{\calA_i}(Y_i|X_i)}{P_Y^{|\calA_i|,|\calA_i|}(Y_i)}\leq t\bigg\}\leq \varepsilon+2d\beta+\kappa\bigg\}-\log\kappa\label{nconverse},
\end{align} 
where \eqref{nconverse} follows by choose $Q_Y^n$ being the marginal distribution of $Y^n$ induced distribution of $P_{X^n}^{\calA^n}$ and the measurement-dependent channel $P_{Y^n|X^n}^{\calA^n}$.  Note that \eqref{fbl:converse:final} is slightly different from \cite[Proposition 4.4]{TanBook}. In fact, we follow the proof of \cite[Proposition 4.4]{TanBook} with $M$ replaced by $\tilM^d$ and $\varepsilon$ replaced by $\varepsilon+2d\beta$ till the left hand side of  
\cite[Eq. (4.18)]{TanBook}. Then, we use the definition of the so called $\varepsilon$-hypothesis testing divergence~\cite[Eq. (2.9)]{TanBook}.

Since \eqref{nconverse} holds for any sequence of queries $\calA^n\in[0,1]^{nd}$ and any decoder $g:\calY^n\to[0,1]^d$ satisfying \eqref{error4converse}, recalling the definition of $\tilM$ and the definition of $\imath_{\calA_i}(\cdot)$, we have
\begin{align}
-d\log\delta\leq -d\log\beta-\log\kappa+\sup_{\calA^n\in[0,1]^{nd}}\sup\bigg\{t\in\bbR_+\Big|\Pr\Big\{\sum_{i\in[n]}\imath_{\calA_i}(X_i;Y_i)\leq t\bigg\}\leq \varepsilon+2d\beta+\kappa\Big\}.
\end{align}

\subsection{Proof of Second-Order Asymptotics (Theorem \ref{result:second})}
\label{proof:second}
\subsubsection{Achievability Proof}
Invoking Theorem \ref{ach:fbl} with the capacity-achieving parameter $q\in\calP_{\rm{ca}}$, we have that for any $\eta\in\bbR_+$, there exists a non-adaptive query procedure with $n$ queries such that
\begin{align}
\rmP_\rme^n\left(\frac{1}{M}\right)
&\leq 4n\exp(-2M^d\eta^2)+\exp(n\eta c(q))\mathbb{E}[\min\{1,M^d\Pr\{\imath_q(\barX^n;Y^n)\geq \imath_q(X^n;Y^n)\}\}]\label{step1}.
\end{align}
We first bound the expectation term in \eqref{step1} as follows:
\begin{align}
\nn&\mathbb{E}[\min\{1,M^d\Pr\{\imath_q(\barX^n;Y^n)\geq \imath_q(X^n;Y^n)\}\}]\\*
&\leq \Pr\left\{M^d\exp(-\imath_q(X^n;Y^n))\geq \frac{1}{\sqrt{n}}\right\}+\frac{1}{\sqrt{n}}\label{usemany}\\
&=\Pr\left\{d\log M-\imath_q(X^n;Y^n)\geq -\log \sqrt{n}\right\}+\frac{1}{\sqrt{n}}\\
&=\Pr\Big\{\sum_{i\in[n]}\imath_{q,q}(X_i;Y_i)\leq d\log M+\log(\sqrt{n})\Big\}+\frac{1}{\sqrt{n}}\label{step2},
\end{align}
where \eqref{usemany} follows from i) the change of measure technique which states that
\begin{align}
\Pr\{\imath_q(\barX^n;y^n)\geq t\}
&=\sum_{\barx^n}P_X^n(\barx^n)\bbo(\imath_q(\barx^n;y^n)\geq t)\leq \sum_{\barx^n} P_{X|Y}^q(\barx^n|y^n)\exp(-t)=\exp(-t),
\end{align}
and ii) the result in \cite[Eq. (37)]{scarlett2017mismatch} saying that $\mathbb{E}[\min\{1,J\}]\leq \Pr\{J>\frac{1}{\sqrt{n}}\}+\frac{1}{\sqrt{n}}$ for any $n\in\bbN$.

Now choose $M$ such that
\begin{align}
d\log M=nC+\sqrt{nV_\varepsilon}\Phi^{-1}(\varepsilon)-\frac{1}{2}\log n,
\end{align}
and let 
\begin{align}
\eta=\sqrt{\frac{d\log M}{2M^d}}=O\left(\frac{\sqrt{n}}{\exp(nC/2)}\right).
\end{align}
Thus, we have
\begin{align}
4n\exp(-2M^d\eta^2)
&=\frac{4n}{M^d}=
4\exp\bigg(-nC-\sqrt{nV_\varepsilon}\Phi^{-1}(\varepsilon)+\frac{3}{2}\log n\bigg)=O(\exp(-nC))\label{result1},
\end{align}
and
\begin{align}
\exp(n\eta c(q))
&=1+n\eta c(q)+o(n\eta c(q))
=1+O\left(\frac{n^{3/2}}{\exp(nC/2)}\right)\label{result2}.
\end{align}

Finally, applying the Berry-Esseen theorem to \eqref{step2}, we have that for any $q\in\calP_{\rm{ca}}$ and any $\varepsilon\in[0,1)$,
\begin{align}
\mathbb{E}[\min\{1,M^d\Pr\{\imath_{q}(\barX^n;Y^n)\geq \imath_{q}(X^n;Y^n)\}\}]
&\leq \varepsilon+O\left(\frac{1}{\sqrt{n}}\right)\label{result3}.
\end{align}

Combining \eqref{step1} and the results in \eqref{result1} to \eqref{result3}, we have that for $n$ sufficiently large,
\begin{align}
-d\log\delta^*(n,d,\varepsilon)
&\geq nC+\sqrt{nV}\Phi^{-1}(\varepsilon)-\frac{1}{2}\log n.
\end{align}

\subsubsection{Converse Proof}
We now proceed with the converse proof. Given any $\varepsilon\in[0,1)$, for any $\beta\in(0,\frac{1-\varepsilon}{2})$ and any $\kappa\in(0,1-\varepsilon-2d\beta)$, from Theorem \ref{fbl:converse}, we have
\begin{align}
-d\log \delta^*(n,d,\varepsilon)\leq -d\log \beta-\log\kappa+\sup_{\calA^n\in[0,1]^{nd}}\sup\bigg\{t\Big|\Pr\Big\{\sum_{i\in[n]}\imath_{\calA_i}(X_i;Y_i)\leq t\bigg\}\leq \varepsilon+2d\beta+\kappa\Big\}\label{step1:c}.
\end{align}
We first analyze the probability term in \eqref{step1:c}. Given any sequence of queries $\calA^n$, let
\begin{align}
C_{\calA^n}&:=\frac{1}{n}\sum_{i\in[n]}\mathbb{E}[\imath_{\calA_i}(X_i;Y_i)]\label{def:CA^n},\\
V_{\calA^n}&:=\frac{1}{n}\sum_{i\in[n]}\mathrm{Var}[\imath_{\calA_i}(X_i;Y_i)],\\
T_{\calA^n}&:=\frac{1}{n}\sum_{i\in[n]}\mathbb{E}[|\imath_{\calA_i}(X_i;Y_i)-\mathbb{E}[\imath_{\calA_i}(X_i;Y_i)]|^3],
\end{align}
Assume that there exists $V_->0$ such that $V_-\leq V_{\calA^n}$. Applying the Berry-Esseen theorem~\cite{berry1941accuracy,esseen1942liapounoff}, we have that
\begin{align}
\sup\bigg\{t \Big|\Pr\Big\{\sum_{i\in[n]}\imath_{\calA_i}(X_i;Y_i)\leq t\Big\}\leq \varepsilon+2d\beta+\kappa\bigg\}\leq nC_{\calA^n}+\sqrt{nV_{\calA^n}}\Phi^{-1}\bigg(\varepsilon+2d\beta+\kappa +\frac{6T_{\calA^n}}{\sqrt{nV_-^3}}\bigg)\label{step2:c}.
\end{align}
Let $\beta$ and $\kappa$ be chosen so that
\begin{align}
d\beta=\kappa=\frac{1}{\sqrt{n}}.
\end{align}

Using \eqref{step1:c} and \eqref{step2:c}, we have
\begin{align}
-d\log\delta^*(n,d,\varepsilon)
&\leq \log n+\sup_{\calA^n\in[0,1]^{nd}}\bigg(nC_{\calA^n}+\sqrt{nV_{\calA^n}}\Phi^{-1}\bigg(\varepsilon+\frac{2}{\sqrt{n}} +\frac{6T_{\calA^n}}{\sqrt{nV_-^3}}\bigg)\bigg)\label{touseinl2}.
\end{align}

For any sequence of queries $\calA^n$, we have
\begin{align}
C_{\calA^n}
&\leq \sup_{\calA\subseteq[0,1]^d}\mathbb{E}[\imath_{\calA}(X;Y)]=\sup_{p\in[0,1]}\mathbb{E}[\imath_{p}(X;Y)]=C\label{max:C1}.
\end{align}
\blue{Combining \eqref{touseinl2} and \eqref{max:C1}}, when $n$ is sufficiently large, for any $\varepsilon\in[0,1)$,
\begin{align}
-d\log\delta^*(n,d,\varepsilon)
&\leq \log n+\sup_{\calA^n:|\calA_i|=q^*,~\forall i\in[n]}\bigg(nC_{\calA^n}+\sqrt{nV_{\calA^n}}\Phi^{-1}\bigg(\varepsilon+\frac{2}{\sqrt{n}} +\frac{6T_{\calA^n}}{\sqrt{nV_-^3}}\bigg)\bigg)\label{touseinl3}\\
&=\log n+nC+\sqrt{nV_\varepsilon}\Phi^{-1}\bigg(\varepsilon+\frac{2}{\sqrt{n}}+\frac{6T_{\calA^n}}{\sqrt{nV_-^3}}\bigg)\label{usedefveps}\\
&=nC+\sqrt{nV_\varepsilon}\Phi^{-1}(\varepsilon)+\log n+O(1)\label{taylor},
\end{align}
where \blue{\eqref{touseinl3} follows since i) for any $i\in[n]$, the maximum value of $\mathbb{E}[\imath_{\calA_i}[X_i;Y_i]]$ is achieved by any query $\calA_i$ with size $q^*$ which achieves the capacity $C$ and ii) when $n$ is sufficiently large, $nC_{\calA^n}=\frac{1}{n}\sum_{i\in[n]}\mathbb{E}[\imath_{\calA_i}[X_i;Y_i]]$ is the dominant term in the supremum, \eqref{usedefveps} follows from the definition of $V_\varepsilon$ in \eqref{def:veps} and \eqref{taylor} follows from the Taylor's expansion of $\Phi^{-1}(\cdot)$ (cf. \cite[Eq. (2.38)]{TanBook}) and the fact that $T_{\calA^n}$ is finite for discrete random variables $X^n$ and $Y^n$.}

{\color{blue}

\subsection{Proof of Theorem \ref{result:second:ktarget}}
\label{proof:ktarget}

\subsubsection{Achievability Proof}
Recall the definitions of the quantization function $\rmq(\cdot)$ in \eqref{def:qs}. Similarly to Appendix \ref{proof:ach}, we the partition of the unit cube $[0,1]^d$ into 
$M^d$ equal-sized disjoint cubes $\{\calS_{i_1,\ldots,i_d}\}_{(i_1,\ldots,i_d)\in[M]^d}$. Furthermore, let $\bX=\{X^n(i_1,\ldots,i_d)\}_{(i_1,\ldots,i_d)\in[M]^d}$ be a sequence of $M^d$ binary random vectors where each vector is generated i.i.d. from a Bernoulli distribution $P_X$ with parameter $p\in[0,1]$ to be specified. The design of each query is exactly the same as in \eqref{def:query:ddim}. Given $k$ $d$-dimensional target vectors $\bs^k:=(\bs_1,\ldots,\bs_k)$ where for each $i\in[k]$, $\bs_i=(s_{i,1},\ldots,s_{i,d})$, let $w_{i,l}=\rmq(s_{i,l})$ be the quantized version for each $i\in[k]$ and $l\in[d]$ and let $\bw_i=(w_{i,1},\ldots,w_{i,d})$ denote the collection of quantized locations of each dimension of the target $\bs_i$.

It is possible that two targets can be quantized into the same partition, i.e., there exists a pair of distinct indices $(i,j)\in[k]^2$ such that $\bw_i=\bw_j$. If such an event occurs, then the detected number of targets would be smaller than $k$. However, this should be not considered as an error since all the targets can be located within desired resolution even under such scenarios. Recall the definition of $\Gamma(\cdot)$ in \eqref{def:Gamma}. In subsequent analysis, we define the set of unique quantized locations as
\begin{align}
\calW_\rmp(\bs^k):=\{w\in[M^d]:~\exists~i\in[k]~\mathrm{s.t.~}\Gamma(\bw_i)=w\}\label{def:wp}.
\end{align}
and define the number of distinct quantized targets as
\begin{align}
k_\rmp(\bs^k):=|\calW_\rmp(\bs^k)|\label{def:kp}.
\end{align}
Furthermore, let $w^{\uparrow}_\rmp(\bs^k)$ be the collection of elements in $\calW_\rmp(\bs^k)$ in an increasing order and let $w^{\uparrow}_\rmp(\bs^k,i)$ be the $i$-th element.

For each $l\in[n]$, the noiseless answer to the query $\calA_l$ (cf. \eqref{def:query:ddim}) is
\begin{align}
Z_l
:&=1\{\exists~i\in[k]:~\bs_i\in\calA_l\}=1\bigg\{\exists~i\in[k]:~\bs_i\in\bigcup_{(i_1,\ldots,i_d)\in[M]^d:X_l(i_1,\ldots,i_d)=1}\calS_{i_1,\ldots,i_d}\bigg\}\\
&=1\{\exists~i\in[k]:~X_l(\bw_i)=1\}.
\end{align}
Then the noisy response $Y_i$ from the oracle to each query $\calA_i$ is obtained by passing $Z_i$ over the measurement-dependent channel $P_{Y|Z}^{\calA_i}$. For ease of analysis, in subsequent proof, we omit the step of the change-of-measure and assume that the measurement-dependent channel is $P_{Y|Z}^p$. The influence of this change-of-measure step is ignorant for relatively large number of queries $n\in\bbN$ as demonstrated in Appendices \ref{proof:ach} and \ref{proof:second}. 

Recall the definitions of $\calL(k,M)$ in \eqref{def:calL}, $\Gamma(\cdot)$ in \eqref{def:Gamma} and $\calD^{n,k}(\gamma)$ in \eqref{def:calD} . Given noisy responses $Y^n=(Y_1,\ldots,Y_n)$, if there exists an unique tuple $(i_1,\ldots,i_k)\in\calL(k,M^d)$ such that $(X^n(\Gamma^{-1}(i_1)),\ldots,X^n(\Gamma^{-1}(i_k)),Y^n)\in\calD^{n,k}(\gamma)$, the decoder produces estimates of quantized values $(\hat{\bW}_1,\ldots,\hat{\bW}_k)$ such that for each $i\in[k]$, $\hat{\bW}_i=(\hatW_{i,1},\ldots,\hatW_{i,d})=\Gamma^{-1}(i_k)$ and then estimates the target vector $\bS_i$ as $\hat{\bS}_i$ where for each $j\in[d]$, $\hat{\bS}_{i,j}=\frac{2\hatW_{i,j}-1}{2M}$; if more than one such tuple exists, then randomly pick one; if no such tuple exists, then the process continues in finding whether a unique tuple $(i_1,\ldots,i_t)$ exists for all $t\in[k-1]$ (with descending $t$) until $t=0$.

Note that given the above non-adaptive query procedure, an error (excess-resolution event for some target) is made if one of the following three events occurs
\begin{itemize}
\item $\calE_1:$ $(X^n(\Gamma^{-1}(w^\uparrow_\rmp(\bs^k,1))),\ldots,X^n(\Gamma^{-1}(w^\uparrow_\rmp(\bs^k,k_\rmp(\bs^k))),Y^n)\notin\calD^{n,k_\rmp(\bs^k)}(\gamma)$;
\item $\calE_2$: there exists an tuple $(i_1,\ldots,i_{k_\rmp(\bs^k)})\in\calL(k_\rmp(\bs^k),M^d)$ such that $(i_1,\ldots,i_{k_\rmp(\bs^k)})\neq w^\uparrow_\rmp(\bs^k)$ and the following condition satisfies $(X^n(\Gamma^{-1}(i_1)),\ldots,X^n(\Gamma^{-1}(i_{k_\rmp(\bs^k)})),Y^n)\in\calD^{n,k_\rmp(\bs^k)}(\gamma)$;
\item $\calE_3:$ for some $t\in[k_\rmp(\bs^k)+1:k]$, there exists an tuple $(j_1,\ldots,j_t)\in\calL(t,M^d)$ such that $(X^n(\Gamma^{-1}(j_1)),\ldots,X^n(\Gamma^{-1}(j_t)),Y^n)\in\calD^{n,t}(\gamma)$.
\end{itemize}

We will analyze the error probability of the second and third events as follows. First, let us focus on the event $\calE_2$. For ease of analysis, for each $j\in[k_\rmp(\bs^k)]$, define
\begin{align}
\calI_j(\bs^k)
\nn&:=\Big\{(i_1,\ldots,i_{k_\rmp(\bs^k)})\in\calL(k_\rmp(\bs^k),M^d):(i_1,\ldots,i_{k_\rmp(\bs^k)})\neq w^{\uparrow}_\rmp(\bs^k)\\
&\qquad\qquad\qquad\mathrm{and~}\big|l\in[k_\rmp(\bs^k)]:~i_l\neq w^{\uparrow}_\rmp(\bs^k,l)\big|=j\Big\}.
\end{align}

Similarly to \cite{han2003information,han2006information}, using the information spectrum method, we have
\begin{align}
\Pr\{\calE_2\}
&:=\sum_{j\in[k_\rmp(\bs^k)]}\Pr\big\{\exists (i_1,\ldots,i_{k_\rmp(\bs^k)})\in\calI_j(\bs^k):~(X^n(\Gamma^{-1}(i_1)),\ldots,X^n(\Gamma^{-1}(i_{k_\rmp(\bs^k)})),Y^n)\in\calD^{n,k_\rmp(\bs^k)}(\gamma)\big\}\\
&\leq\sum_{j\in[k_\rmp(\bs^k)]}\sum_{(i_1,\ldots,i_{k_\rmp(\bs^k)})\in\calI_j(\bs^k)}\Pr\big\{(X^n(\Gamma^{-1}(i_1)),\ldots,X^n(\Gamma^{-1}(i_{k_\rmp(\bs^k)})),Y^n)\in\calD^{n,k_\rmp(\bs^k)}(\gamma)\big\}\\
\nn&=\sum_{j\in[k_\rmp(\bs^k)]}\sum_{(i_1,\ldots,i_{k_\rmp(\bs^k)})\in\calI_j(\bs^k)}
\sum_{(x^n(\Gamma^{-1}(i_1)),\ldots,x^n(\Gamma^{-1}(i_{k_\rmp(\bs^k)})),y^n)\in\calD^{n,k_\rmp(\bs^k)}(\gamma)}
\Big(\prod_{t\in[k_\rmp(\bs^k)]:i_t\neq w^{\uparrow}_\rmp(\bs^k,t)}P_X^n(x^n(\Gamma^{-1}(i_t)))\Big)\\*
&\quad\qquad\times (P_{\{X_t\}_{t\in[k_\rmp(\bs^k)]:i_t= w^{\uparrow}_\rmp(\bs^k,t)}Y}^{p,k})^n(\{x^n(\Gamma^{-1}(t))\}_{t\in[k_\rmp(\bs^k)]:i_t=w^{\uparrow}_\rmp(\bs^k,t)},y^n)\label{indefromps}\\
\nn&\leq \sum_{j\in[k_\rmp(\bs^k)]}\sum_{(i_1,\ldots,i_{k_\rmp(\bs^k)})\in\calI_j(\bs^k)}
\sum_{(x^n(\Gamma^{-1}(i_1)),\ldots,x^n(\Gamma^{-1}(i_{k_\rmp(\bs^k)})),y^n)\in\calD^{n,k_\rmp(\bs^k)}(\gamma)}\Big(\prod_{t\in[k_\rmp(\bs^k)]}P_X^n(x^n(\Gamma^{-1}(i_t)))\Big)\\*
&\quad\qquad\times (P_{Y|X^{k_\rmp(\bs^k)}}^{p,k})^n(y^n|x^n(\Gamma^{-1}(i_1)),\ldots,
x^n(\Gamma^{-1}(i_{k_\rmp(\bs^k)})))\frac{\exp(-\gamma)}{M^{dj}}\label{usedefDj}\\
&=\sum_{j\in[k_\rmp(\bs^k)]}\sum_{(i_1,\ldots,i_{k_\rmp(\bs^k)})\in\calI_j(\bs^k)}\frac{\exp(-\gamma)}{M^{dj}}\\
&\leq |k_\rmp(\bs^k)|\exp(-\gamma),\label{finalhaha}
\end{align}
where \eqref{indefromps} follows since the noisy response $Y^n$ is only dependent on binary vectors $X^n(j)$ where $j\in w^{\uparrow}_\rmp(\bs^k)$, \eqref{usedefDj} follows from the definition of $\calD^{n,k}(\dot)$ in \eqref{def:calDj}, and \eqref{finalhaha} follows since the size of $\calI_j(\bs^k)$ is no greater than $M^{dj}$.

Similarly to steps leading to \eqref{finalhaha}, we can show that
\begin{align}
\Pr\{\calE_3\}\leq (k-k_\rmp(\bs^k))\exp(-\gamma).
\end{align}
Therefore, for any $\bs^k=(\bs_1,\ldots,\bs_k)$ and any $M\in\bbN$, using the above non-adaptive query procedure, we have that
\begin{align}
\nn&\Pr\left\{\exists (i,j)\in[k]\times[d]:~|\hatS_{i,j}-s_{i,j}|>\frac{1}{M}\right\}\\*
&\leq k\exp(-\gamma)+\Pr\{(X^n(\Gamma^{-1}(w_\rmp^\uparrow(\bs^k,1))),\ldots,X^n(\Gamma^{-1}(w_\rmp^\uparrow(\bs^k,k_\rmp(\bs^k)))),Y^n)\notin\calD^{n,k_\rmp(\bs^k)}(\gamma)\}\label{ktarget:step1}.
\end{align}
Our analysis then focuses on the second term in \eqref{ktarget:step1}. Without loss of generality, we assume that $k_\rmp(\bs^k)=t$ for some $t\in[k]$ and let $w^{\uparrow}_\rmp(\bs^k)=[1:t]$.
Recall the definitions of the (conditional) mutual information density in \eqref{def:cdmi}
and its statistics in \eqref{def:cjpt} to \eqref{def:tjpt}. Then we have
\begin{align}
\nn&\Pr\{(X^n(\Gamma^{-1}(1)),\ldots,X^n(\Gamma^{-1}(t)),Y^n)\notin\calD^{n,t}(\gamma)\}\\*
&\leq \sum_{\calJ\subseteq[t]}\Pr\{(X^n(\Gamma^{-1}(1)),\ldots,X^n(\Gamma^{-1}(t)),Y^n)\notin\calD_\calJ^{n,t}(\gamma)\}\\
&=\sum_{\calJ\subseteq[t]}\Pr\Big\{\sum_{i\in[n]}\imath_\calJ^{p,t}(X_i(\Gamma^{-1}(1)),\ldots,X_i(\Gamma^{-1}(t));Y_i)\leq |\calJ|\log M+\gamma\Big\}.
\end{align}
Now choose $M$ and $\gamma$ such that for some $\varepsilon\in(0,1)$,
\begin{align}
\gamma&=\frac{1}{2}\log n\\
t\log M&=nC_{[t]}(p,t)+\sqrt{nV_{[t]}(p,t)}\Phi^{-1}(\varepsilon)-\frac{1}{2}\log n\label{chooseM:ktarget}.
\end{align}
Using the Berry-Esseen theorem, we have that
\begin{align}
\Pr\Big\{\sum_{i\in[n]}\imath_{[t]}^{p,t}(X_i(\Gamma^{-1}(1)),\ldots,X_i(\Gamma^{-1}(t));Y_i)\leq t\log M+\gamma\Big\}
&\leq \varepsilon+\frac{6T_{[t]}(p,t)}{(\sqrt{nV_{[t]}(p,t)})^3}.
\end{align}
Furthermore, note that for any $t\in[2:k]$ and $\calJ\subseteq[t]$ such that $|\calJ|\leq t-1$, from \cite{kaspi2015searching}, we have
\begin{align}
\frac{C_{\calJ}(p,t)}{|\calJ|}>\frac{C_{[t]}(p,t)}{t}\label{imineq}.
\end{align}
Thus, for any $\calJ\subset[t]$ satisfying $|\calJ|\leq t-1$, $\Pr\big\{\sum_{i\in[n]}\imath_\calJ^{p,t}(X_i(\Gamma^{-1}(1)),\ldots,X_i(\Gamma^{-1}(t));Y_i)\leq |\calJ|\log M+\gamma\big\}$ vanishes exponentially fast according to large deviations theorems~\cite{dembo2009large}. Therefore, we have shown that when $n$ is sufficient large, with the $M$ chosen in \eqref{chooseM:ktarget} and when there are $t$ distinct quantized target variables, the excess-resolution probability is no greater than $\varepsilon$.

Noting that there can be at most  $k$ distinct quantized targets, under the assumption of Theorem \ref{result:second:ktarget}, we conclude that our non-adaptive query procedure can achieve excess-resolution probability $\varepsilon$ with respect to a resolution $\delta$ where
\begin{align}
-\log \delta
&\geq  \max_{p\in[0,1]}\min_{t\in[k]}\frac{nC_{[t]}(p,t)+\sqrt{nV_{[t]}(p,t)}\Phi^{-1}(\varepsilon)-\log n}{t}\\
&=\frac{nC_{[t^*]}(p^*,t^*)+\sqrt{nV_{[t^*]}(p^*,t^*)}\Phi^{-1}(\varepsilon)+O(\log n)}{t^*}.
\end{align}
Finally, the existence of a deterministic code results from the simple fact that $\mathbb{E}[X]\leq a$ implies there exists $x\leq a$ for any random variable $X$ and real number $a\in\bbR$.}

{\color{blue}

\subsubsection{Converse Proof}
The converse proof of Theorem \ref{result:second:ktarget} follows similarly as the achievability proof by combining the ideas in \cite{kaspi2015searching} and \cite[Lemma 4]{han2006information} and thus we only emphasize the difference here.

Consider any non-adaptive query procedure with queries $\calA^n$ and decoder $g:\calY^n\to[0,1]^{dk}$ such that the worst-case excess-resolution probability with respect to $\delta\in\bbR_+$ is upper bounded by $\varepsilon\in(0,1)$, i.e.,
\begin{align}
\sup_{f_{\bS}\in\calF([0,1]^d)}\Pr\big\{\exists~(i,j)\in[k]\times[d]:~|\hatS_{i,j}-S_{i,j}|>\delta\big\}\leq \varepsilon\label{ktarget:step00},
\end{align}
where for each $(i,j)$, $S_{i,j}$ is the $j$-th element of the $i$-th target variable $\bS_i$ and $\hatS_{i,j}$ is its estimate produced by the decoder.

In the rest of the proof, we will consider the case where each target variable is generated from a \emph{uniform} distribution $f_d^\rmU$ over the unit cube of dimension $d$, i.e., $[0,1]^d$. For any $\beta\leq\frac{\varepsilon}{2}\leq 0.5$, let $\tilM:=\lfloor \frac{\beta}{\delta}\rfloor$. Recall the definition of the quantization function $\rmq_\beta(\cdot)$ in \eqref{def:qbeta}. 

Given queries $\calA^n\in[0,1]^{dn}$, the noiseless responses $X^n$ is a sequence of independent random variables where for each $t\in[n]$, $X_t=\bbo(\exists~i\in[k]:~\bS_i=(S_{i,1},\ldots,S_{i,d})\in\calA_t)$ is a Bernoulli random variable with parameter $1-(1-(\mathrm{Vol}(\calA_t)))^k$. For simplicity, let for each $(i,j)\in[k]\times[d]$, let $W_{i,j}:=\rmq_\beta(S_{i,j})$, $\hatW_{i,j}:=\rmq_\beta(\hatS_{i,j})$. Furthermore, for each $i\in[k]$, we use $\bW_i$ to denote $(W_{i,1},\ldots,W_{i,d})$ and use $\hat{\bW_i}$ similarly. Finally, we use $\bW$ to denote $(\bW_1,\ldots,\bW_k)$ and use $\hat{\bW}$ similarly. We then have that
\begin{align}
\Pr\{\hat{\bW}\neq \bW\}
\nn&=\Pr\{\hat{\bW}\neq \bW~\mathrm{and~}\exists~(i,j)\in[k]\times[d]:~|\hatS_{i,j}-S_{i,j}|>\delta\}\\*
&\qquad+\Pr\{\hat{\bW}\neq \bW~\mathrm{and~}\forall~(i,j)\in[k]\times[d]:~|\hatS_{i,j}-S_{i,j}|\leq\delta\}\\
&\leq \Pr\{\exists~(i,j)\in[k]\times[d]:~|\hatS_{i,j}-S_{i,j}|>\delta\}+
\Pr\{\hat{\bW}\neq \bW~\mathrm{and~}\forall~(i,j)\in[k]\times[d]:~|\hatS_{i,j}-S_{i,j}|\leq\delta\}\\
&\leq \varepsilon+\Pr\{\hat{\bW}\neq \bW~\mathrm{and~}\forall~(i,j)\in[k]\times[d]:~|\hatS_{i,j}-S_{i,j}|\leq\delta\}\label{use000}\\
&\leq \varepsilon+\sum_{i\in[k]}\Pr\{\hat{\bW}_i\neq \bW_i~\mathrm{and}~\forall~j\in[d]:~|\hatS_{i,j}-S_{i,j}|\leq\delta\}\\
&\leq \varepsilon+\sum_{i\in[k]}2d\delta\tilM\label{simitoonetarget}\\
&\leq\varepsilon+2kd\beta\label{usetilMagain},
\end{align}
where \eqref{use000} follows from \eqref{ktarget:step00}, \eqref{simitoonetarget} follows similarly to \eqref{boundaryerror}, and \eqref{usetilMagain} follows from the definition of $\tilM$.

Note that since for each $i\in[k]$, $\bS_i=(S_{i,1},\ldots,S_{i,d})$ is uniformly distributed over the unit cube of dimension $d$, each quantized version $W_{i,j}$ is uniformly distributed over the message set $[\tilM]$. Furthermore, from the problem structure, we have that $\bW-\bS^k-X^n-Y^n-\hat{\bS}^k-\hat{\bW}$ forms a Markov chain. In particular, $X^n$ can be understood as a combination of the inputs from all $k$ targets, i.e., if we let $\bZ:=(Z^n(1),\ldots,Z^n(k))$ be a sequence of $k$ i.i.d. random variables from a Bernoulli distribution with parameter $\mathrm{Vol}(\calA_t)$, then $X^n$ is the output of the OR result of $\bZ$, i.e.,
\begin{align}
X_t=\bbo(\exists~i\in[k]:~Z_t(i)=1).
\end{align}
Therefore, using \eqref{usetilMagain}, we have shown that the excess-resolution probability of simultaneous searching for $k$ targets is lower bounded by the error probability of decoding $k$ messages over a OR type multiple access channel~\cite{6970834} and a small penalty term, i.e.,
\begin{align}
\varepsilon\geq \Pr\{\hat{\bW}\neq \bW\}-2kd\beta.
\end{align}

Similarly to \cite[Lemma 4]{han2003information}, we conclude that for any $\gamma\in\bbR_+$,
\begin{align}
\Pr\{\hat{\bW}\neq \bW\}
\geq \max_{\calA^n}\Pr\{(Z^n(1)\ldots,Z^n(k),Y^n)\notin\calD^{n,k}(-\gamma)\}-k\exp(-\gamma),
\end{align}
where the maximization over queries $\calA^n$ is implicitly included in the distribution of i.i.d. sequences $\bZ=(Z^n(1),\ldots,Z^n(k))$.

The rest of the proof is omitted since it proceeds similarly to the achievability proof.
}

\subsection{Proof of Second-Order Achievable Asymptotics for Adaptive Querying (Theorem \ref{second:fbl:adaptive})}
\label{proof:second:fbl:adaptive}

\subsubsection{An Non-Asymptotic Achievability Bound}
In this subsection, we present an adaptive query procedure (cf. Algorithm \ref{procedure:adapt}) and analyze its non-asymptotic performance.

Let $\bX^\infty$ be a collection of $M^d$ random binary vectors $\{X^\infty(i_1,\ldots,i_d)\}_{(i_1,\ldots,i_d)\in[M]^d}$, each with infinite length and let $\bx^\infty$ denote a realization of $\bX^\infty$. Furthermore, let $Y^\infty$ be another random vector with infinite length where each element takes values in $\calY$ and let $y^\infty$ be a realization of $Y^\infty$. For any $\bw\in[0,1]^d$ and any $n\in\bbN$, given any sequence of queries $\calA^n=(\calA_1,\ldots,\calA_n)\in[0,1]^d$, define the following joint distribution of $(\bX^n,Y^n)$
\begin{align}
P_{\bX^nY^n}^{\calA^n,\bw}(\bx^n,y^n)
&=\prod_{t\in[n]}\Big(\prod_{(i_1,\ldots,i_d)\in[M]^d}\mathrm{Bern}_p(x_t(i_1,\ldots,i_d))\Big)P_{Y|X}^{\calA_t}(y_t|x_t(\bw))\label{def:pxyan}.
\end{align}
We can define $P_{\bX^\infty,Y^\infty}^{\calA^n,\bw}$ as a generalization of $P_{\bX^n,Y^n}^{\calA^n,\bw}$ with $n$ replaced by $\infty$. Since the channel is memoryless, such a generalization is reasonable.

Recall the definition of $\Gamma(\cdot)$ in \eqref{def:Gamma} and its inverse $\Gamma^{-1}(\cdot)$.  Given any $\lambda\in\bbR_+$ and any $m\in[M^d]$, define the stopping time
\begin{align}
\tau_m(\bx^\infty,y^\infty)&:=\inf\{n\in\bbN:~\imath_q(x^n(\Gamma^{-1}(m));y^n)\geq \lambda\}\label{def:taum}.
\end{align}
Our non-asymptotic bound states as follows.
\begin{theorem}
\label{fbl:ach:adaptive}
Given any $(d,M)\in\bbR_+\times\bbN$, for any $p\in[0,1]$ and $\lambda\in\bbR_+$, there exists an $(l,d,\frac{1}{M},\varepsilon)$-adaptive query procedure such that
\begin{align}
l&\leq \mathbb{E}[\tau_1(\bX^\infty,Y^\infty)],\\
\varepsilon&\leq(M^d-1)\Pr\{\tau_1(\bX^\infty,Y^\infty)\geq \tau_2(\bX^\infty,Y^\infty)\},
\end{align}
where the expectation and probability are calculated with respect to $P_{\bX^\infty,Y^\infty}^{\calA^\infty,\Gamma^{-1}(1)}$ and $\calA^n$ refers to the queries in Algorithm \ref{procedure:adapt}.
\end{theorem}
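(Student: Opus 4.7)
The plan is to adapt Polyanskiy--Poor--Verd\'u's non-asymptotic variable-length feedback achievability bound \cite{polyanskiy2011feedback} to the adaptive 20 questions setting, viewing the problem as a sequential $M^d$-ary hypothesis test with adaptively designed inputs. Generate a random codebook $\bX^\infty=\{X^\infty(\Gamma^{-1}(m))\}_{m\in[M^d]}$ whose entries are i.i.d.\ $\mathrm{Bern}(p)$ and independent across $m$. At each time $t$, Algorithm \ref{procedure:adapt} forms query $\calA_t$ as the union of cells $\calS_{i_1,\ldots,i_d}$ whose codeword bit at position $t$ equals one, exactly as in \eqref{def:query:ddim}; the oracle produces $X_t(\bW)$, where $\bW$ is the cell containing the target, which is then corrupted by the measurement-dependent channel to yield $Y_t$. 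The decoder maintains the $M^d$ running information densities $\imath_p(X^t(\Gamma^{-1}(m));Y^t)$, halts at $\tau:=\min_{m\in[M^d]}\tau_m(\bX^\infty,Y^\infty)$, and declares $\Gamma^{-1}(\arg\min_m\tau_m)$ as the estimate, with ties resolved as errors.

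For the average query count, observe that the joint law $P_{\bX^\infty Y^\infty}^{\calA^\infty,\bw}$ in \eqref{def:pxyan} depends on the true cell $\bw$ only through the codeword $X^\infty(\bw)$, and the full codebook is exchangeable across cells. Hence the marginal law of $\tau_{m^\star}$ under the truth $\bw=\Gamma^{-1}(m^\star)$ is identical for every $m^\star$, and equals the law of $\tau_1$ under $P_{\bX^\infty Y^\infty}^{\calA^\infty,\Gamma^{-1}(1)}$. Since $\tau\le\tau_{m^\star}$ almost surely, this gives $\mathbb{E}[\tau]\le\mathbb{E}[\tau_1(\bX^\infty,Y^\infty)]$. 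For the error probability, conditioning on $m^\star=1$ without loss of generality, the error event is $\{\exists\,m\ne 1:\tau_m\le\tau_1\}$. Under $P^{\calA^\infty,\Gamma^{-1}(1)}$, each codeword $X^\infty(\Gamma^{-1}(m))$ with $m\neq 1$ is independent of $(X^\infty(\Gamma^{-1}(1)),Y^\infty)$ and Bernoulli$(p)$ i.i.d., so the collection $\{\tau_m\}_{m\ne 1}$ is exchangeable. A union bound over the $M^d-1$ incorrect indices yields
\begin{align}
\varepsilon\leq (M^d-1)\Pr\{\tau_1(\bX^\infty,Y^\infty)\geq\tau_2(\bX^\infty,Y^\infty)\},\nonumber
\end{align}
as claimed. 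A standard averaging argument over random codebooks then extracts a deterministic codebook attaining both bounds simultaneously.

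The main technical obstacle is the measurement-dependent nature of the channel: the transition kernel $P_{Y|X}^{\calA_t}$ at time $t$ depends on the random codebook through $|\calA_t|$, while the decoder uses the reference information density $\imath_p$ tied to $P_{Y|X}^p$. In the non-asymptotic statement this mismatch is sidestepped by keeping every probability and expectation explicitly under the true joint law \eqref{def:pxyan}, so that no concentration of $|\calA_t|$ around $p$ is required. The continuity assumption \eqref{assump:continuouschannel} will only be needed when converting the non-asymptotic bound into the second-order rate \eqref{md:adaptive} of Theorem \ref{second:fbl:adaptive}, via a typicality/change-of-measure argument parallel to that developed in Appendix \ref{proof:ach}.
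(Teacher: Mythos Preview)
Your proposal follows essentially the same route as the paper: random i.i.d.\ $\mathrm{Bern}(p)$ codebook, stopping time $\tau=\min_m\tau_m$, symmetry across codewords to reduce to message $1$, and a union bound over the $M^d-1$ incorrect indices; both arguments are direct adaptations of \cite[Theorem~3]{polyanskiy2011feedback}.

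One point deserves care. You write that under $P^{\calA^\infty,\Gamma^{-1}(1)}$ each codeword $X^\infty(\Gamma^{-1}(m))$ with $m\neq 1$ is \emph{independent} of $(X^\infty(\Gamma^{-1}(1)),Y^\infty)$. In the measurement-dependent model this is false: the channel state at time $t$ is $|\calA_t|=\frac{1}{M^d}\sum_{m'}X_t(\Gamma^{-1}(m'))$, so $Y_t$ depends on \emph{all} codeword bits at time $t$, not just the true one. What does hold---and is all you need for the union bound to collapse to a single representative term---is that the codewords $\{X^\infty(\Gamma^{-1}(m))\}_{m\neq 1}$ are \emph{exchangeable} under $P^{\calA^\infty,\Gamma^{-1}(1)}$, because they are i.i.d.\ and enter the channel output only through the symmetric sum defining $|\calA_t|$. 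The paper's proof invokes exactly this symmetry (see the justification of \eqref{usesymmetry11} and \eqref{usesymmetry22}) rather than independence. Replace your independence claim by this exchangeability argument and the proof goes through unchanged.
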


\begin{algorithm}
\caption{\color{blue}Adaptive query procedure}
\label{procedure:adapt}
\begin{algorithmic}
\REQUIRE Three parameters $(M,p,\lambda)\in\bbN\times(0,1)\times\bbR_+$
\ENSURE An estimate $(\hats_1,\ldots,\hats_d)\in[0,1]^d$ of a $d$-dimensional target variable $(s_1,\ldots,s_d)\in[0,1]^d$
\STATE Partition the unit cube of dimension $d$ (i.e., $[0,1]^d$) into $M^d$ equal-sized disjoint regions $\{\calS_{i_1,\ldots,i_d}\}_{(i_1,\ldots,i_d)\in[M]^d}$.
\STATE $t \leftarrow 1$
\WHILE {$t>0$}
\STATE Generate $M^d$ binary random variables $\{x_t(i_1,\ldots,i_d)\}_{(i_1,\ldots,i_d)\in[M]^d}$ independently from a Bernoulli distribution with parameter $p$.
\STATE Form the $t$-th query as
\begin{align*}
\calA_t:=\bigcup_{(i_1,\ldots,i_d)\in[M]^d:x_t(i_1,\ldots,i_d)=1}\calS_{i_1,\ldots,i_d}.
\end{align*}
\STATE Obtain the noisy response $y_t$ from the oracle to the query $\calA_t$.
\STATE Calculate accumulated mutual information densities $\imath_q(x^t(i_1,\ldots,i_d);y^t)$ for all $(i_1,\ldots,i_d)\in[M]^d$.
\IF {$\max_{(i_1,\ldots,i_d)\in[M]^d}\imath_q(x^t(i_1,\ldots,i_d);y^t)\geq \lambda$}
\STATE $\tau\leftarrow t$.
\STATE $t\leftarrow 0$.
\ELSE
\STATE $t\leftarrow t+1$.
\ENDIF
\ENDWHILE
\STATE Generate estimates $(\hats_1,\ldots,\hats_d)$ as
\begin{align*}
\hats_i=\frac{2\hatw_i-1}{2M},
\end{align*}
where $\hat{\bw}=(\hatw_1,\ldots,\hatw_d)$ is obtained as follows:
\begin{align*}
\hat{\bw}=\Gamma^{-1}(\hatt),~\hatt=\max\{t\in[M^d]:\imath_q(x^\tau(\Gamma^{-1}(t));y^\tau)\geq \lambda\}.
\end{align*}
\end{algorithmic}
\end{algorithm}

\begin{proof}[Proof of Theorem \ref{fbl:ach:adaptive}]
The proof of Theorem \ref{fbl:ach:adaptive} is inspired by \cite[Theorem 3]{polyanskiy2011feedback} and is largely similar to the proof for non-adaptive query procedures in Appendix \ref{proof:ach}. Thus, we only emphasize the differences here.

The adaptive query procedure we analyze is summarized in Algorithm \ref{procedure:adapt} and briefly rephrased as follows. Let $\bx=\{x^\infty(i_1,\ldots,i_d)\}_{(i_1,\ldots,i_d)\in[M]^d}$ be a sequence of $M^d$ binary codewords with infinite length. Then for any $n\in\bbN$ and any $(i_1,\ldots,i_d)\in[M]^d$, let $X^n(i_1,\ldots,i_d)$ be the first $n$ elements of $X^\infty(i_1,\ldots,i_d)$. Similarly to the proof of Theorem \ref{ach:fbl} in Appendix \ref{proof:ach}, we use the query $\calA_t$ as in \eqref{def:query:ddim} and apply the quantization function $\rmq(\cdot)$ in \eqref{def:qs} to generated quantized targets $\bw=(w_1,\ldots,w_d)$, i.e., $w_i=\rmq(s_i)$ for each $i\in[d]$ given any target variable $\bs=(s_1,\ldots,s_d)\in[0,1]^d$. The noiseless response to the query $\calA_t$ is then $X_t(\bw)$ and the noisy response $y_t$ is obtained by passing $x_t$ through the measurement-dependent channel $P_{Y|X}^{\calA_t}$.

The decoding process is summarized as follows, which includes the design of the stopping time and decoding function. Let $\lambda\in\bbR_+$ be a fixed threshold. Recall the definitions of $\Gamma(\cdot)$ in \eqref{def:Gamma} and $\tau_m(\bx^\infty,y^\infty)$ in \eqref{def:taum}. For any $(M,d)\in\bbN^2$, the stopping time is chosen as
\begin{align}
\tau^*(\bx^\infty,y^\infty):=\min_{m\in[M^d]}\tau_m(\bx^\infty,y^\infty).
\end{align}
The decoder outputs estimates $\hat{\bS}=(\hatS_1,\ldots,\hatS_d)$ via the following  two-stage decoding
\begin{enumerate}
\item the decoder first generates estimates $\hat{\bW}=(\hatW_1,\ldots,\hatW_2)$ as follows:
\begin{align}
\hat{\bW}=\Gamma^{-1}(\hatt),~\max\{t\in[M^d]:\tau_j(\bx^\infty,y^\infty)=\tau^*(\bx^\infty,y^\infty)\},
\end{align}
\item the decoder produces estimates $\hat{\bS}=(\hatS_1,\ldots,\hatS_d)$ as
\begin{align}
\hatS_i=\frac{2\hatW_i-1}{2M},~i\in[d].
\end{align}
\end{enumerate}

Using the adaptive query procedure in Algorithm \ref{procedure:adapt}, we have that the average stopping time satisfies
\begin{align}
\sup_{f_{\bS}\in\calF([0,1]^d)}\mathbb{E}[\tau^*(\bx^\infty,Y^\infty)]
&=\sup_{f_{\bS}\in\calF([0,1]^d)}\int_{\bs\in[0,1]^d}f_{\bS}(\bs)\mathbb{E}[\tau^*(\bx^\infty,Y^\infty)|\bS=\bs]\\
&=\sup_{P_{\bW}\in\calP([M]^d)}\sum_{\bw\in[M]^d}P_{\bW}(\bw)\mathbb{E}[\tau^*(\bx^\infty,Y^\infty)|\bW=\bw]\\
&\leq \sup_{P_{\bW}\in\calP([M]^d)}\sum_{\bw\in[M]^d}P_{\bW}(\bw)\mathbb{E}[\tau_{\Gamma(\bw)}(\bx^\infty,Y^\infty)|\bW=\bw]\label{delay},
\end{align}
and the excess-resolution probability with respect to the resolution $\delta=\frac{1}{M}$ satisfies
\begin{align}
\nn&\sup_{f_{\bS}\in\calF([0,1]^d)}\Pr\{\exists~i\in[d],~|\hatS_i-S_i|>\delta\}\\*
&\leq \sup_{P_{\bW}\in\calP([M]^d)}\Pr\{\hat{\bW}\neq\bW\}\\
&\leq \sup_{P_{\bW}\in\calP([M]^d)}\sum_{\bw\in[M]^d}P_\bW(\bw)\Pr\{\tau_{\Gamma(\bw)}(\bx^\infty,Y^\infty)\geq \tau^*(\bx^\infty,Y^\infty)\}\label{errorp}.
\end{align}
In the following, we will show that there exists binary codewords $\bx^{\infty}$ such that the results in \eqref{delay} and \eqref{errorp} are upper bounded by the desired bounds in Theorem \ref{ach:fbl}.

Let $\bX^{\infty}:=\{X^\infty(i_1,\ldots,i_d)\}_{(i_1,\ldots,i_d)\in[M]^d}$ be a sequence of $M^d$ binary codewords with infinite length where each codeword is generated i.i.d. from the Bernoulli distribution $P_X$ with parameter $p\in(0,1)$. For any $\bw\in[0,1]^d$ and any $n\in\bbN$, using our adaptive query procedure, the joint distribution of $(\bX^n,Y^n)$ is $P_{\bX^n,Y^n}^{\calA^n,\bw}(\bx^n,y^n)$ as defined in \eqref{def:pxyan}. 

\blue{For any $P_{\bW}\in\calP([M]^d)$, we have
\begin{align}
\mathbb{E}_{\bX^\infty}[\tau^*(\bX^\infty,Y^\infty)]
&=\sum_{\bw\in[M]^d}P_{\bW}(\bw)\mathbb{E}_{P_{\bX^\infty,\bY^\infty}^{\calA^n,\bw}}[\tau^*(\bX^\infty,Y^\infty)]\\
&\leq \sum_{\bw\in[M]^d}P_{\bW}(\bw)\mathbb{E}_{P_{\bX^\infty,\bY^\infty}^{\calA^n,\bw}}[\tau_{\Gamma(\bw)}(\bX^\infty,Y^\infty)]\\
&=\sum_{\bw\in[M]^d}P_{\bW}(\bw)\mathbb{E}_{P_{\bX^\infty,\bY^\infty}^{\calA^n,\Gamma^{-1}(1)}}[\tau_1(\bX^\infty,Y^\infty)]\label{symmetry}\\
&=\mathbb{E}_{P_{\bX^\infty,\bY^\infty}^{\calA^n,\Gamma^{-1}(1)}}[\tau_1(\bX^\infty,Y^\infty)],
\end{align}
where \eqref{symmetry} follows since for each $\bw\in[M]^d$, from the definition of $\tau_{\cdot}(\cdot)$ in \eqref{def:taum}, 
\begin{align}
\mathbb{E}_{P_{\bX^\infty,\bY^\infty}^{\calA^n,\bw}}[\tau_{\Gamma(\bw)}(\bX^\infty,Y^\infty)]=\mathbb{E}_{P_{\bX^\infty,\bY^\infty}^{\calA^n,\mathrm{ones}(d)}}[\tau_{\Gamma(\mathrm{ones}(d))}(\bX^\infty,Y^\infty)]=\mathbb{E}_{P_{\bX^\infty,\bY^\infty}^{\calA^n,\Gamma^{-1}(1)}}[\tau_1(\bX^\infty,Y^\infty)],
\end{align}
and we use $\mathrm{ones}(d)$ to denote the all one vector with length $d$.
}

Similarly, we have
\begin{align}
\mathbb{E}_{\bX^{\infty}}[\Pr[\hat{\bW}\neq \bW]]
&\leq \sum_{\bw\in[M]^d}P_{\bW}(\bw)\Pr_{P_{\bX^\infty,\bY^\infty}^{\calA^n,\bw}}\{\tau_{\Gamma(\bw)}(\bX^\infty,Y^\infty)\geq \tau^*(\bX^\infty,Y^\infty)\}\\
&=\sum_{\bw\in[M]^d}P_{\bW}(\bw)\Pr_{P_{\bX^\infty,\bY^\infty}^{\calA^n,\Gamma^{-1}(1)}}\{\tau_1(\bX^\infty,Y^\infty)\geq \tau^*(\bX^\infty,Y^\infty)\}\label{usesymmetry11}\\
&=\Pr_{P_{\bX^\infty,\bY^\infty}^{\calA^n,\Gamma^{-1}(1)}}\{\tau_1(\bX^\infty,Y^\infty)\geq \tau^*(\bX^\infty,Y^\infty)\}\\
&\leq (M^d-1)\Pr_{P_{\bX^\infty,\bY^\infty}^{\calA^n,\Gamma^{-1}(1)}}\{\tau_1(\bX^\infty,Y^\infty)\geq \tau_2(\bX^\infty,Y^\infty)\}\label{usesymmetry22},
\end{align}
where \eqref{usesymmetry11} follows from the symmetry which implies that $\Pr\{\tau_{\Gamma(\bw)}(\bX^\infty,Y^\infty)\geq \tau^*(\bX^\infty,Y^\infty)\}=\Pr\{\tau_1(\bX^\infty,Y^\infty)\geq \tau^*(\bX^\infty,Y^\infty)\}$ for any $\bw\in[M]^d$ and \eqref{usesymmetry22} follows from the union bound and the symmetry similar to \eqref{usesymmetry11}.

The proof of Theorem \ref{fbl:ach:adaptive} is completed by using the simple fact that $\mathbb{E}[X]\leq a$ implies that there exists $x\leq a$ for any random variable $X$ and constant $a\in\bbR$.
\end{proof}

\subsubsection{Proof of Achievable Second-Order Asymptotics}
\blue{
The proof of second-order asymptotics for adaptive querying proceeds similarly as \cite{polyanskiy2011feedback} and we only highlight the differences here.} Let $q^*\in\calP_{\rm{ca}}$ be a capacity-achieving parameter for measurement-dependent channels $\{P_{Y|X}^q\}_{q\in[0,1]}$. From Theorem \ref{fbl:ach:adaptive}, we have that there exists an $(l,d,\frac{1}{M},\varepsilon)$-adaptive query procedure such that 
\begin{align}
l&\leq \mathbb{E}[\tau_1(\bX^\infty,Y^\infty)],\\
\varepsilon&\leq(M^d-1)\Pr\{\tau_1(\bX^\infty,Y^\infty)\geq \tau_2(\bX^\infty,Y^\infty)\}\label{touppe}.
\end{align}
Unless otherwise stated, the expectation and probability are calculated with respect to slight generalization of the joint distribution $P_{\bX^n Y^n}^{\calA^n,\Gamma^{-1}(1)}$ in \eqref{def:pxyan}.

For subsequent analyses, let $P_X$ be the Bernoulli distribution with parameter $q^*$ and let $\tilP_{XY}$ be the following joint distribution
\begin{align}
\tilP_{XY}(x,y)
&:=\sum_{\barx_1,\ldots,\barx^{M^d-1}}P_X(x)\bigg(\prod_{j\in[M^d-1]}P_X(\barx_j)\bigg)P_{Y|X}^{\frac{x+\sum_{j\in[M^d-1]}\barx_j}{M^d}}(y|x).
\end{align} 
Note that $\tilP_{XY}$ is the marginal distribution of $(X_i(\Gamma^{-1}(1)),Y_i)$ for each $i\in[n]$ induced from $P_{\bX^n\bY^n}^{\calA^n,\Gamma^{-1}(1)}$ under our query procedure.

Furthermore, define the ``mismatched'' version of the capacity.
\begin{align}
C_1&:=\mathbb{E}_{\tilP_{XY}}[\imath_{q^*}(X;Y)]\label{def:C1}.
\end{align}
Finally, for each $n\in\bbN$, let
\begin{align}
U_n&:=\imath_{q^*}(X^n;Y^n)=\sum_{i\in[n]}\imath_{q^*,q^*}(X_i;Y_i).
\end{align}
It can be easily verified that $\{U_n-nC_1\}_{n\in\bbN}$ is a martingale and for each $n\in\bbN$, $\mathbb{E}[U_n-nC_1]=0$. The optional stopping theorem~\cite[Theorem 10.10]{williams1991probability} implies that
\begin{align}
0&=\mathbb{E}[U_{\tau_1(\bX^\infty,\bY^\infty)}-C_1\tau_1(\bX^\infty,\bY^\infty)]\\
&\leq \lambda+a_0-C_1\mathbb{E}[\tau_1(\bX^\infty,\bY^\infty)]]\label{uppl2},
\end{align}
where $a_0$ is an upper bound on the information density $U_1$.
Thus,
\begin{align}
\mathbb{E}[\tau_1(\bX^\infty,\bY^\infty)]
&\leq \frac{\lambda+a_0}{C_1}\label{uppl3}.
\end{align}
We then focus on upper bounding \eqref{touppe}. From \eqref{uppl3}, we have that
\begin{align}
\Pr\{\tau_1(\bX^\infty,\bY^\infty)<\infty\}=1,
\end{align}
since otherwise the expectation value of $\tau_1(\bX^\infty,\bY^\infty)$ would be infinity.

Recall the definition of the typical set $\calT(\cdot)$ in \eqref{def:typical}. For any $\eta\in\bbR_+$, we have
\begin{align}
\nn&\Pr\{\tau_1(\bX^\infty,Y^\infty)\geq \tau_2(\bX^\infty,Y^\infty)\}\\*
&\leq \Pr\{\tau_2(\bX^\infty,Y^\infty)<\infty\}\\
&=\sum_{t\in\bbN}\bbo(t<\infty)\Pr\{\tau_2(\bX^\infty,Y^\infty)=t\}\\
&=\sum_{t\in\bbN}\bbo(t<\infty)\Big\{\Pr\{\tau_2(\bX^\infty,Y^\infty)=t,\bX^t\in\calT^t(M,d,q^*,\eta)\}+\Pr\{\bX^t\notin\calT^t(M,d,q^*,\eta)\}\Big\}\\
&\leq\sum_{t\in\bbN}\bbo(t<\infty)\Big\{\exp(t\eta c(p))\Pr_{P_{\bX^\infty,Y^\infty}^{q^*,\Gamma^{-1}(1)}}\{\tau_2(\bX^\infty,Y^\infty)=t\}+4t\exp(-2M^d\eta^2)\Big\}\label{usetypicalha},
\end{align} 
where \eqref{usetypicalha} follows from \eqref{fromassumption} and the upper bound on the probability of atypicality similar to \eqref{useatypical} and in \eqref{usetypicalha}, we use the change-of-measure technique and the distribution $P_{\bX^\infty,Y^\infty}^{q^*,\Gamma^{-1}(1)}$ is a generalization of $P_{\bX Y^n}^{p,\bw}(\cdot)$ in \eqref{altdis} to an infinite length.

Given any $l'\in\bbR_+$, let $(\lambda,\eta,M)\in\bbR_+^2\times\bbN$ be chosen so that
\begin{align}
\lambda&=l'C_1-a_0,\\
d\log M&=\lambda-\log l'\label{herechooseMhehe},\\
\eta&:=\sqrt{\frac{d\log M}{2M^d}}=O\left(\frac{\sqrt{l'}}{\exp(l'C_1/2)}\right).
\end{align}
Then from \eqref{uppl3}, we have
\begin{align}
\mathbb{E}[\tau_1(\bX^\infty,Y^\infty)]\leq l'.
\end{align}
Furthermore, similarly to \cite[Section D]{polyanskiy2011feedback}, we have
\begin{align}
\nn&\Pr\{\tau_1(\bX^\infty,Y^\infty)\geq \tau_2(\bX^\infty,Y^\infty)\}\\*
&=\sum_{t\in\bbN}\bbo(t<\infty)\left(1+O\left(l'^{\frac{3}{2}}\exp\left(-\frac{l'C_1}{2}\right)\right)\right)\Pr_{P_{\bX^\infty,Y^\infty}^{q^*,\Gamma^{-1}(1)}}\{\tau_2(\bX^\infty,Y^\infty)=t\}\\
&=\left(1+O\left(l'^{\frac{3}{2}}\exp\left(-\frac{l'C_1}{2}\right)\right)\right)\lim_{t\to\infty}\Pr_{P_{\bX^\infty,Y^\infty}^{q^*,\Gamma^{-1}(1)}}\{\tau_2(\bX^\infty,Y^\infty)<t\}\\
&=\left(1+O\left(l'^{\frac{3}{2}}\exp\left(-\frac{l'C_1}{2}\right)\right)\right)\lim_{t\to\infty}\mathbb{E}_{P_{\bX^\infty,Y^\infty}^{q^*,\Gamma^{-1}(1)}}\big[\exp(-U_t)\bbo(\tau_1(\bX^\infty,Y^\infty)<t)\big]\label{applycof}\\
&=\left(1+O\left(l'^{\frac{3}{2}}\exp\left(-\frac{l'C_1}{2}\right)\right)\right)\mathbb{E}_{P_{\bX^\infty,Y^\infty}^{q^*,\Gamma^{-1}(1)}}\big[\exp(-U_{\tau_1(\bX^\infty,Y^\infty)})\bbo(\tau_1(\bX^\infty,Y^\infty)<\infty)\big]\label{followfblwfb}\\
&\leq \left(1+O\left(l'^{\frac{3}{2}}\exp\left(-\frac{l'C_1}{2}\right)\right)\right)\exp(-\lambda)\label{def:tau1},
\end{align}
where \eqref{applycof} follows from the change-of-measure technique, \eqref{def:tau1} follows from the definition of $\tau_1(\bX^\infty,Y^\infty)$ in \eqref{def:taum} and \eqref{followfblwfb} follows similarly as \cite[Eq. (113) to Eq. (117)]{polyanskiy2011feedback} \blue{and the details are as follows:
\begin{align}
\nn&\lim_{t\to\infty}\mathbb{E}_{P_{\bX^\infty,Y^\infty}^{\mathrm{md},1}}\big[\exp(-U_t)\bbo(\tau_1(\bX^\infty,Y^\infty)<t)\big]\\*
&=\lim_{t\to\infty}\mathbb{E}_{P_{\bX^\infty,Y^\infty}^{\mathrm{md},1}}\big[\exp(-U_t)\bbo(\tau_t(\bX^\infty,Y^\infty)<t)\big]\label{def:taut}\\
&=\lim_{t\to\infty}\mathbb{E}_{P_{\bX^\infty,Y^\infty}^{\mathrm{md},1}}\big[\exp(-U_{\tau_t(\bX^\infty,Y^\infty)})\bbo(\tau_t(\bX^\infty,Y^\infty)<t)\big]\label{useost4mar}\\
&=\mathbb{E}_{P_{\bX^\infty,Y^\infty}^{\mathrm{md},1}}\big[\lim_{t\to\infty}\exp(-U_{\tau_t(\bX^\infty,Y^\infty)})\bbo(\tau_t(\bX^\infty,Y^\infty)<t)\big]\\
&=\mathbb{E}_{P_{\bX^\infty,Y^\infty}^{\mathrm{md},1}}\big[\exp(-U_{\tau_1(\bX^\infty,Y^\infty)})\bbo(\tau_1(\bX^\infty,Y^\infty)<\infty)\big]
\end{align}
where in \eqref{def:taut}, we define $\tau_t(\bX^\infty,Y^\infty):=\min\{\tau_1(\bX^\infty,Y^\infty),t\}$, \eqref{useost4mar} follows by applying the optional stopping theorem~\cite[Theorem 10.10]{williams1991probability} to the martingale $\exp(-U_t)$ and the stopping time $\tau_t(\bX^\infty,Y^\infty)$.
}

Therefore, we have for $l'$ sufficient large,
\begin{align}
\nn&(M^d-1)\Pr_{P_{\bX^\infty,Y^\infty}^1}\{\tau_1(\bX^\infty,Y^\infty)\geq \tau_2(\bX^\infty,Y^\infty)\}\\*
&\leq \left(1+O\left(l'^{\frac{3}{2}}\exp\left(-\frac{l'C_1}{2}\right)\right)\right)M^d\exp(-\lambda)\\
&=\left(1+O\left(l'^{\frac{3}{2}}\exp\left(-\frac{l'C_1}{2}\right)\right)\right)\frac{1}{l'}\label{usechooseMhehe},
\end{align}
where \eqref{usechooseMhehe} follows from the choice of $M$ in \eqref{herechooseMhehe}.

Recall the definition of the ``capacity'' $C$ of measurement-dependent channels in \eqref{def:capacity}. Using the definition of $C_1$ in \eqref{def:C1}, we have that
\begin{align}
C_1
&=\mathbb{E}_{\barP_{XY}}[\imath_{q^*}(X;Y)]\\
&=\mathbb{E}_{P_{XY}}\bigg[\frac{\barP_{XY}(X,Y)}{P_{XY}(X,Y)}\imath_{q^*}(X;Y)\bigg]\\
&\leq \exp(\eta c(p))\mathbb{E}_{P_{XY}}[\imath_{q^*}(X;Y)]+2\exp(-2M^d\eta^2)\label{changeofmeasureagain2}\\
&=\exp(\eta c(p))C+2\exp(-2M^d\eta^2).
\end{align}
where \eqref{changeofmeasureagain2} follows from the change-of-measure technique and the result in \eqref{fromassumption}. Given the choice of $M$ and $\eta$, we have
\begin{align}
C_1=C+O(l'\exp(-l')).
\end{align}

Thus, till now, we have constructed an $(l',d,\frac{1}{l'},\frac{1}{M})$-adaptive query procedure for sufficiently large $l'$. For any $\varepsilon\in[0,1)$, consider the following query procedure: with probability $\frac{l'\varepsilon-1}{l'-1}$, we do not pose any query and with the remaining probability, we use the above-constructed $(l',d,\frac{1}{l'},\frac{1}{M})$-adaptive query procedure. For $l'$ sufficiently large, it is easy to verify that the combined adaptive query procedure is an $(l,d,\varepsilon,\delta)$-adaptive query procedure
where
\begin{align}
l
&=\left(1-\frac{\varepsilon l'-1}{l'-1}\right)l'=\frac{l'^2(1-\varepsilon)}{l'-1}=(1-\varepsilon)l'+(1-\varepsilon)(1-\frac{1}{l'-1})\approx (1-\varepsilon)l',\\
-d\log \delta
&=l'C_1-a_0-\log l'=l' C+O(\log l')=\frac{Cl}{1-\varepsilon}+O(\log l).
\end{align}

\subsection{Comparison the Performances of Adaptive Querying for Measurement-Dependent and Measurement-Independent Channels}
\label{supp:adap}

To compare the performances of optimal adaptive query procedures under measurement-dependent and measurement-independent channels, \blue{we plot in Figure \ref{com_fbl_adap} the second-order asymptotic approximation to the number of bits (in the binary expansion of the target random variable $S$) extracted} after $n$ queries, i.e., $-\log_2 \delta_\rma^*(n,d,\varepsilon)$ and $-\log_2 \delta^*_{\rm{a},\rm{mi}}(n,d,\varepsilon)$ for $\varepsilon=0.001$ (the $O(\log n)$ term is ignored). Note that for measurement-dependent noise channel, we plot only a lower bound to the second-order approximation of $-\log_2\delta_\rma^*(n,d,\varepsilon)$. The remarks for non-adaptive querying are still valid for adaptive querying.

\begin{figure}[tb]
\centering
\begin{tabular}{cc}
\hspace{-.25in} \includegraphics[width=.5\columnwidth]{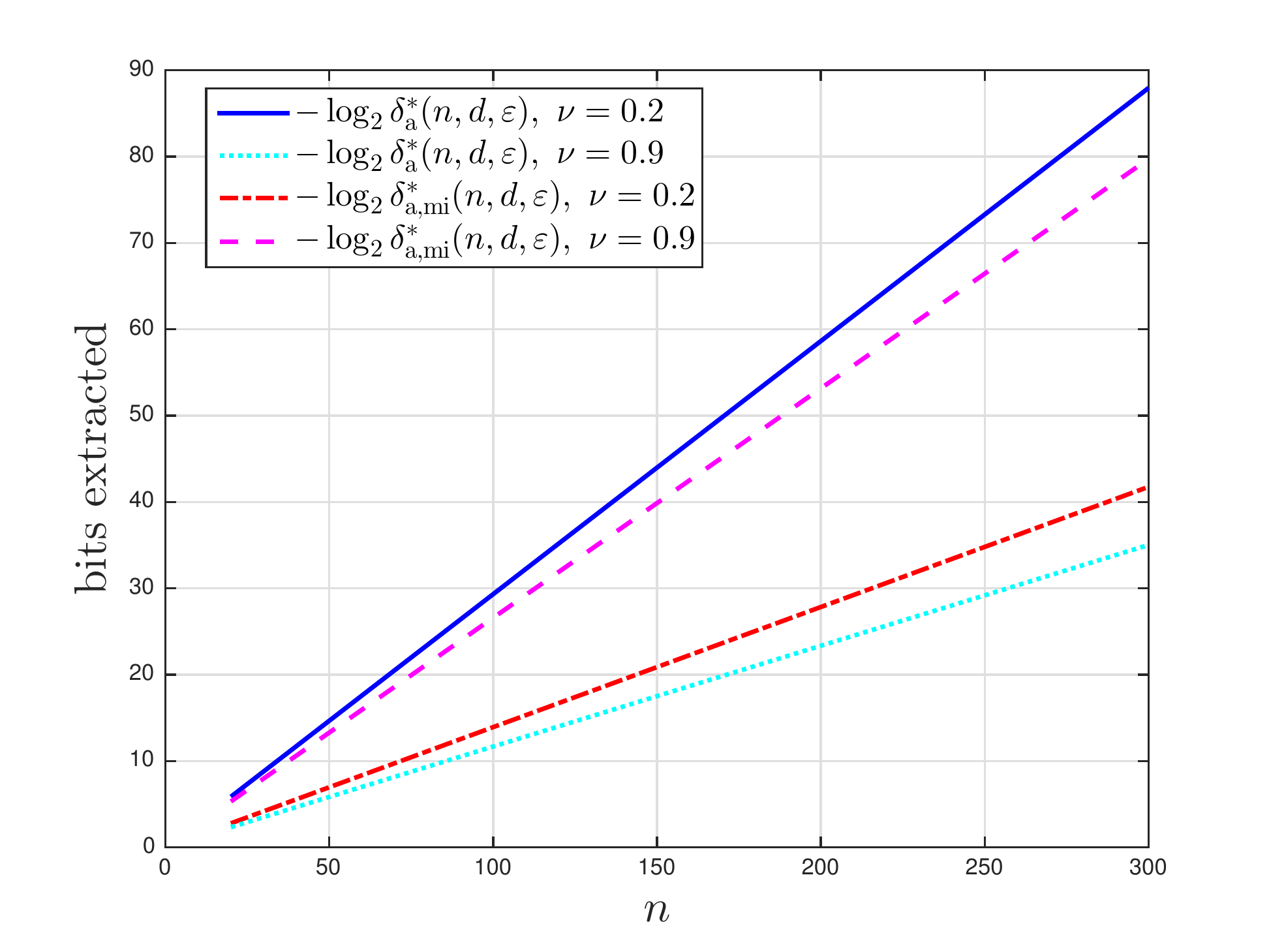}& \hspace{-.4in} \includegraphics[width=.5\columnwidth]{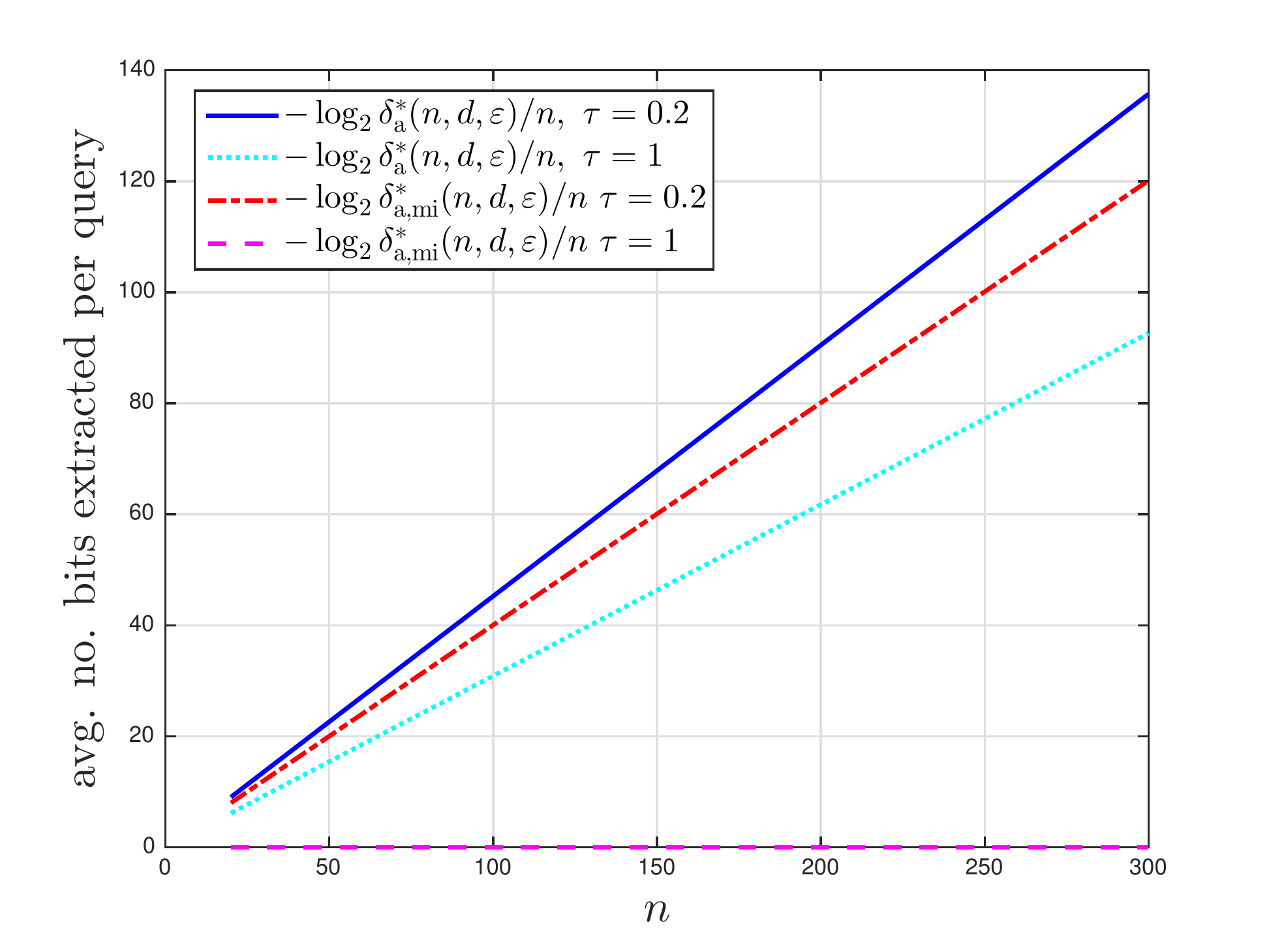}\\
\hspace{-.25in} {(a) BSC with parameter $\nu$} & \hspace{-.4in}  { (b) BEC with parameter  $\tau$}\\
\hspace{-.25in} \includegraphics[width=.5\columnwidth]{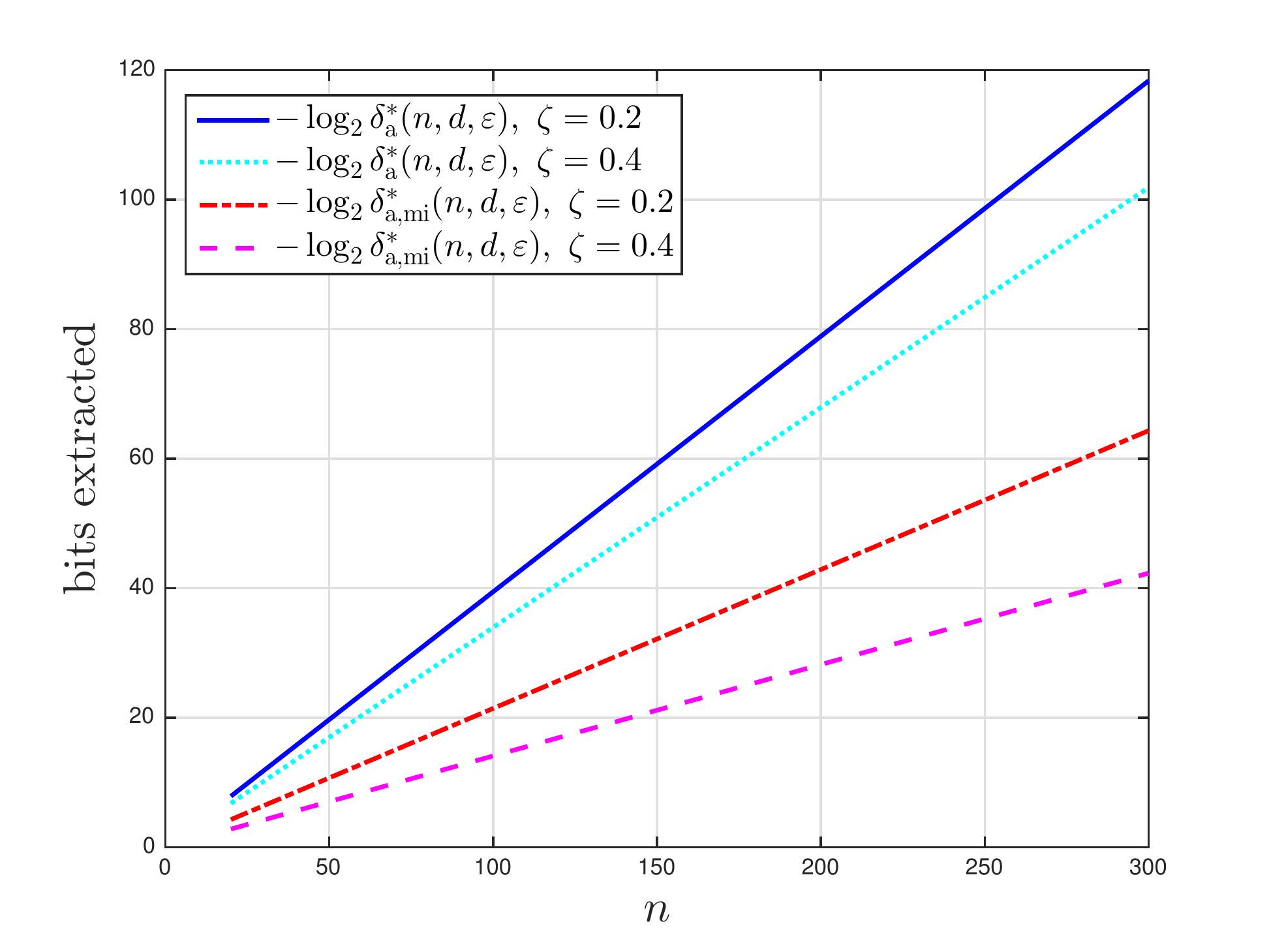}\\
\hspace{-.25in} {(c) Binary Z-channel with parameter $\zeta$}
\end{tabular}
\caption{\blue{The second-order asymptotic approximation to the number of bits extracted} by optimal adaptive query procedures for both measurement-dependent ($-\log_2\delta_\rma^*(n,d,\varepsilon)$) and measurement-independent ($-\log_2\delta_{\rma,\rm{mi}}^*(n,d,\varepsilon)$) versions of three channels where we neglect the $O(\log l)$ term. Note that for measurement-dependent channels, we plot only a lower bound (asserted in Theorem \ref{second:fbl:adaptive}) to \blue{the second-order approximation of $-\log_2\delta_\rma^*(n,d,\varepsilon)$}. In all plots, we consider the case of $d=2$ and $\varepsilon=0.001$.}
\label{com_fbl_adap}
\end{figure}

\section*{Acknowledgments}
The authors acknowledge anonymous reviewers for helpful comments and suggestions.

\bibliographystyle{IEEEtran}
\bibliography{IEEEfull_lin}

\end{document}